\newtheorem{theorem}{Theorem}
\newtheorem{assumption}{Assumption}
\newtheorem{lemma}{Lemma}
\newtheorem{example}{Example}
\newtheorem{corollary}{Corollary}
\newcommand{\pii}{\varsigma}
\DeclareMathOperator*{\argmax}{argmax}
\DeclareMathOperator*{\argmin}{argmin}
\newcommand{\E}{\mathbb{E}}
\title{Data-Driven Policy Learning for Continuous Treatments\thanks{Alphabetical ordering of authors: all authors contributed equally to this work.}}
\author{Chunrong Ai\thanks{School of Management and Economics, The Chinese University of Hong Kong, Shenzhen. Email: \url{chunrongai@cuhk.edu.cn}.} \quad Yue Fang\thanks{School of Management and Economics, The Chinese University of Hong Kong, Shenzhen. Email: \url{fangyue@cuhk.edu.cn}.} \quad Haitian Xie\thanks{Guanghua School of Management, Peking University. Email: \url{xht@gsm.pku.edu.cn}.}}
\date{\today}
\begin{document}

\maketitle

\begin{abstract}
    \onehalfspacing
   
     This paper studies policy learning for continuous treatments from observational data. Continuous treatments present more significant challenges than discrete ones because population welfare may need nonparametric estimation, and policy space may be infinite-dimensional and may satisfy shape restrictions. We propose to approximate the policy space with a sequence of finite-dimensional spaces and, for any given policy, obtain the empirical welfare by applying the kernel method. We consider two cases: known and unknown propensity scores. In the latter case, we allow for machine learning of the propensity score and modify the empirical welfare to account for the effect of machine learning. The learned policy maximizes the empirical welfare or the modified empirical welfare over the approximating space. In both cases, we modify the penalty algorithm proposed in  \cite{mbakop2021model} to data-automate the tuning parameters (i.e., bandwidth and dimension of the approximating space) and establish an oracle inequality for the welfare regret. 

    \bigskip
\noindent \textbf{Keywords:} Double Debias, Oracle Inequalities, Sieve Methods, Statistical Learning, Welfare Maximization.
    
\end{abstract}

\newpage

\section{Introduction}

Economists are increasingly interested in learning optimal policy from experimental and observational data. The optimal policy maximizes the population welfare over a (possibly restricted) policy space, where a policy maps individual characteristics into a policy treatment. Computing the optimal policy may encounter two challenges. The population welfare may be unknown to the policymaker, and the other is that the policy space may be infinite-dimensional and complex.  One general approach is to apply various methods, including the general methodology proposed by \cite{ainon}, to estimate the population welfare from observational data to obtain empirical welfare and approximate the complex and infinite-dimensional policy space with a sequence of finite spaces \citep[e.g.,][]{ai2003efficient}. The learned policy then maximizes the empirical welfare over the approximating space.  This general approach may introduce tuning parameters in estimating the population welfare (e.g., bandwidth in kernel estimation of the population welfare) and the approximation (e.g., the dimension of the approximating space).  The learned policy requires careful calibration of the tuning parameters to achieve an oracle inequality of welfare regret.  

The existing literature on policy learning from observational data has followed the general approach above, but focused mainly on binary treatments. The binary treatment setting has two advantages over general settings. First, the empirical welfare is a simple sample average that is unbiased and free of tuning parameters. Second, the optimal policy may have an analytical expression, thereby does not need approximation. For example, when the policy space is unrestricted, the optimal policy is an indicator function of the conditional average treatment effect (hereafter, CATE), which can be machine-learned from observational data \citep{manski2004statistical,manski2007admissible,manski2007minimax,stoye2009minimax,stoye2012minimax,tetenov2012statistical,bhattacharya2012inferring}. Under the condition that the machine-learned CATE converges to the truth fast, these studies established a sharp upper bound of the welfare regret. However, when the policy space is restricted, the optimal policy generally does not have an analytical expression, even in the binary setting.  Instead of approximating the policy space, \cite{kitagawa2018should,kitagawa2021equality} parameterized it as a finite, fixed-dimensional space. \cite{athey2021policy,zhou2023offline} also parameterized it, but allowed the dimension to grow with the sample size. None of those studies has any tuning parameters. Under sufficient conditions, they all established a sharp (i.e., minimax-optimal rate) bound of the welfare regret. \cite{mbakop2021model}, on the other hand, did not parameterize the policy space and used approximations. The approximation introduces one tuning parameter: the dimension of the approximating space. They suggested a penalized algorithm to data-automate the dimension. However, due to the approximation error, they could not achieve the same rate as in \cite{kitagawa2018should,athey2021policy}, obtaining instead an oracle inequality of the welfare regret that balances approximation and estimation errors.\footnote{\citet{fang2025model} used doubly robust moment conditions for welfare estimation and obtained similar results for multivalued discrete treatments.}

In a real-world context, policymakers often use complex policy treatments.  For example, they set carbon tax rates or allocate pollution permits in environmental policy, determine the duration of training for various demographic groups in job training programs, adjust cash transfer amounts across different households in conditional cash transfer programs, and set different price levels for different customer groups in retailing. All these policy treatments are continuous.  Yet, learning the optimal continuous policy from observational data has received scant attention in the literature.\footnote{Several recent studies have examined continuous treatments in various causal frameworks, such as \citet{su2019non,callaway2021difference,xie2024nonlinear,ColangeloLee2025}, with a focus on estimating treatment effects or dose–response functions, which differs from the policy learning objective considered here.} The difficulty is in estimating the population welfare because there are few observations at each level of treatment. In principle, policymakers can use observations in the neighborhood of each treatment level to evaluate the policy. But doing so introduces the bias and another tuning parameter (e.g., the neighborhood's size). \cite{kallus2018policy} took this approach with kernel estimation but considered a finite, fixed-dimensional policy space. So, they have only one tuning parameter, the bandwidth. They did not data-automate the bandwidth and established only an upper bound of the welfare regret, not the oracle inequality.   

We are unaware of any work on learning the optimal continuous policy from observational data, with policy space approximated. This paper intends to fill the literature gap. Specifically, we generalize \cite{mbakop2021model} to a continuous treatment setting by approximating the policy space with finite, growing spaces and applying the kernel method to obtain the empirical welfare. We consider two cases: known and unknown propensity scores. We allow machine learning propensity scores and modify the empirical welfare to account for the machine learning effect. We then maximize the empirical welfare or modified empirical welfare over the approximating space to obtain the learned policy, which depends on the tuning parameters. We then modify the penalized procedure of \cite{mbakop2021model} and develop a data-automated algorithm for both tuning parameters. Despite the extra tuning parameter, we still establish an oracle inequality of the welfare regret in known and unknown propensity scores. Our oracle inequality in the known propensity score case is similar to that of \cite{mbakop2021model}. Still, it is sharper in the unknown propensity score case because we use the double debiasing approach.

The extension is nontrivial because the extra tuning parameter (i.e., bandwidth) complicates the data automation algorithm. The bandwidth and the approximating space dimension play different roles. While the choice of dimension directly impacts policy learning performance, bandwidth directly affects policy evaluation performance. The data automation algorithm must consider the individual effects to maximize the policy evaluation and learning performance separately and the interaction effects of the tuning parameters on the learned policy. Despite the complicated calibration procedure, it is worthwhile to automate the tuning parameters since they adapt to the underlying data-generating process without knowing the model's smoothness condition.

To illustrate the practical value of the proposed policy learning, we re-examine the policy of assigning individuals to job training programs of varying durations. We use the same data from the Job Training Partnership Act (JTPA) study as \cite{kitagawa2018should,mbakop2021model}. While they analyze the binary treatment (i.e., participation in job training), we examine the continuous treatment (i.e., training duration). \cite{flores2012estimating} noted that the effects of job training on future earnings may vary with the length of exposure to the training program. It is crucial to consider different training durations (as opposed to a binary participation decision) in policy design. Our findings confirm that the learned policy adapts to and reflects the varying training time duration, further highlighting the benefits of moving beyond binary participation frameworks to develop data-driven approaches to policy design.

We organize the remainder of the paper as follows. Section \ref{sec:welfare} sets up the model. Section \ref{sec:ipw} introduces the data-automation algorithm in the known propensity score setting and establishes the oracle inequalities. Section \ref{sec:dr} extends the analysis to the unknown propensity score setting. Section \ref{sec:sieve} discusses examples of policy space approximations. Section \ref{sec:applications} presents an empirical study. The proofs for theoretical results in the main text are collected in the Appendices.

\section{Setup and Notation} \label{sec:welfare}

\subsection{Population model}\label{sec:population_model}

The model consists of a continuous treatment $T$ with support $\mathcal{T} \subset \mathbb{R}$, a set of potential outcomes $\{Y(t)\}_{t\in\mathcal{T}}$, and a vector of covariates $X$ with support $\mathcal{X} \subset \mathbb{R}^{d_X}$. The researcher only observes the realized outcome $Y\equiv Y(T)$, not all potential outcomes. A policy $\pi$ maps the covariate space $\mathcal{X}$ to the treatment space $\mathcal{T}$.

We define the population welfare of a policy $\pi$ as the expected outcome under this policy:
\begin{align} \label{eqn:def-welfare}
      W(\pi) \equiv \E[Y(\pi(X))].
\end{align}
We aim to find the optimal policy within an infinite-dimensional (and possibly restricted) space $\Pi_\infty$. Let $W^*(\Pi_\infty) \equiv \sup_{\pi \in \Pi_\infty} W(\pi)$ denote the global optimal welfare. Following the literature \citep[e.g.,][]{manski2004statistical,kitagawa2018should,athey2021policy}, we assess the performance of a policy $\pi$ by the welfare regret, $W^*(\Pi_\infty) - W(\pi)$, the difference between the global optimal welfare and the welfare achieved by $\pi$.  

As \cite{mbakop2021model} explains, economic theory or intuition often imposes nonparametric restrictions on policy classes, such as monotonicity, convexity, super-modularity, or separability. These restrictions restrict the form of the policies but still allow the policy class to remain infinite-dimensional. Another type of restriction is a parameterization of policy. Although parametric restriction reduces the policy space to finite-dimensional, they are generally ad hoc and seldom driven by economic theory. Below, we present some economically meaningful examples. 

\begin{example}\label{eg:single}
Consider the set of policies with a single-index representation:
\begin{align*}
   \Pi_\infty = \{ \pi(x) = h(x'\beta): h \text{ continuously differentiable}, \beta \in \mathbb{R}^{d_X} \}.
\end{align*}
The treatment assignment is based on a score (i.e., a linear transformation of individual covariates). Such single-score treatment rules are widely employed in firms' marketing strategies, as noted in \cite{hartmann2011identifying}.
\end{example}

\begin{example} \label{eg:separable}
    Consider the following separable and monotone policy class
    \begin{align*}
        \Pi_\infty = \left\{ \pi(x) = \textstyle \sum_{p=1}^{d_X} h_p(x_p): h_p \text{ decreasing, } 1 \leq p \leq d_X \right\},
    \end{align*}
    where $x_p$ denotes the $p$th coordinate of the covariates vector. The monotonicity of the function \( h_p \) may arise from fairness concerns, specifically that individuals endowed with higher values of \( X \) should not receive a higher level of treatment than those with lower \( X \). The separability structure ensures that the decrease in treatment assignment resulting from an increase in one covariate \( X_p \) does not depend on other covariates. This policy class can be regarded as the continuous-treatment analog used in the empirical study of \cite{mbakop2021model} and is implemented in our empirical study.
\end{example}

Researchers often consider continuous piecewise-linear policies defined by a set of thresholds. These rules are simple to communicate and implement. Block-rate taxes in environmental economics \citep{zhou2019would} and California’s two-tier price collar for carbon emissions are prominent examples, while analogous ``step-up'' structures appear in loyalty programmes: multi-threshold linear bonuses \citep{fang2018loyalty} and tiered status schemes in retail \citep{nishio2022joint} reward early engagement generously before tapering once customers are ``locked in''. Implementing such a policy requires choosing the number and placement of thresholds and the slope within each segment. We formalize this policy class in the next example.

\begin{example}\label{eg:piecewise_linear}
Consider the following piecewise-linear policy class.
Let $x\in\mathbb{R}$ be a one-dimensional score, possibly obtained by transforming a multivariate covariate vector.
For an integer $k\ge 0$, choose $k$ thresholds (location of kinks): $-\infty=s_{0}<s_{1}<\dots<s_{k}<s_{k+1}=\infty,$ and segment-specific intercepts $\alpha_{j}$ and slopes $\beta_{j}$ for
$j=0,\dots ,k$. The resulting policy $\pi$ is
$$
\pi(x)=\sum_{j=0}^{k}(\alpha_{j}+\beta_{j}x)\,\mathbf 1\{s_{j}\le x < s_{j+1}\},
$$
subject to the continuity constraints $\alpha_{j}+\beta_{j}s_{j+1}=\alpha_{j+1}+\beta_{j+1}s_{j+1},j=0,\dots ,k-1.$
Let $\Pi_{k}$ be the collection of all such policies with exactly $k+1$ segments.
Given an upper bound $K\in\mathbb{N}_{+}\cup\{\infty\}$, define the global policy class $\Pi_{K}=\bigcup_{k=0}^{K}\Pi_{k}.$
When $K=\infty$, the class allows an unrestricted number of segments; when $K<\infty$, the number of segments is capped at $K+1$ to reflect practical limits on simplicity and administrative capacity.

\end{example}

Our theory accommodates all types of covariates—continuous, discrete, or mixed. Discrete covariates are theoretically more straightforward, and when all covariates are discrete, the set of all measurable policies is a finite-dimensional space.

\subsection{Empirical welfare } \label{sec:empirical-welfare}

We observe an independent and identically distributed (iid) sample $S_n \equiv \{(Y_i,T_i,X_i): 1 \leq i \leq n\}$ drawn from the distribution of the random variables $(Y,T,X)$. 

To estimate $W$, let us first recall that, with a discrete treatment, the IPW formula can determine the welfare (under the unconfoundedness condition; see Assumption \ref{ass:unconfoundedness} below) as\footnote{Because the following formula applies only when the treatment is discrete, we do not denote it by $W$.}
\begin{align*} 
    \mathbb{E}\left[ \mathbf{1}\{T = \pi(X)\} \frac{Y}{f(T|X)} \right], 
\end{align*}
where $f(t|x)$ denotes the generalized propensity score, i.e., the conditional density of the treatment given the covariates.
However, this formula is no longer valid in the continuous setting because the indicator $\mathbf{1}\{T=\pi(X)\}$ equals zero almost surely when $T$ is a continuous variable. We propose a kernel weighting, 
\begin{align*} 
    W_h(\pi) = \mathbb{E}\left[ \frac{1}{h} K\left( \frac{T-\pi(X)}{h} \right) \frac{Y}{f(T|X)} \right],
\end{align*}
with $K$ as a kernel function and $h$ as the bandwidth. This leads to the following empirical welfare estimator:
\begin{align*} 
    \hat{W}_h(\pi) = \frac{1}{nh}\sum_{i=1}^n  K\left( \frac{T_i-\pi(X_i)}{h} \right) \frac{Y_i}{f(T_i|X_i)}.
\end{align*}

As directly optimizing the infinite-dimensional policy class $\Pi_\infty$ is not practically feasible, we shall use the sieves approximation to the global policy class \( \Pi_\infty \). This is achieved using a nested sequence of low-complexity policy classes $\{\Pi_k : k \geq 1\}$, where \( \Pi_k \subset \Pi_{k+1} \subset \cdots \subset \Pi_\infty \). We call each \( \Pi_k \) a sieve policy class to distinguish it from the global class \( \Pi_\infty \). Each $\Pi_k$ has finite complexity, measured by the Vapnik–Chervonenkis (VC) dimension\footnote{Let $\mathscr{S}$ be a collection of subsets of a set $\Upsilon$. The VC dimension of $\mathscr{S}$ is defined as the largest cardinality of a subset $\upsilon \subset \Upsilon$ that can be shattered by $\mathscr{C}$. The collection $\mathscr{S}$ is said to shatter $\upsilon$ if for each $\tilde{\upsilon} \subset \upsilon$, there exists a set $\mathcal{S} \in \mathscr{S}$ such that $\tilde{\upsilon} = \upsilon \cap \mathcal{S}$. Notice that this definition of VC dimension is the same as in \cite{van2009note,kitagawa2018should,wainwright_2019,mbakop2021model}, but is smaller by one than the VC dimension defined in \cite{wellner1996, dudley_1999}.} of the class of subgraphs $\{ \{(x,t):t<\pi(x) \} : \pi\in\Pi \}$. Throughout the paper, we use $\operatorname{VC}(\Pi)$ to denote the VC-subgraph dimension for a generic policy class $\Pi$.

Each sieve optimal policy estimator $\hat{\pi}_{h,k}$ is obtained by maximizing the estimated welfare $\hat{W}_h$ within each sieve policy class $\Pi_k$ as defined later in (\ref{eqn:sieve-EWM}).
We want to data-automate $(h,k)$ in a way that delivers a policy estimator $\hat{\pi}_{\hat{h},\hat{k}}$ with good performance. Below, we provide a heuristic discussion of the challenge.

The roles of $h$ and $k$ are asymmetric: $h$ is involved in the evaluation (welfare estimation) stage, whereas $k$ is used during the policy design stage.
Examine the following decomposition of the welfare regret:
\begin{align} \label{eqn:welfare-decomposition}
    W^*(\Pi_\infty) - W(\hat{\pi}_{h,k}) & = W^*(\Pi_\infty) - W^*(\Pi_k) + W^*(\Pi_k) - W(\hat{\pi}_{h,k}) \nonumber \\
    & \leq W^*(\Pi_\infty) - W^*(\Pi_k) + 2 \sup_{\pi \in \Pi_k}|\hat{W}_h(\pi) - W(\pi)| \nonumber \\
    & \leq \underbrace{W^*(\Pi_\infty) - W^*(\Pi_k)}_{\text{welfare deficiency}} + \underbrace{2 \sup_{\pi \in \Pi_k}|\hat{W}_h(\pi) - W_h(\pi)|}_{\text{variance}} + \underbrace{2 \sup_{\pi \in \Pi_k}|W_h(\pi) - W(\pi)|}_{\text{kernel bias}},
\end{align}
where $W^*(\Pi_k) \equiv \sup_{\pi \in \Pi_k} W(\pi)$ denotes the optimal welfare over $\Pi_k$, and the second inequality is (2.2) in \cite{kitagawa2018should}. 

We provide a heuristic argument for how the two tuning parameters affect the three terms on the right-hand side of (\ref{eqn:welfare-decomposition}). The welfare-deficiency term captures the loss (compared to global optimal welfare) from restricting to the sieve class $\Pi_k$; it depends only on $k$ and decreases as $k$ increases. The variance term, by empirical-process results in \cite{kitagawa2018should}, is of order $\sqrt{\operatorname{VC}(\Pi_k)/(n h)}$. The kernel bias, by standard analysis, depends only on the bandwidth and is of order $h^r$, where $r$ denotes the smoothness of the (conditional) dose-response functions defined as 
 \begin{align} \label{eqn:m-def}
        m(t,x) \equiv \mathbb{E}[Y(t)|X=x],
\end{align}
which describes the (conditional) mean potential outcome under each treatment level. 

To summarize, the right-hand side of (\ref{eqn:welfare-decomposition}) is proportional to the following order-of-magnitude:
\begin{align*}
    \text{welfare deficiency}(k) + \sqrt{\frac{\operatorname{VC}(\Pi_k)}{nh}} + h^r.
\end{align*}
In terms of minimizing the above sum of three terms, the optimal $(h,k)$ should be chosen such that the three terms are of the same order:
\begin{align*}
    \text{welfare deficiency}(k) \asymp \sqrt{\frac{\operatorname{VC}(\Pi_k)}{nh}} \asymp h^r,
\end{align*}
which yields $h \asymp \left( \frac{\operatorname{VC}(\Pi_k)}{n}\right)^{\frac{1}{2r+1}}$. Hence, the optimal choice of $h$ is dependent on $k$, and vice versa. Intuitively, enlarging the policy class raises the variance term $\sqrt{\frac{\operatorname{VC}(\Pi_k)}{nh}}$,\footnote{The phenomenon that a more complex policy class leads to higher variance is the so-called overfitting issue. When $\Pi_k$ is too rich relative to the sample size, the learned policy can fit sampling noise rather than genuine welfare signals, inflating the estimation error.} requiring a larger bandwidth to mitigate this increase.

This interplay illustrates a central difficulty in continuous-treatment policy learning: the tuning parameters associated with welfare evaluation ($h$) and with policy design ($k$) must be selected jointly. A bandwidth chosen in isolation—for example, by standard mean-squared-error bandwidth rules for welfare estimation—ignores its impact on the downstream optimization problem and is therefore generally sub-optimal. 


If one instead fixes $h$ and maximizes the smoothed welfare $W_h$, the bias term vanishes and the model-selection method of \cite{mbakop2021model} applies directly. Our goal, however, is to optimize the actual welfare $W$, which requires balancing variance and kernel bias by adapting $h$ in tandem with the complexity parameter $k$.


\section{Learning with Known Propensity Score: IPW} \label{sec:ipw}
This section examines the case in which the propensity score is known to the econometrician. Section \ref{sec:dr} addresses the case where the propensity score is unknown.

\subsection{Implementation} \label{sec:implementation}

The previous welfare decomposition suggests the following data-automated algorithm. Let $\mathcal{H} \subset (0,1)$ be a countable grid of bandwidth. First, we estimate the optimal policy for each pair $(h,k)$ as
\begin{align} \label{eqn:sieve-EWM}
    \hat{\pi}_{h,k} \equiv \argmax_{\pi\in\Pi_k} \hat{W}_h(\pi).
\end{align}
Then, we construct the penalized welfare as
\begin{align} \label{eqn:penalized-welfare-def}
    \hat{Q}_{h,k} \equiv \hat{W}_{h}(\hat{\pi}_{h,k}) - \underbrace{(\hat{R}_{h,k} + \tau(h,k,n) + B(h))}_{\text{penalty}},
\end{align}
where the three terms in the penalty are defined and explained in detail below.
The selector for $h$ and $k$ is the maximizer of $\hat{Q}_{h,k}$:\footnote{In practice, when the exact $\operatorname{VC}(\Pi_k)$ is unknown, one can replace it with an upper bound $V_k \geq \operatorname{VC}(\Pi_k)$. This substitution slightly alters the form of the oracle inequalities in Theorems \ref{thm:fully} and \ref{thm:dr}, as the infimum is taken over $\{V_k \leq nh^2\}$, which may cover a smaller range of $k$ compared to $\{\operatorname{VC}(\Pi_k) \leq nh^2\}$. Nevertheless, as we demonstrate in Section \ref{sec:sieve}, for many widely used policy classes—including highly complex ones such as neural networks—there exist well-established results providing tight or nearly tight upper bounds on VC dimension. We thank an anonymous referee for highlighting this point.} 

\begin{align} \label{eqn:h-k-hat}
    (\hat{h},\hat{k}) \equiv \argmax_{\substack{h\in\mathcal{H} \\
    k: \operatorname{VC}(\Pi_k) \leq nh^2}} \hat{Q}_{h,k}.
\end{align}
The final policy estimator is $\hat{\pi} \equiv \hat{\pi}_{\hat{h},\hat{k}}$. 

Before introducing the penalty terms, we briefly comment on the reason behind the restriction $\operatorname{VC}(\Pi_k) \leq nh^2$ imposed in the search over $(h,k)$ in (\ref{eqn:h-k-hat}). First, scanning every $k\in\mathbb{N}_{+}$ is computationally unrealistic in practice; second, the continuous-treatment setting presents a technical hurdle discussed in Section \ref{sec:discussion}. However, this restriction is not substantive in large samples. As discussed in (\ref{eqn:unrestricted-h-k}), the optimal variance–bias trade-off itself dictates the relationship $\operatorname{VC}(\Pi_k)\asymp nh^{2r+1}, r \geq 1$, which lies safely within the region defined by $\operatorname{VC}(\Pi_k)\le nh^{2}$. Hence, the restriction is asymptotically non-binding. Appendix \ref{sec:k-infty} outlines a full search procedure over $k\in\mathbb{N}_{+}$ that provides theoretical regret guarantees.


The \emph{first penalty term} $\hat{R}_{h,k}$ is set to be
\begin{align*}
    \hat{R}_{h,k} \equiv  \mathbb{E}\left[\sup_{\pi \in \Pi_k} \frac{2}{nh} \sum_{i=1}^n \text{Rad}_i K\left(\frac{T_i-\pi(X_i)}{h}\right)\frac{Y_i}{f(T_i| X_i)}\Bigg | S_n\right],
\end{align*}
where $\{\text{Rad}_i: 1 \leq i \leq n\}$ represents a sequence of computer-generated i.i.d. Rademacher variables, independent of the sample $S_n \equiv \{(Y_i,T_i,X_i): 1 \leq i \leq n\}$. The expectation is taken over the Rademacher variables and is computed through simulations. The term $\hat{R}_{h,k}$, known as the (empirical) Rademacher complexity, is commonly used to penalize overfitting in policy estimators \citep{bartlett2002model,mbakop2021model}. Another approach to constructing the overfitting penalty involves using the holdout method described in \cite{mbakop2021model}, which we present in Section \ref{sec:holdout}.

The \emph{second penalty term} $\tau(h,k,n)$ is a user-specified technical term that ensures the penalty grows sufficiently fast with $h$ and $k$. Requirements for $\tau$ and specific choices are provided in the theorems. In general, $\tau$ is of smaller order than the leading term in the oracle inequality. 

As noted earlier, fixing \( h \) and focusing on maximizing \( W_h \) allows us to implement the penalty as \( \hat{R}_{h,k} + \tau \) following the method in \cite{mbakop2021model}. However, when the objective is \( W \), the penalty \( \hat{R}_{h,k} + \tau \) does not penalize against large bandwidth, causing uncontrolled kernel bias. Therefore, we introduce a third penalty term to offset this bias in welfare estimation.

The \emph{third penalty term} $B(h)$ is set to be
\begin{align*}
    B(h) \equiv& B(h;r,V_{\mu}) \equiv \frac{1}{2 \pii} \int |1- K^{\operatorname{FT}}(h \xi)|  V_{\mu} |\xi|^{-(r+1)} d\xi,
\end{align*}
where $\mu$ is the conditional expectation function of $Y$ given $T$:
\begin{align} \label{eqn:mu-def}
    \mu(t) \equiv \mathbb{E}[Y|T=t],
\end{align}
and $V_{\mu}$ is the total variation of $\mu$, and $r$ is the order of smoothness of $\mu$.\footnote{Total variation is defined as $V_{\mu} \equiv \sup_{m \in \mathbb{N}} \sup_{t_0,\cdots,t_m \in \mathcal{T}} \sum_{j=0}^m |\mu(t_j) - \mu(t_{j-1})|$.} $K^{\operatorname{FT}}$ is the Fourier transform of $K$, i.e., $K^{\operatorname{FT}}(\xi) \equiv \int K(t) e^{\mathbf{i}t\xi}dt$, with $\mathbf{i} \equiv \sqrt{-1}$. To avoid confusion with policy $\pi$, we use $\pii$ to denote the mathematical constant, the ratio of a circle's circumference to its diameter, approximately 3.14159.

This bias bound builds on the work of \cite{schennach2020bias}, establishing a tight upper bound for the nonparametric bias in kernel estimation. In our case, as shown in Lemma \ref{lm:bias-bound}, $B(h) \asymp h^r$ provides an upper bound for $|W_{h}(\pi) - W(\pi)|$ for any $\pi$. The quantities $r$ and $V_{\mu}$, and subsequently $B(h)$, can be estimated using the following double-debiased modification of the procedure described in \cite{schennach2020bias}.

Denote $\mu^{\operatorname{FT}}(\xi) \equiv \int \mu(t)e^{\mathbf{i}\xi t}dt$ as the Fourier transformation of $\mu$. This estimand admits the following double-debiased estimator:\footnote{The double-debiased property of this estimator is demonstrated in the proof in the appendix.}
\begin{align} \label{eqn:mu_hat-FT}
    \hat{\mu}^{\operatorname{FT}}(\xi) \equiv \int \hat{\mu}(t) e^{\mathbf{i}t\xi} dt + \frac{1}{n} \sum_{i=1}^n \frac{Y_i - \hat{\mu}(T_i)}{\hat{f}_T(T_i)}e^{\mathbf{i}T_i\xi},
\end{align}
where $\hat{\mu}$ and $\hat{f}_T$ are estimators of the corresponding nuisance functions.
As shown in \cite{schennach2020bias}, applying Fourier transformation converts the order of smoothness $r$ into the exponent in the frequency domain, yielding the bound $|\mu^{\operatorname{FT}}(\xi)| \leq V_{\mu} |\xi|^{-(r+1)}$. Taking the logarithm of both sides results in a linear expression: $\log |\mu^{\operatorname{FT}}(\xi)| \leq \log V_{\mu} - (r+1) \log|\xi|$. Thus, we can estimate $V_{\mu}$ and $r$ by finding the tightest linear upper bound on $\log |\hat{\mu}^{\operatorname{FT}}(\xi)|$ as a function of $\log|\xi|$, described by the following minimization problem:
\begin{align} \label{eqn:r-hat}
    (\hat{V}_{\mu},\hat{r}) & \equiv \argmin_{(A,r) \in \mathcal{A}} \int_{0}^{\log^2 n} (\log A - (r+1) \lambda) d\lambda, \\
    \mathcal{A} & \equiv \{(A,r): A \geq 0, r \in \mathbb{N}_+, \log |\hat{\mu}^{\operatorname{FT}}(\xi)| \leq \log A - (r+1) \log|\xi| \text{ for } 0 \leq \log|\xi| \leq \log^2 n\}. \nonumber
\end{align}
Plugging these estimates into the bias bound leads to the estimator $\hat{B}(h) \equiv B(h;\hat{r},\hat{V}_{\mu})$. 

This concludes our description of the procedure for data-automation of $(h,k)$. Its statistical properties are introduced next.

\subsection{Large sample properties}\label{sec:large}

The following assumptions are maintained regarding the sieve policy classes, the data-generating process, and the kernel function.

\begin{assumption}[Unconfoundedness] \label{ass:unconfoundedness}
    $T \perp \{Y(t):t \in \mathcal{T}\} | X$. 
\end{assumption}

\begin{assumption}[Welfare deficiency] \label{ass:deficiency}
    The welfare deficiency $W^*(\Pi_\infty) - W^*(\Pi_k) \rightarrow 0$, as $k \rightarrow \infty$. 
\end{assumption}


\begin{assumption}[Boundedness] \label{ass:bounded}
\ \begin{enumerate}[label = (\arabic*)]
    \item The treatment $T$ is compactly supported. A constant $\underline{f}>0$ exists such that $f \geq \underline{f}$ almost surely. 
    \item A constant $M>0$ exists, such that $|Y(t)| \leq M$ for all $t$. Consequently, $\lVert m \rVert_\infty \leq M$.
\end{enumerate}
\end{assumption}

\begin{assumption}[Kernel] \label{ass:infinite-kernel}
    The Fourier transform of $K$, $K^{\operatorname{FT}}$, satisfies that $K^{\operatorname{FT}}(\xi) = 1$ in a neighborhood of the origin, and $K^{\operatorname{FT}}(\xi)<1$ elsewhere. The kernel is symmetric and of bounded variation $\kappa_2 \equiv \int K(v)^2 dv < \infty$, and $\bar{\kappa} \equiv \sup_v K(v) < \infty.$
\end{assumption}

\begin{assumption}[Smoothness: $r$th order]\label{ass:smoothness-r}
The functions $m(\cdot,x), x \in \mathcal{X}$, and $\mu(\cdot)$, as defined in (\ref{eqn:m-def}) and (\ref{eqn:mu-def}), satisfy the following smoothness conditions:
    \begin{enumerate}[label = (\arabic*)]
        \item For some $r \in \mathbb{N}_+$, $\mu(\cdot)$ and $m(\cdot,x)$ are $r$ times differentiable with the $r$th derivative absolutely continuous except over a finite non-empty set of points.
        \item For all $x \in \mathcal{X}$, the total variation of $m(\cdot,x)$ does not exceed that of $\mu(\cdot)$, i.e., $\sup_{x \in \mathcal{X}} V_{m(\cdot,x)} \leq V_{\mu}$.
    \end{enumerate}
\end{assumption}

\begin{assumption}[Bias bound estimation] \label{ass:bias-est}
Assume that the estimators $\hat{\mu}$ and $\hat{f}_T$ in the estimation of $B(h)$ are constructed using cross-fitting with a finite number of folds and satisfy the following conditions:\footnote{The cross-fitting procedure here can be implemented similarly to the approach described in Section \ref{sec:dr}, and is omitted for brevity.}
   \begin{enumerate}[label = (\arabic*)] 
        \item $\hat{f}_T$ bounded away from zero, 
        \item $\mathbb{E}[\int (\hat{\mu}(t) - \mu(t))^2 f_T(t) dt], \mathbb{E}[\int (\hat{f}_T(t) - f_T(t))^2 f_T(t) dt] \leq n^{-\epsilon}$ for some $\epsilon>0$,
        \item $\lVert \hat{\mu} - \mu \rVert_\infty \lVert \hat{f}_T - f_T \rVert_\infty = o_{a.s.}(n^{-1/2})$.
    \end{enumerate}
\end{assumption}

Assumption \ref{ass:unconfoundedness} establishes the identification of the welfare function, under which we can write $m(t,x) = \mathbb{E}[Y|T=t,X=x]$.
Assumption \ref{ass:deficiency} requires that the sequence of sieve policy spaces approaches the global target eventually. 
In Assumption \ref{ass:bounded}, we assume that both the outcome and the inverse propensity are bounded. The discrete-treatment version of this assumption is considered in \cite{kitagawa2018should} as Assumption 2.1 and in \cite{mbakop2021model} as Assumption 3.1.\footnote{\cite{athey2021policy} do not require the outcome to be bounded (only requiring its distribution to exhibit sub-Gaussian tail). Still, they do maintain the requirement for the inverse propensity score to be bounded.}

Assumption \ref{ass:infinite-kernel} specifies that the kernel is of infinite order. 
See \cite{devroye1992note,politis1999multivariate} for examples and discussions. The use of an infinite-order kernel is not essential. Any sufficiently high-order kernel that accommodates the smoothness of the dose-response function would be effective. Assumption \ref{ass:smoothness-r} is adopted from \cite{schennach2020bias} and essentially states that the relationship from the treatment to outcome is $r$th-order smooth uniformly over the covariates. Assumption \ref{ass:bias-est} imposes standard conditions on the nuisance estimators used in constructing the bias bound, ensuring its consistency.

\begin{theorem}\label{thm:fully}
Let Assumptions \ref{ass:unconfoundedness} - \ref{ass:bias-est} hold.
The bandwidth grid $\mathcal{H}$ and the technical term $\tau$ satisfies that $\tau(h,k,n)\in(0,1)$, and for any constant $C>0$,
\begin{align} \label{eqn:tau-requirement}
\sum_{k=1}^\infty\sum_{h\in\mathcal{H}}\exp(-nh\tau(h,k,n)^2)/C)
\end{align}
is finite and stays bounded as $n\rightarrow\infty$.

\noindent (1) If $B(h)$ is known, and we set $h_{{min}} \equiv \inf \mathcal{H} \gtrsim n^{-1/(2r+1)}$,\footnote{For any two sequences $a_n$ and $b_n$, $a_n \gtrsim b_n$ means that there exists a constant $c >0$ such that $a_n \geq c b_n$.} then the following oracle inequality holds,
\begin{align*}
    & W^*(\Pi_\infty) - W(\hat{\pi}) \nonumber \\
    \leq &\inf_{\substack{h\in\mathcal{H} \\ k:\operatorname{VC}(\Pi_k)\leq nh^2}}\left(W^*(\Pi_\infty)-W^*(\Pi_k)+2(C_v+o(1))\sqrt{\frac{\operatorname{VC}(\Pi_k)}{nh}}+2B(h)+\tau(h,k,n)\right) \\
    & + O_p(n^{-r/(2r+1)}), \nonumber
\end{align*}
where $C_{v} \equiv cM\sqrt{\frac{\kappa_2}{\underline{f}}}$ (the subscript $v$ denotes variance), with $c$ being a universal constant that can be computed explicitly as detailed in the proof.

\noindent (2) If $B(h)$ is unknown, then in the construction of $\hat{Q}_{h,k}$ in (\ref{eqn:penalized-welfare-def}), we replace $B(h)$ by $(1+\gamma)\hat{B}(h)$ for an arbitrarily small $\gamma>0$ and set $h_{{min}} \gtrsim n^{-1/(2\hat{r}+1)}$, then the following oracle inequality holds,
\begin{align*}
    & W^*(\Pi_\infty) - W(\hat{\pi}) \nonumber \\
    \leq & \inf_{\substack{h\in\mathcal{H} \\ k:\operatorname{VC}(\Pi_k)\leq nh^2}} \Bigg(W^*(\Pi_\infty)-W^*(\Pi_k)+2(C_v+o(1))\sqrt{\frac{\operatorname{VC}(\Pi_k)}{nh}}+2(1+\gamma)B(h) +\tau(h,k,n)\Bigg) \\
    & + O_p(n^{-r/(2r+1)}). \nonumber
\end{align*}
\end{theorem}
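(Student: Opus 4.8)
The proof rests on three ingredients: the deterministic kernel-bias bound $\sup_{\pi}|W_h(\pi)-W(\pi)|\le B(h)\asymp h^r$ of Lemma~\ref{lm:bias-bound}; a uniform-over-$(h,k)$ control of the centered welfare process by the Rademacher penalty; and the optimality of $(\hat h,\hat k)$ for $\hat Q_{h,k}$. Write $\mathcal I_n\equiv\{(h,k):h\in\mathcal H,\ \operatorname{VC}(\Pi_k)\le nh^2\}$, let $\pi_k^{\ast}\in\Pi_k$ nearly maximize $W$ over $\Pi_k$, and let $\mathcal F_{h,k}\equiv\{(y,t,x)\mapsto h^{-1}K((t-\pi(x))/h)\,y/f(t\mid x):\pi\in\Pi_k\}$. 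The first and central step is to produce an event $\mathcal E_n$ with $\mathbb{P}(\mathcal E_n)\to 1$ on which, simultaneously for every $(h,k)\in\mathcal I_n$,
\begin{gather*}
\text{(a)}\quad \sup_{\pi\in\Pi_k}|\hat W_h(\pi)-W_h(\pi)|\le\hat R_{h,k}+\tau(h,k,n),\\
\text{(b)}\quad \hat R_{h,k}\le 2\big(C_v+o(1)\big)\sqrt{\tfrac{\operatorname{VC}(\Pi_k)}{nh}}+\tau(h,k,n),
\end{gather*}
and (c) $\ \sup_{(h,k)\in\mathcal I_n}\big(|\hat W_h(\pi_k^{\ast})-W_h(\pi_k^{\ast})|+\rho_{h,k}\big)=O_p(n^{-r/(2r+1)})$, where $\rho_{h,k}$ collects the lower-order concentration remainders. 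To get (a)--(b) I would note that $\mathcal F_{h,k}$ is a bounded-variation transform of a $\operatorname{VC}$-subgraph class (decompose $K$ into monotone pieces and compose with the subgraphs $\{t<\pi(x)\}$, as in \cite{kitagawa2018should,mbakop2021model}), with envelope of order $\bar\kappa M/(h\underline f)$ but $L^2$-radius of order $M\sqrt{\kappa_2/(\underline f h)}$; symmetrization gives $\mathbb{E}[\sup_{\Pi_k}|\hat W_h-W_h|]\le\mathbb{E}[\hat R_{h,k}]$ (the factor $2$ in the definition of $\hat R_{h,k}$ is what makes this possible), and a Dudley entropy bound gives $\mathbb{E}[\hat R_{h,k}]\le 2(C_v+o(1))\sqrt{\operatorname{VC}(\Pi_k)/(nh)}$ with $C_v=cM\sqrt{\kappa_2/\underline f}$; the ``$+\tau$'' slack and the uniformity over $\mathcal I_n$ follow from Talagrand's/Bousquet's concentration inequality applied to $\sup_{\Pi_k}|\hat W_h-W_h|$ and to $\hat R_{h,k}$, then a union bound, which is summable by \eqref{eqn:tau-requirement}. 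The restriction $\operatorname{VC}(\Pi_k)\le nh^2$ keeps $\mathbb{E}[\hat R_{h,k}]\lesssim\sqrt h\le 1$, so the variance term dominates the range term in Bousquet's bound. Bound (c) is a single-point Bernstein bound for each fixed $(h,\pi_k^{\ast})$, uniformized over the countable set $\mathcal I_n$, then crudely bounded via $1/\sqrt{nh}\le 1/\sqrt{nh_{\min}}\lesssim n^{-r/(2r+1)}$ — which is precisely the role of the hypothesis $h_{\min}\gtrsim n^{-1/(2r+1)}$.

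On $\mathcal E_n$ I then chain through the penalized criterion. Since $(\hat h,\hat k)\in\mathcal I_n$, Lemma~\ref{lm:bias-bound}, (a), and the definition of $\hat Q$ give
\begin{align*}
W(\hat\pi)\ &\ge\ W_{\hat h}(\hat\pi)-B(\hat h)\ \ge\ \hat W_{\hat h}(\hat\pi)-\sup_{\pi\in\Pi_{\hat k}}|\hat W_{\hat h}(\pi)-W_{\hat h}(\pi)|-B(\hat h)\\
&\ge\ \hat W_{\hat h}(\hat\pi)-\hat R_{\hat h,\hat k}-\tau(\hat h,\hat k,n)-B(\hat h)\ =\ \hat Q_{\hat h,\hat k}.
\end{align*}
By the optimality of $(\hat h,\hat k)$, $\hat Q_{\hat h,\hat k}\ge\hat Q_{h,k}$ for every $(h,k)\in\mathcal I_n$. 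To lower-bound $\hat Q_{h,k}$, use that $\hat\pi_{h,k}$ maximizes $\hat W_h$ over $\Pi_k$, evaluate at $\pi_k^{\ast}$, and invoke Lemma~\ref{lm:bias-bound} together with (b)--(c):
\begin{align*}
\hat Q_{h,k}\ &\ge\ \hat W_h(\pi_k^{\ast})-\hat R_{h,k}-\tau(h,k,n)-B(h)\\
&\ge\ W^{\ast}(\Pi_k)-2B(h)-2\big(C_v+o(1)\big)\sqrt{\tfrac{\operatorname{VC}(\Pi_k)}{nh}}-\tau(h,k,n)-\big(|\hat W_h(\pi_k^{\ast})-W_h(\pi_k^{\ast})|+\rho_{h,k}\big),
\end{align*}
having absorbed the redundant $\tau$'s by rescaling the technical term. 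For each $(h,k)\in\mathcal I_n$ this yields $W^{\ast}(\Pi_\infty)-W(\hat\pi)\le\big(W^{\ast}(\Pi_\infty)-W^{\ast}(\Pi_k)+2B(h)+2(C_v+o(1))\sqrt{\operatorname{VC}(\Pi_k)/(nh)}+\tau(h,k,n)\big)+\sup_{\mathcal I_n}(|\hat W_h(\pi_k^{\ast})-W_h(\pi_k^{\ast})|+\rho_{h,k})$; taking the infimum of the deterministic part and using (c) for the last term gives part~(1). On $\mathcal E_n^{c}$ the regret is at most $2M$, which only affects the $O_p$ term.

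For part~(2), I would first show that the double-debiased estimator $\hat\mu^{\operatorname{FT}}$ in \eqref{eqn:mu_hat-FT} is Neyman-orthogonal in the nuisances $(\mu,f_T)$, so that under Assumption~\ref{ass:bias-est} it converges to $\mu^{\operatorname{FT}}$ at a rate, uniform over $0\le\log|\xi|\le\log^2 n$, fast enough for the tightest-linear-upper-bound program \eqref{eqn:r-hat} to be consistent; since $r\in\mathbb{N}_{+}$ is discrete this yields $\mathbb{P}(\hat r=r)\to 1$ and $\hat V_\mu\to_p V_\mu$, so, because $B(h;r,V)$ is linear and increasing in $V$, $(1+\gamma)\hat B(h)\ge B(h)$ with probability $\to 1$ and $(1+\gamma)\hat B(h)\le (1+\gamma)(1+o_p(1))B(h)$. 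Replacing $B$ by $(1+\gamma)\hat B$ in the penalty leaves the inequality $W(\hat\pi)\ge\hat Q_{\hat h,\hat k}$ intact (because $(1+\gamma)\hat B(\hat h)\ge B(\hat h)$), while in the lower bound on $\hat Q_{h,k}$ the term $2B(h)$ becomes $B(h)+(1+\gamma)\hat B(h)\le 2(1+\gamma)B(h)+o_p(B(h))$; combined with $h_{\min}\gtrsim n^{-1/(2\hat r+1)}$ — which is of order $n^{-1/(2r+1)}$ on $\{\hat r=r\}$ — this gives part~(2), the $o_p(B(h))$ and the propagated error of $(\hat r,\hat V_\mu)$ being absorbed into the $O_p(n^{-r/(2r+1)})$ remainder.

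The main obstacle is the uniform concentration underlying (a)--(b). In the discrete-treatment model of \cite{mbakop2021model} the summands are of order one and a bounded-differences (McDiarmid) argument suffices; here the kernel-weighted summands $h^{-1}K((T_i-\pi(X_i))/h)Y_i/f(T_i\mid X_i)$ have range of order $1/h$, so McDiarmid gives only $\exp(-c\,nh^2\tau^2)$ tails, which are not summable against the grid under \eqref{eqn:tau-requirement} and would inflate the remainder past $n^{-r/(2r+1)}$. One must instead use a variance-sensitive (Talagrand/Bousquet) inequality and verify that the variance proxy of each summand is of order $1/h$ — an order of magnitude smaller than the squared range — which relies on the density weighting in the IPW functional and on $\kappa_2=\int K^2<\infty$; only then do the tails sharpen to $\exp(-c\,nh\tau^2)$, matching \eqref{eqn:tau-requirement}. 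A secondary difficulty is establishing the Neyman-orthogonality and the uniform-over-$\log|\xi|$ convergence of \eqref{eqn:mu_hat-FT} required for the consistency of $(\hat r,\hat V_\mu)$ in part~(2).
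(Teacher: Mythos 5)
Your proposal is correct and follows essentially the same route as the paper's proof: the same chaining through the penalized criterion $\hat Q_{h,k}$ and the optimality of $(\hat h,\hat k)$, the same symmetrization/VC-entropy bound that exploits the $O(1/h)$ second moment rather than the $O(1/h^2)$ squared envelope, Talagrand's inequality with a union bound made summable by condition (\ref{eqn:tau-requirement}) (and the restriction $\operatorname{VC}(\Pi_k)\le nh^2$ playing exactly the role you identify), and conditioning on $\{\hat r=r\}$ for part (2). The only cosmetic difference is that you control the deviation at $\pi_k^{\ast}$ pointwise via Bernstein where the paper uses the uniform deviation $\sup_{\pi\in\Pi_k}|\hat W_h(\pi)-W_h(\pi)|$ bounded by the Rademacher complexity through symmetrization; both yield the same constant $2(C_v+o(1))$.
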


The first part of Theorem \ref{thm:fully} describes the infeasible performance of the policy estimator when the bias bound is known, while the second part presents an oracle inequality when the bias bound is consistently estimated. When the bias bound is estimated, it needs to be inflated by a factor of $1+\gamma$ to ensure it correctly bounds the bias with high probability. In both cases, the procedure can balance the tradeoff among the three terms in the welfare decomposition of (\ref{eqn:welfare-decomposition}), introducing a technical term and a small order term $O_p(n^{-1/(2r+1)})$.\footnote{We use an $O_p$ remainder rather than a nonasymptotic high-probability bound in the oracle inequalities due to the estimation of $r$. See the proof of Theorem \ref{thm:fully}(2) for details.}
 The technical term can be chosen to be dominated by the variance term, as described subsequently.

The term $\tau$ and condition (\ref{eqn:tau-requirement}) are inevitable artifacts of applying a union bound to control the random selectors $(\hat h,\hat k)$ in (\ref{eqn:union-bound}) when proving Theorem \ref{thm:fully}. This proof device—and the resulting technical term—is standard (e.g., \citealp{mbakop2021model,bartlett2002model,koltchinskii2001rademacher,koltchinskii2011oracle,boucheron2005theory}); our setting is even more delicate because it involves two tuning parameters rather than one. Condition (\ref{eqn:tau-requirement}) imposes a lower bound on $\tau$ (since $nh\tau^{2}$ must be large), but $\tau$ must also remain small relative to the terms in the oracle inequality. Below we provide specific choices of $\tau$, based on feasible choices of the bandwidth grid, to satisfy (\ref{eqn:tau-requirement}) while staying negligible compared to $\sqrt{\operatorname{VC}(\Pi_k)/(nh)}$:
\begin{align*}
    \text{exponential sequence: } \mathcal{H} &= \left\{h: h= \rho^{-j},  j\in\mathbb{N}, h\geq n^{-1/(2\hat{r}+1)}\right\}, \rho>1, \\
    \text{geometric sequence: } \mathcal{H} &= \left\{h: h= j^{-\rho}, j\in\mathbb{N}, h\geq n^{-1/(2\hat{r}+1)}\right\}, \rho >0. \\
    \tau(h,k,n)& = \sqrt{\frac{\lambda_k\log k -\lambda_h', \log h}{nh}},
\end{align*}
for any sequences $\lambda_k\uparrow \infty$ as $k \rightarrow \infty$ and $\lambda_h' \uparrow \infty$ as $h \rightarrow 0$. 
It is worth noting that when the global policy class has a finite VC dimension—e.g., in Example \ref{eg:piecewise_linear} with $K<\infty$—the technical term reduces to $\tau(h,n)$; the dependence on $k$ disappears because the sum over $k$ is finite. If the bandwidth grid $\mathcal{H}$ is also finite, this term is no longer needed at all.

In large samples, the constraint $\operatorname{VC}(\Pi_k) \le nh^{2}$ in the oracle inequality’s infimum is asymptotically non-restrictive. Let $(h_n^{*},k_n^{*})$ solve the unconstrained problem:
\begin{align*}
    \inf_{h\in\mathcal{H}, k \geq 1} \Bigg(W^*(\Pi_\infty)-W^*(\Pi_k)+2(C_v+o(1))\sqrt{\frac{\operatorname{VC}(\Pi_k)}{nh}}+2(1+\gamma)B(h) +\tau(h,k,n)\Bigg).
\end{align*}
Then
\begin{align} \label{eqn:unrestricted-h-k}
    \sqrt{\frac{\operatorname{VC}(\Pi_{k_n^{*}})}{n\,h_n^{*}}}\asymp(h_n^{*})^{r},
\quad\text{so}\quad
\frac{\operatorname{VC}(\Pi_{k_n^{*}})}{n(h_n^{*})^{2}}
   =\left(\sqrt{\tfrac{\operatorname{VC}(\Pi_{k_n^{*}})}{n\,h_n^{*}}}\right)^{2}(h_n^{*})^{-1}
   \asymp(h_n^{*})^{2r-1}\to0, r \geq 1.
\end{align}
Hence, the unconstrained optimal choice automatically satisfies $ \operatorname{VC}(\Pi_{k_n^{*}}) \le n(h_n^{*})^{2}$, rendering the restriction non-binding.

From (\ref{eqn:unrestricted-h-k}), we can obtain the rate for the unconstrained optimal bandwidth 
\begin{align*}
    h_n^* \asymp \left( \frac{\operatorname{VC}(\Pi_{k_n^{*}})}{n\,h_n^{*}} \right)^{\frac{1}{2r+1}}.
\end{align*}
Because $\Pi_\infty$ is infinite-dimensional, driving the welfare deficiency to zero requires $k_n^*\to\infty$, so $h_n^*$ is of larger order than the standard optimal $n^{-1/(2r+1)}$ rate for nonparametric kernel regression. This justifies setting $h_{\min}\gtrsim n^{-1/(2r+1)}$ and then searching upward from there. 


\subsection{Discussion} \label{sec:discussion}

\paragraph{Comparison with the literature}

Although the work of \cite{athey2021policy} addressed continuous treatment, their approach examines infinitesimal nudges through outcome derivatives, effectively reducing the problem to a binary treatment scenario. They did not consider a sieve approximation of the policy space.

In contrast, \cite{mbakop2021model} considered sieve approximation for the binary treatment case. Their Proposition 3.2 derives the IPW (with known propensity score) welfare regret bound, which is
\begin{align*}
    \inf_{k \geq 1} \left( W^*(\Pi_\infty) - W^*(\Pi_k) + C \sqrt{\frac{\operatorname{VC}(\Pi_k)}{n}} + \sqrt{\frac{k}{n}} \right) + O\left( \frac{1}{\sqrt{n}}\right).
\end{align*}
Comparing our bound in Theorem \ref{thm:fully} to \cite{mbakop2021model}'s bound, the differences are as follows. First, our bound includes an additional bias term of order \( h^r \), and the variance term is larger by a factor of \( 1/\sqrt{h} \). This difference arises from the nonparametric kernel estimation required for the continuous treatment. Similarly, the order of the remainder term increases from $n^{-1/2}$ to \( n^{-r/(2r+1)} \), reflecting the standard minimax rate of nonparametric estimation under smoothness $r$ without approximation of the policy space. Additionally, the technical term has been adjusted to account for the data-driven selection of bandwidth.\footnote{In fact, as shown by \cite{bartlett2002model}, in the binary treatment case, the technical term in the binary treatment case can be made as small as $\sqrt{(\log k)/n}$ instead $\sqrt{k/n}$.} 

Second, our bound directly applies to the regret itself, whereas \cite{mbakop2021model}'s bound applies to the expected regret \(\mathbb{E}[W^*(\Pi_\infty) - W(\hat{\pi})]\). That is, we additionally bound (in probability) the deviation of the regret from its mean. However, it is important to note that our result does not imply convergence in expectation, as the \(O_p\)-remainder terms in Theorem \ref{thm:fully} may only converge in the (weaker) notation of convergence probability. This limitation is due to the complexities in estimating the order of smoothness, which typically exhibits a slow convergence rate.\footnote{See, for example, \cite{sun2005adaptive}, the minimax rate for the order of smoothness is logarithmic in $n$.}

\cite{kallus2018policy} studied policy learning with continuous treatment, obtaining a regret bound of order $\mathcal{R}_n(\Pi_\infty)/h^2  + \text{bias}(h)$,
where $\mathcal{R}_n(\Pi_\infty)$ is the Rademacher complexity of $\Pi_\infty$, defined as $\mathcal{R}_n(\Pi_\infty) \equiv \mathbb{E}\left[ \sup_{\pi \in \Pi_\infty} 2\sum_{i=1}^n \text{Rad}_i \pi(X_i)/n \right]$.
There are several key differences between their result and ours. First, they do not consider sieve approximation of the policy space. Second, even without sieve approximation, the bound derived by \cite{kallus2018policy} is less sharp than our Theorem \ref{thm:fully}.  This can be seen in a global policy class with a finite VC dimension $\Pi_\infty = \Pi_k$. In this more straightforward scenario, both procedures aim to find the optimal policy within a fixed policy class of finite VC dimension, leading to a welfare deficiency of zero. The resulting regret bounds are respectively
\begin{align*}
    \text{\cite{kallus2018policy}: }  C\sqrt{\frac{\operatorname{VC}(\Pi_\infty)}{nh^4}} + \text{bias}(h), \quad 
    \text{our Theorem \ref{thm:fully}: } C\sqrt{\frac{\operatorname{VC}(\Pi_\infty)}{nh}} + \text{bias}(h).
\end{align*}
Observe that our variance term is smaller by a factor of $h^{-3/2}$. This difference arises because in the analysis of \cite{kallus2018policy}, they only utilize the Rademacher contraction comparison lemma \citep[e.g., Theorem 4.12 in][]{ledoux1991probability} to obtain the bound. However, this approach only leverages the contraction property of the function $hK(\cdot/h)$ and does not fully exploit the following structure of the kernel term:
\begin{align} \label{eqn:kernel-property}
\begin{split}
       \sup_{t,x,\pi} \left| \frac{1}{h} K\left( \frac{t-\pi(x)}{h} \right) \right|  \lesssim \frac{1}{h}, ~
    \sup_{\pi} \mathbb{E}\left[ \left| \frac{1}{h} K\left( \frac{t-\pi(x)}{h} \right) \right|^2 \right]  \lesssim \frac{1}{h}. 
\end{split}
\end{align}
For the kernel term, this implies that although it is uniformly bounded by \( 1/h \), its second moment is also bounded by \( 1/h \) rather than \( 1/h^2 \) due to a standard change of variables argument. If we apply uniform bounds, which are sufficiently sharp in discrete-treatment policy learning, the variance term becomes \( \frac{1}{h}\sqrt{\frac{\text{VC}(\Pi_k)}{n}} \). Thus, using second-moment-type bounds to estimate \( \mathbb{E}[\hat{R}_{h,k}] \) yields significantly sharper results than relying on uniform bounds. Unlike our approach, \cite{kallus2018policy} did not leverage this ``small second-moment property'' of the kernel term, leading to a much larger bound on the regret.

\paragraph{Theoretical challenges} 
In connection with the kernel properties in (\ref{eqn:kernel-property}), establishing that the penalized welfare closely approximates the true welfare—analogous to Lemma 3.2 in \cite{mbakop2021model}—requires a sharper concentration result. Specifically, we rely on Talagrand’s inequality \citep{talagrand1996new} to control the tail probability, rather than the bounded difference inequality \citep{mcdiarmid1989method}. The use of Talagrand’s inequality is essential here, as it incorporates the second moment of the empirical process and avoids the overly conservative bounds that arise from the bounded difference inequality, which relies solely on a uniform bound for the kernel term.

However, Talagrand’s inequality introduces an additional term in the denominator, which grows with the sieve index. To ensure that this term does not dominate, we impose the technical condition \(\operatorname{VC}(\Pi_k) \leq nh^2\) in our procedure. As discussed in Appendix \ref{sec:k-infty}, increasing \(\tau\) offers an alternative solution to this technical issue. Nonetheless, this remains a technical concern, as practically searching over policy spaces with complexities up to \(nh^2\) is more than adequate.

In contrast, in the discrete case, it suffices to use the bounded difference inequality, as in Lemma 3.2 of \cite{mbakop2021model}, because a uniform bound on the empirical IPW welfare yields a tail bound that is as sharp as one based on the second moment.

\subsection{Holdout penalty} \label{sec:holdout}
As noted in \cite{mbakop2021model}, the holdout penalty is an alternative to using Rademacher complexity for penalizing overfitting. This approach involves setting aside a portion of the sample to assess the performance of sieve policy estimators.

The holdout procedure is described below. Let $n_E=\lfloor (1-\iota) n \rfloor$ and $n_T = n - n_E$ for some fixed number $\iota\in(0,1)$.\footnote{For concreteness, one can consider $\iota = 1/2, 1/4$ as in the numerical examples of \cite{mbakop2021model}.} The original sample $S_n=\{(Y_i,D_i,X_i)\}_{i=1}^n$ is split into an estimating sample $S_{n_E}\equiv\{(Y_i,D_i,X_i)\}_{i=1}^{n_E}$ and a testing sample $S_{n_T}\equiv\{(Y_i,D_i,X_i)\}_{i=n_E+1}^n$. The estimating sample is used to identify the optimal policy within each \( \Pi_k \), while the testing sample evaluates the performance of the estimated policy. Let \( \hat{W}^E_{h}(\pi) \) represent the empirical welfare on the estimating sample and \( \hat{\pi}^E_{h,k} \equiv \arg\max_{\pi \in \Pi_k} \hat{W}^E_{h}(\pi) \) be the empirical welfare maximizer. Let \( \hat{W}^T_{h} \) denote the empirical welfare obtained on the testing sample.

The holdout procedure constructs the penalized welfare as 
\begin{align*}
    \hat{Q}^{\textit{hold}}_{h,k} & \equiv \hat{W}^E_{h}(\hat{\pi}^E_{h,k}) - \underbrace{\left( \hat{W}^E_{h}(\hat{\pi}^E_{h,k}) - \hat{W}^T_{h}(\hat{\pi}^E_{h,k}) + \tau(k,h,n) + B(h) \right)}_{\text{penalty}} \\
    & = \hat{W}^T_{h}(\hat{\pi}^E_{h,k}) - B(h) - \tau(k,h,n).
\end{align*}
Intuitively, this procedure penalizes overfitting using the difference between in-sample and out-of-sample estimated welfare. The procedure remains the same as in Section \ref{sec:implementation}.

\begin{corollary} \label{cor:holdout}
    Given the assumptions of Theorem \ref{thm:fully}, the holdout procedure achieves the same oracle inequality, but with the sample size \( n \) replaced by the size of the estimating sample \( n_E \).
\end{corollary}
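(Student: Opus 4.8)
The plan is to re-run the proof of Theorem \ref{thm:fully} with the Rademacher penalty $\hat R_{h,k}$ replaced by the holdout gap $\hat W^E_{h}(\hat\pi^E_{h,k}) - \hat W^T_{h}(\hat\pi^E_{h,k})$, keeping careful track of where the estimating-sample size $n_E$ enters. Using the simplified form $\hat Q^{\textit{hold}}_{h,k} = \hat W^T_{h}(\hat\pi^E_{h,k}) - B(h) - \tau(k,h,n)$, decompose the regret as $W^*(\Pi_\infty) - W(\hat\pi) = \big[W^*(\Pi_\infty) - \hat Q^{\textit{hold}}_{\hat h,\hat k}\big] + \big[\hat Q^{\textit{hold}}_{\hat h,\hat k} - W(\hat\pi)\big]$. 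For the second bracket, conditional on $S_{n_E}$ the policy $\hat\pi^E_{\hat h,\hat k}$ is a fixed function, so $\hat W^T_{\hat h}(\hat\pi^E_{\hat h,\hat k})$ is an average of $n_T$ i.i.d.\ summands with conditional mean $W_{\hat h}(\hat\pi^E_{\hat h,\hat k})$; combining the testing-sample fluctuation with the kernel-bias bound $|W_{\hat h}(\pi)-W(\pi)|\le B(\hat h)$ of Lemma \ref{lm:bias-bound} and cancelling the $-B(\hat h)$ built into $\hat Q^{\textit{hold}}$ shows this bracket is at most $\big(\hat W^T_{\hat h}(\hat\pi)-W_{\hat h}(\hat\pi)\big) - \tau(\hat k,\hat h,n)$. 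For the first bracket, optimality of $(\hat h,\hat k)$ gives $W^*(\Pi_\infty) - \hat Q^{\textit{hold}}_{\hat h,\hat k} \le W^*(\Pi_\infty) - \hat Q^{\textit{hold}}_{h,k}$ for every feasible $(h,k)$; expanding $\hat Q^{\textit{hold}}_{h,k}$, applying the same two bounds on the testing sample, and then using the empirical-welfare-maximization inequality (2.2) of \cite{kitagawa2018should} on the estimating sample together with Lemma \ref{lm:bias-bound} to get $W^*(\Pi_k) - W(\hat\pi^E_{h,k}) \le 2\sup_{\pi\in\Pi_k}|\hat W^E_{h}(\pi)-W_h(\pi)| + 2B(h)$, one arrives, for every feasible $(h,k)$, at
\[
W^*(\Pi_\infty) - W(\hat\pi) \le W^*(\Pi_\infty) - W^*(\Pi_k) + 2\sup_{\pi\in\Pi_k}|\hat W^E_{h}(\pi)-W_h(\pi)| + c_B\,B(h) + \Delta_{h,k} + \tau(k,h,n),
\]
where $c_B$ is an absolute constant and $\Delta_{h,k} \equiv |\hat W^T_{h}(\hat\pi^E_{h,k})-W_h(\hat\pi^E_{h,k})| + |\hat W^T_{\hat h}(\hat\pi)-W_{\hat h}(\hat\pi)|$ collects the testing-sample fluctuations.

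It then remains to bound the two random quantities uniformly over $(h,k)$. The term $\sup_{\pi\in\Pi_k}|\hat W^E_{h}(\pi)-W_h(\pi)|$ is handled exactly as the corresponding step of Theorem \ref{thm:fully}: by symmetrization its mean is $\le (C_v+o(1))\sqrt{\operatorname{VC}(\Pi_k)/(n_E h)}$ using the small–second-moment property (\ref{eqn:kernel-property}) of the kernel term, and its deviation from the mean is controlled by Talagrand's inequality \citep{talagrand1996new}, for which the constraint $\operatorname{VC}(\Pi_k)\le nh^2$ still suffices because $n_E=\lfloor (1-\iota) n \rfloor\asymp n$. The testing-sample fluctuations are simpler: conditioning on $S_{n_E}$ makes $\hat W^T_{h}(\hat\pi^E_{h,k})-W_h(\hat\pi^E_{h,k})$ a centered average of summands that are bounded by $O(1/h)$ and have variance $O(1/h)$ (again by (\ref{eqn:kernel-property})), so Bernstein's inequality gives a per-$(h,k)$ tail $\exp(-n_T h t^2/C)$; summing over $h\in\mathcal H$ and over $\{k:\operatorname{VC}(\Pi_k)\le nh^2\}$ is controlled by exactly condition (\ref{eqn:tau-requirement}) with $n$ replaced by $n_T\asymp n$, whence $\sup_{h,k}\big(|\hat W^T_{h}(\hat\pi^E_{h,k})-W_h(\hat\pi^E_{h,k})| - \tau(k,h,n)\big) \le O_p\big((n_T h_{\min})^{-1/2}\big) = O_p(n^{-r/(2r+1)})$ using $h_{\min}\gtrsim n^{-1/(2r+1)}$. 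Substituting these bounds and taking the infimum over feasible $(h,k)$ delivers the oracle inequality of Theorem \ref{thm:fully}(1) with $n$ replaced by $n_E$. For the unknown-bias case I would repeat the argument after substituting $(1+\gamma)\hat B(h)$ for $B(h)$ in $\hat Q^{\textit{hold}}_{h,k}$ and invoking Assumption \ref{ass:bias-est} and Lemma \ref{lm:bias-bound}, exactly as in Theorem \ref{thm:fully}(2), to ensure $(1+\gamma)\hat B(h)\ge B(h)$ uniformly over $\mathcal H$ with probability tending to one and $\hat B(h)=B(h)(1+o_p(1))$; this reproduces the factor $(1+\gamma)$ on the bias term and the extra $O_p(n^{-r/(2r+1)})$ coming from estimating $r$.

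The main obstacle is the combined bookkeeping rather than any single estimate. One must verify (i) that the kernel bias still cancels correctly — the $-B(h)$ inside the holdout criterion has to offset the bias incurred when the fixed policy $\hat\pi^E_{h,k}$ is scored on the independent testing sample — so that $B(h)$ enters the final bound with a constant of the same order $h^r$ as in Theorem \ref{thm:fully}; and (ii) that the single technical term $\tau(k,h,n)$ simultaneously absorbs the union bound over $(h,k)$ for the estimating-sample empirical process (Talagrand) and the union bound for the testing-sample deviations (Bernstein), both of which reduce, after using $n_E\asymp n_T\asymp n$, to condition (\ref{eqn:tau-requirement}). Once these are in place, the remaining steps mirror \cite{mbakop2021model}'s holdout analysis adapted to the kernel-smoothed, two-tuning-parameter setting, and the replacement of $n$ by $n_E$ throughout is immediate, since it is only through the estimating sample that the variance term and the selection-induced remainder are generated.
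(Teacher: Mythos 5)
Your proposal is correct and follows essentially the same route as the paper's proof: decompose the regret into the leading term and the selection remainder, cancel the kernel bias against the $-B(\hat h)$ built into $\hat{Q}^{\textit{hold}}$ via Lemma \ref{lm:bias-bound}, control the estimating-sample empirical process exactly as in Theorem \ref{thm:fully} (symmetrization plus Talagrand, yielding $n_E$ in the variance term), and handle the testing-sample fluctuations by Bernstein's inequality conditional on $S_{n_E}$ with a union bound over $(h,k)$ absorbed by $\tau$ under condition (\ref{eqn:tau-requirement}). Your write-up is merely more explicit than the paper's terse argument, including the treatment of the estimated bias bound via the event $\mathscr{E}$.
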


The Rademacher penalty, which exploits the full sample, yields lower regret but is harder to analyze theoretically. In contrast, the holdout penalty is theoretically simpler because the independence of the held-out sample allows the use of Bernstein’s inequality, without requiring the more technically involved Talagrand’s inequality. Yet, the holdout method sacrifices efficiency because the held-out data are not directly used to estimate the policy function.


\section{Learning with Observational Data: Double Debias}\label{sec:dr}

In the previous section, we considered the case where the propensity density $f$ is known. This section studies the case where the propensity density is unknown. Using double debiasing techniques, we can achieve a welfare regret bound similar to the one derived in the previous section.

Define the double-debiased welfare function as
\begin{align} \label{eqn:double-robustness}
    \Gamma_h(Y,T,X;\pi;f,m) \equiv & \frac{1}{h} K \left( \frac{T-\pi(X)}{h} \right) \frac{Y - m(\pi(X),X)}{f(T|X)} + m(\pi(X),X) \\
    = & \underbrace{\frac{1}{h} K \left( \frac{T-\pi(X)}{h} \right) \frac{Y }{f(T|X)}}_{\text{IPW moment function}} + \underbrace{\left(1- \frac{1}{hf(T|X)} K \left( \frac{T-\pi(X)}{h} \right)  \right)m(\pi(X),X)}_{\text{adjustment term}}. \nonumber
\end{align}
This expression represents the IPW welfare studied in Section \ref{sec:ipw} with an added adjustment term. The adjustment term has zero mean, i.e., \( \mathbb{E}[\Gamma_h] = W_h \), but introduces additional variability. To manage this added variance, we impose the following assumption.

\begin{assumption} \label{ass:m-bounded-variation}
    For each $x$, $m(\cdot,x)$ is of bounded variation.
\end{assumption}

Assumption \ref{ass:m-bounded-variation} places a mild requirement on the dose-response function $m$. By restricting $m$ to be of bounded variation in the first argument, we can bound the complexity of the function class
\begin{align*}
    \{x \mapsto m(\pi(x),x): \pi \in \Pi_k\}
\end{align*}
by using the VC dimension of $\Pi_k$, thus controlling the additional variance introduced by the adjustment term in (\ref{eqn:double-robustness}). This technique is standard in nonparametric estimation. For example, \cite{GINE2002rates} uses this condition on the kernel function to ensure that the function class corresponding to the kernel density estimator is of finite VC dimension. 

An alternative approach to formulating the double-debiased moment function involves modifying $m(\pi(X),X)$ instead of the IPW expression:\footnote{\cite{kallus2018policy} briefly mentioned, without providing formal results, the double-debiased approach using this formulation instead of our (\ref{eqn:double-robustness}).}
\begin{align*}
& m(\pi(X),X) + \frac{1}{h} K \left( \frac{T-\pi(X)}{h} \right) \frac{1}{f(T|X)}(Y-m(T,X)). 
\end{align*}
In this case, the adjustment term introduces additional estimation bias beyond that of the IPW estimand, which is difficult to control. For this reason, we do not pursue this version of the double-debiased formula.

Estimating the welfare based on the double-debiased moment function in (\ref{eqn:double-robustness}) requires first-stage estimation of $f$ and $m$. For notation simplicity, we will use $g \equiv 1/f$ to denote the inverse propensity. 
Suppose we have consistent (under the sup-norm) estimators $\hat{m}$ and $\hat{g}$ for $m$ and $g$, respectively. They are assumed to satisfy the following conditions.
\begin{assumption} \label{ass:rates-nonparametric-estimates}
    The estimators $\hat{g}$ and $\hat{m}$ satisfy the following conditions:
    \begin{enumerate}[label = (\roman*),noitemsep]
       
        \item There exist $\rho_g,\rho_m \geq \frac{r}{2(2r+1)}$ 
        such that $\lVert \hat{g}-g \rVert_{\infty} = o_p(n^{-\rho_g})$ and $\lVert \hat{m}-m \rVert_{\infty}= o_p(n^{-\rho_m})$.
        \item With probability approaching one, $\hat{g}$ and $\hat{m}$ are bounded.
        \item With probability approaching one, $\hat{m}(\cdot,x)$ is of bounded variation for each $x$.
        \end{enumerate}
\end{assumption}

Assumption \ref{ass:rates-nonparametric-estimates}(i) concerns the mean-squared convergence rate of \( \hat{g} \) and \( \hat{m} \) in the \( L_\infty \) space.\footnote{Alternatively, we could impose assumptions of convergence rate on $L_2$ norm for $X$, but infinity norm for $T$ is still needed. Previous literature assumes $L_2$ convergence rate for discrete treatment, while we consider continuous treatment in contrast, and hence, we need the rate to hold uniformly for all treatment levels.} Unlike in \cite{athey2021policy}, these rate constraints depend on the smoothness \( r \), as the remainder term in the welfare regret is of order \( n^{-r/(2r+1)} \) rather than \( n^{-1/2} \). Since \( r \) is unknown, one could either use the estimate \( \hat{r} \) to assess this condition or adopt the conservative choice of having \( \rho_g, \rho_m \geq 1/4 \).\footnote{It is difficult to directly compare our rate requirement on the nuisance estimators with that of \cite{athey2021policy} because of the additional continuous argument \( T \) in the nuisance functions.} The other conditions in Assumption \ref{ass:rates-nonparametric-estimates} require that the estimators share the same properties as their target.


Given sufficient regularity, we can construct an estimator of $m$ that satisfies the rate condition in Assumption \ref{ass:rates-nonparametric-estimates} by employing, for example, sieve-based methods \citep{chen2007large}, local polynomial methods \citep{calonico2018effect}, or modern machine-learning techniques such as random forests, lasso, ridge, deep neural nets, boosted trees, and ensembles of these methods \citep{chernozhukov2018double}. We can also use recent advances in linear and nonlinear partitioning-based methods \citep{CattaneoFarrellFeng2020large,CattaneoChandakKlusowski2024convergence,CattaneoFengShigida2024uniform}, which encompass certain decision-tree and recursive-partitioning approaches. To estimate the conditional density $f$, one may apply the techniques developed by \citet{cattaneo2024boundary} or \citet{ColangeloLee2025}. Appendix \ref{sec:nonparametric-estimators} provides guidance on constructing the nonparametric estimators $\hat g$ and $\hat m$ based on \citet{cattaneo2024boundary} and \citet{CattaneoFengShigida2024uniform} and outlines sufficient conditions under which these estimators satisfy Assumption \ref{ass:rates-nonparametric-estimates}.

For welfare estimation, we implement the following cross-fitting procedure. Divide the data equally into $L$ folds, using the size of each fold $n/L$. For $\ell = 1, \cdots, L$, let $I_\ell$ denote the set of observation indices in the $\ell$th fold and $I^c_\ell = \bigcup_{\ell' \ne \ell} I_{\ell'}$ the set of observation indices not in the $\ell$th fold. With a slight abuse of notation, denote $S_\ell$ as the set of observations with indices $i \in I_\ell$. For observation $(T_i,X_i)$ in $S_\ell$, we use the observations with indices in $I^c_\ell$ to construct the nonparametric estimators $\hat{m}_\ell(T_i,X_i)$ and $\hat{g}_\ell(T_i,X_i)$. The subscript $\ell$ signifies that the estimators are constructed using data in $I_\ell^c$.
The double-debiased empirical welfare is constructed as 
\begin{align*}
    \hat{W}_{h}^{\operatorname{DD}}(\pi) \equiv \frac{1}{n} \sum_{\ell =1}^L \sum_{i \in I_\ell}  \Gamma_{h_{k,n}}(Y_i,T_i,X_i;\pi;\hat{g}_{\ell},\hat{m}_\ell),
\end{align*}
where the superscript DD indicates double debias.
The sieve empirical welfare maximizer with double-debiased welfare is obtained as 
\begin{align*}
    & \hat{\pi}_{h,k}^{\operatorname{DD}} \equiv \arg\max_{\pi \in \Pi_k} \hat{W}_{h}^{\operatorname{DD}}(\pi).
\end{align*}
The penalized welfare is set to be
\begin{align*}
    \hat{Q}_{h,k}^{\operatorname{DD}}\equiv \hat{W}_{h}^{\operatorname{DD}}(\hat{\pi}_{h,k}^{\operatorname{DD}}) - \left( \frac{1}{L} \sum_{\ell=1}^L \hat{R}_{h,k}^{\operatorname{DD},\ell}  + \tau(h,k,n) + (1+\gamma)\hat{B}(h) \right),
\end{align*}
where the Rademacher complexity is now computed using the double-debiased moment with cross-fitting
\begin{align*}
    \hat{R}_{h,k}^{\operatorname{DD},\ell} \equiv \mathbb{E} \left[ \sup_{\pi \in \Pi_k} \frac{L}{n} \sum_{i \in I_\ell} 2\text{Rad}_i \cdot \Gamma_h(Y_i,T_i,X_i;\pi;\hat{g}_{\ell},\hat{m}_{\ell}) \Bigg| S_\ell \right].
\end{align*}
The rest of the procedure remains the same as in Section \ref{sec:implementation}, and we denote the resulting policy estimator as $\hat{\pi}^{\operatorname{DD}}$.

\begin{theorem}\label{thm:dr}
Let assumptions of Theorem \ref{thm:fully}(2) and Assumptions \ref{ass:m-bounded-variation} and \ref{ass:rates-nonparametric-estimates} hold. If $h_{{min}} \gtrsim n^{-1/(2\hat{r}+1)}$, then the following oracle inequality holds for the policy estimator $\hat{\pi}^{\operatorname{DD}}$:
\begin{align*}
    & W^*(\Pi_\infty) - W(\hat{\pi}^{\operatorname{DD}}) \nonumber \\
    \leq & \inf_{\substack{h\in\mathcal{H} \\ k:\operatorname{VC}(\Pi_k)\leq nh^2}} \Bigg(W^*(\Pi_\infty)-W^*(\Pi_k)+2(C_v'+o(1))\sqrt{L\frac{\operatorname{VC}(\Pi_k)}{nh}}+2(1+\gamma+o(1))B(h) +\tau(h,k,n)\Bigg) \\
    & + O_p(n^{-r/(2r+1)}),
\end{align*}
where $C_v' \equiv (c+c')M \sqrt{\frac{\kappa_2}{\underline{f}}}$, with $c'$ being a universal constant different from $c$, as specified in the proof.
\end{theorem}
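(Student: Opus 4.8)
The plan is to reduce the double-debiased case to the IPW case of Theorem \ref{thm:fully}(2) by controlling the additional terms coming from (a) the adjustment term in $\Gamma_h$ and (b) the estimation error of the nuisance functions $\hat g_\ell,\hat m_\ell$. First I would repeat the welfare decomposition (\ref{eqn:welfare-decomposition}) with $\hat W_h$ replaced by $\hat W_h^{\operatorname{DD}}$, so that the regret is bounded by the welfare deficiency $W^*(\Pi_\infty)-W^*(\Pi_k)$, plus twice $\sup_{\pi\in\Pi_k}|\hat W_h^{\operatorname{DD}}(\pi)-W_h(\pi)|$, plus twice the kernel bias $\sup_{\pi}|W_h(\pi)-W(\pi)|$. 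The kernel-bias term is handled exactly as before via Lemma \ref{lm:bias-bound}, since $\E[\Gamma_h]=W_h$ so the population smoothed welfare is unchanged; this accounts for the $2(1+\gamma+o(1))B(h)$ term once $\hat B(h)$ is shown to dominate $B(h)$ with high probability (the $o(1)$ absorbing the inflation from estimating $r,V_\mu$ as in Theorem \ref{thm:fully}(2)). The main work is the stochastic term $\sup_{\pi\in\Pi_k}|\hat W_h^{\operatorname{DD}}(\pi)-W_h(\pi)|$, which I would split, fold by fold, into an ``oracle'' empirical process $\sup_\pi|\frac1n\sum_\ell\sum_{i\in I_\ell}(\Gamma_h(\cdot;g,m)-W_h(\pi))|$ plus a ``nuisance-error'' remainder $\sup_\pi|\frac1n\sum_\ell\sum_{i\in I_\ell}(\Gamma_h(\cdot;\hat g_\ell,\hat m_\ell)-\Gamma_h(\cdot;g,m))|$.

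For the oracle empirical process, I would follow the Talagrand-inequality route described in Section \ref{sec:discussion}: the envelope of $\{\Gamma_h(\cdot;g,m):\pi\in\Pi_k\}$ is $O(1/h)$ and, crucially, the second moment is $O(1/h)$ by the change-of-variables argument in (\ref{eqn:kernel-property}) — here I also use Assumption \ref{ass:m-bounded-variation} to argue that $\{x\mapsto m(\pi(x),x):\pi\in\Pi_k\}$ has VC-type complexity controlled by $\operatorname{VC}(\Pi_k)$, so the adjustment term inflates the complexity only by a constant factor (this is where the extra universal constant $c'$, hence $C_v'=(c+c')M\sqrt{\kappa_2/\underline f}$, enters). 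The $\sqrt L$ factor in the variance term arises because each of the $L$ folds contributes $n/L$ observations, so the per-fold empirical process has standard deviation of order $\sqrt{L\operatorname{VC}(\Pi_k)/(nh)}$. Combining Talagrand's inequality with a union bound over $(h,k)$ in $\mathcal H\times\{k:\operatorname{VC}(\Pi_k)\le nh^2\}$ — the constraint $\operatorname{VC}(\Pi_k)\le nh^2$ being exactly what keeps the remainder term of Talagrand's bound from dominating — and absorbing the union-bound cost into $\tau(h,k,n)$ via condition (\ref{eqn:tau-requirement}), one gets that $\sup_\pi|\hat W_h^{\operatorname{DD}}(\pi)-W_h(\pi)|$ is bounded (up to the penalty) by $(C_v'+o(1))\sqrt{L\operatorname{VC}(\Pi_k)/(nh)}$ simultaneously over the relevant $(h,k)$, and that the empirical Rademacher complexity $\frac1L\sum_\ell\hat R_{h,k}^{\operatorname{DD},\ell}$ concentrates around its population counterpart (another Talagrand/McDiarmid application over the Rademacher draws).

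The nuisance-error remainder is where the cross-fitting structure and Assumption \ref{ass:rates-nonparametric-estimates} do the work. Conditioning on $I_\ell^c$, $\hat g_\ell$ and $\hat m_\ell$ are fixed, so I would linearize $\Gamma_h(\cdot;\hat g_\ell,\hat m_\ell)-\Gamma_h(\cdot;g,m)$ in the nuisance errors: the first-order (in-expectation) part vanishes by the Neyman-orthogonality/double-robustness of $\Gamma_h$ (the adjustment term has conditional mean zero given $X$, and the cross term is a product $\|\hat g_\ell-g\|_\infty\cdot\|\hat m_\ell-m\|_\infty$-type quantity), while the mean-zero stochastic fluctuation of the linear part is, again by the $O(1/h)$ envelope and second-moment bound, of order $\|\hat g_\ell-g\|_\infty+\|\hat m_\ell-m\|_\infty$ times a $\sqrt{\operatorname{VC}(\Pi_k)/(nh)}$ empirical-process factor, hence negligible relative to the leading variance term given $\rho_g,\rho_m\ge r/(2(2r+1))$ and $h_{\min}\gtrsim n^{-1/(2\hat r+1)}$. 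Summing the product-rate bias term over folds gives a contribution of order $\|\hat g-g\|_\infty\|\hat m-m\|_\infty=o_p(n^{-(\rho_g+\rho_m)})=o_p(n^{-r/(2r+1)})$, which lands inside the $O_p(n^{-r/(2r+1)})$ remainder. I expect the main obstacle to be the simultaneous control over all $(h,k)$ in the Talagrand step while keeping the second-moment (rather than uniform) bound — exactly the delicacy flagged in Section \ref{sec:discussion} — and, relatedly, verifying that the extra adjustment-term function class genuinely inherits VC-type complexity from $\Pi_k$ uniformly in $h$, since a naive bound would reintroduce an extra $1/h$ and destroy the sharpness of the rate; the $O_p$ (rather than high-probability) form of the remainder is, as in Theorem \ref{thm:fully}(2), forced by the slow (logarithmic-rate) estimation of the smoothness index $r$ feeding into $h_{\min}$ and $\hat B(h)$.
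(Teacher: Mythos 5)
Your plan follows essentially the same route as the paper's proof: decompose the regret as in Theorem \ref{thm:fully}, replace the feasible double-debiased quantities $\hat W^{\operatorname{DD}}_h$ and $\hat R^{\operatorname{DD},\ell}_{h,k}$ by their infeasible counterparts built from the true $(g,m)$, bound the oracle empirical process via Talagrand's inequality with the $O(1/h)$ envelope and $O(1/h)$ second-moment bound, control the complexity of the adjustment-term class $\{x\mapsto m(\pi(x),x)\}$ through the bounded-variation decomposition into monotone pieces (this is exactly where the paper's Lemma \ref{lm:tilde-R} produces the extra constant $c'$ and the $\sqrt{L}$ from the per-fold sample size $n/L$), and absorb the union bound over $(h,k)$ into $\tau$ under condition (\ref{eqn:tau-requirement}) and the constraint $\operatorname{VC}(\Pi_k)\le nh^2$. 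The treatment of the nuisance-error remainder by conditioning on $I_\ell^c$ and using Lemma-\ref{lm:kitagawa}-type bounds with envelopes proportional to $\lVert\hat g_\ell-g\rVert_\infty$ and $\lVert\hat m_\ell-m\rVert_\infty$ matches the paper's Lemma \ref{lm:Rhat-R}.

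One step in your plan is not quite right as stated, and it is precisely the point the paper flags after Theorem \ref{thm:dr}. You claim the first-order (in-expectation) part of $\Gamma_h(\cdot;\hat g_\ell,\hat m_\ell)-\Gamma_h(\cdot;g,m)$ vanishes by Neyman orthogonality. In the continuous-treatment case this holds only for the $\hat m$-error direction ($\Gamma_{3h}$ in Lemma \ref{lm:DR-calculation}, which has exactly zero conditional mean by Lemma \ref{lm:DR-IPW-same-bias}). The $\hat g$-error direction, $\frac{1}{h}K\bigl(\frac{T-\pi(X)}{h}\bigr)(Y-m(\pi(X),X))(\hat g_\ell(T,X)-g(T,X))$, has conditional mean bounded by $Ch^{r}\lVert\hat g_\ell-g\rVert_\infty$ rather than zero, because $\mathbb{E}[Y-m(\pi(X),X)\mid T,X]=m(T,X)-m(\pi(X),X)$ is nonzero off the diagonal $T=\pi(X)$ and is only $O(h^r)$ after kernel averaging. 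This residual is harmless — it is $o(1)\cdot B(h)$ since $B(h)\asymp h^r$ and $\lVert\hat g_\ell-g\rVert_\infty=o_p(1)$ — but it is exactly what produces the $o(1)$ in the coefficient $2(1+\gamma+o(1))B(h)$ of the stated oracle inequality; your proposal instead attributes that $o(1)$ to the estimation of $(r,V_\mu)$, which is already covered by the $\gamma$-inflation. If you carry out the linearization assuming exact orthogonality you will not obtain a wrong rate, but you will be unable to account for the bias coefficient in the theorem, so this term needs to be computed explicitly as in Lemma \ref{lm:DR-calculation}.
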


Comparing Theorem \ref{thm:dr} with Theorem \ref{thm:fully}, the variance bound is higher in two ways. First, the constant \( C_v' > C_v \) accounts for the extra variation introduced by the double-debiased adjustment term in (\ref{eqn:double-robustness}). Second, there is a factor \( \sqrt{L} \) due to the Rademacher complexity being constructed through cross-fitting; this factor also appears in \cite{zhou2023offline} (e.g., their Lemma 3). 

The bias term is slightly larger because, although the double-debiasing procedure mitigates bias, a small residual remains owing to kernel smoothing of the continuous treatment. In standard double-debiasing calculations for discrete treatments, the bias includes the expectation of the following term
\begin{align*}
    \mathbf{1}\{T=\pi(X)\}\,(Y-m(\pi(X),X))\bigl(\hat g_\ell(T,X)-g(T,X)\bigr),
\end{align*}
whose expectation is zero. With a continuous treatment, this term becomes
\begin{align*}
    \frac{1}{h}K\!\left(\frac{T-\pi(X)}{h}\right)(Y-m(\pi(X),X))\bigl(\hat g_\ell(T,X)-g(T,X)\bigr),
\end{align*}
whose expectation is generally non-zero and depends on the bandwidth $h$ and the convergence rate of $\hat{g}$. See Lemma \ref{lm:DR-calculation} for details.

Despite these two differences, the result for the observational setting shows that double-debiased policy estimators can achieve a comparable welfare regret to the IPW setting with a known propensity, provided the first-stage estimates converge at a sufficiently fast rate.

\section{Sieve Policy Class Construction}\label{sec:sieve}
In this section, we discuss the implementation of sieve approximation of the global policy class, including traditional sieves and neural networks. In each case, we discuss the VC dimension and the welfare approximation rate for the specific sieve class. 

Specifically, let $\alpha_k \downarrow 0$ denote the rate at which the sieve sequence $\Pi_\infty$ approximates $\Pi_k$, which is defined by 
\begin{align} \label{eqn:deficiency-rate}
\sup_{\pi\in\Pi_\infty}\inf_{\pi_k\in\Pi_k}\mathbb{E}|\pi(X)-\pi_k(X)| \leq \alpha_k.
\end{align}
If $m$ is Lipschitz in $t$ with a Lipschitz constant uniform in $x$ (e.g., its first derivative in $t$ is bounded), then the welfare deficiency $W^*(\Pi_\infty) - W^*(\Pi_k)$ is $O(\alpha_k)$. When $W^*(\Pi_\infty)$ is achieved at an interior maximizer satisfying the first-order condition, the welfare deficiency rate can be sharpened to $O(\alpha_k^2)$. We present upper bounds on welfare deficiency only for illustration; the oracle inequalities, as shown in the theorems, ensure that the data-driven estimator automatically balances variance and kernel bias against the actual welfare deficiency without needing to know its exact rate.\footnote{Similar data-driven sieve-selection ideas have been explored in other adaptive estimation problems (e.g., \citet{breunig2024adaptive} and \citet{chen2025adaptive} for nonparametric instrumental variables), though the goals there differ from policy learning.}

\subsection{Monotone policies} \label{sec:monotone-sieve}

As in the empirical approach of \cite{mbakop2021model}, a standard way for restricting \( \Pi_\infty \) is to apply shape constraints driven by economic principles, such as fairness. We introduce formulations of \( \Pi_\infty \) and \( \Pi_k \) that extend the structure used in the empirical study of \cite{mbakop2021model} to our continuous-treatment context. In Section \ref{sec:applications}, this policy class formulation promotes fairness within job training programs.

Denote the $p$-th component of $X$ by $X_p$ and the support of $X_p$ by $\mathcal{X}_p\subset \mathbb{R}$. Let $h_p:\mathcal{X}_p\rightarrow \mathbb{R}$ be a monotone function and Lipschitz continuous with constant $L_p$, for $1\leq p \leq d_X$. We consider a particular type of policy, which transforms each coordinate of $X$ and then takes the sum as the treatment level: $\pi(X) = \sum_{p=1}^{d_X}h_p(X_p)$.
Let $\Pi_{\infty}$ be the set of all such policies: 
\begin{align*}
\Pi_{\infty} \equiv \left\{\pi(x)=\sum_{p=1}^{d_X}h_p(x_p): h_p \text{ increasing and bounded},\forall p \right\}.
\end{align*}
To construct the sieve policy class $\Pi_k$, we define $\varphi_{k,k^\prime}$ as the triangular kernel shifted by $k^\prime$ and scaled by $k$, for $k'=1,\dots,k$:
\begin{align*}
\varphi_{k,k^\prime}(x_p)\equiv \mathbf{1}\left\{\frac{k^\prime-1}{k}\leq x_p \leq \frac{k^\prime+1}{k}, 0 \leq x \leq 1\right\}\left(1 - |kx_p - k^\prime|\right), 0 \leq k' \leq k.
\end{align*}
Given a vector of coefficients $\theta \equiv (\theta_1,\cdots,\theta_{d_X})$, where $\theta_p\equiv (\theta_{p,0},\cdots,\theta_{p,k})^\top$ for $p=1,\dots,d_X$, the policy $\pi_{k,\theta}$ is defined as
\[\pi_{k,\theta}(x)\equiv \sum_{p=1}^{d_X}\sum_{k^\prime=0}^k \theta_{p,k^\prime}\varphi_{k,k^\prime}(x_p).\]
To ensure that $\sum_{k^\prime=0}^k \theta_{p,k^\prime}\varphi_{k,k^\prime}$ is monotone, additional constraints on the coefficients are needed, specifically $\theta_{p,k^\prime+1}\geq\theta_{p,k^\prime}, \forall k', p$. 
Denote $E$ as the following $k\times(k+1)$ matrix:
\begin{align*}
    E \equiv 
     \scalebox{0.8}{$\begin{pmatrix}
        -1 & 1 & 0 & \cdots & 0 & 0 \\
        0 & -1 & 1 & \cdots & 0 & 0 \\
        \vdots & \vdots & \vdots & \ddots & \vdots & \vdots \\
        0 & 0 & 0 & \cdots & -1 & 1 
    \end{pmatrix},$}
\end{align*}
Then, the sieve policy class $\Pi_k$ is described
\begin{equation*} 
    \Pi_k \equiv \left\{\pi_{k,\theta}: \pi_{k,\theta}(x)= \sum_{p=1}^{d_X}\sum_{k^\prime=0}^k \theta_{p,k^\prime}\varphi_{k,k^\prime}(x_p), \theta\in\mathbb{R}^{(k+1)\times d_X}, \text{ with } E\theta_p\geq 0,\forall p\right\}.
\end{equation*}
Since $\Pi_k$ lies within a finite-dimensional vector space, its VC dimension is bounded by its dimension $(k+1)d_X$. The sieve approximation rate in condition (\ref{eqn:deficiency-rate}) can be taken to be $\alpha_k = O(k^{-1})$, as established in Lemma \ref{lm:monotone-sieve-approx} in Appendix \ref{sec:proof-main}.

\subsection{Deep neural networks}

Neural networks have emerged as a viable option for approximating continuous functions. We briefly explain how to construct deep neural nets to implement policy learning. 

A function $\psi_{\text{NN}}$ on $\mathcal{X}$ implemented by a ReLU neural network can be represented as $x  \xrightarrow[]{\mathcal{L}_0} hd_1 \xrightarrow[]{\text{ReLU}} \widetilde{hd}_1 \cdots \xrightarrow[]{\mathcal{L}_{k-1}} hd_{k} \xrightarrow[]{\text{ReLU}} \widetilde{hd}_{k} \xrightarrow[]{\mathcal{L}_k} hd_{k+1} = \psi_{\text{NN}}(x)$, or more compactly expressed as $\psi_{\text{NN}} = \mathcal{L}_k \circ \text{ReLU} \circ \mathcal{L}_{k-1} \circ \text{ReLU} \circ \cdots \circ \mathcal{L}_1 \circ \text{ReLU} \circ \mathcal{L}_0$,
where $\mathcal{L}_{k'}$ is an affine transformation for each $k'=0,\cdots,k$, that is, $\mathcal{L}_{k'}(\cdot) = \Omega_{k'} \cdot + \omega_{k'}$ for some $\Omega_{k'} \in \mathbb{R}^{N_{k'+1} \times N_{k'}}$ and $\omega_{k'} \in \mathbb{R}^{N_{k'+1}}$. The ReLU activation function takes $\max\{\cdot,0\}$. In the literature, the matrix $\Omega_{k'}$ is called the weight and $\omega_{k'}$ the bias. The sieve index $k$ is the depth of the network. The integer $N_{k'}$ represents the width of the $k'$th layer, which is the number of neurons in the $k'$th layer. In particular, $N_0 = d_X$ and $N_{L+1}=1$, indicating $d_X$ inputs and a single output. A deep neural network is characterized by increasing depth while maintaining fixed width, i.e., $\Pi_k = \{\psi_{\text{NN}} \text{ with } k \text{ layers and fixed width} \}$. 

Recent work has established theoretical properties for deep neural networks. Theorem 7 of \cite{bartlett2019nearly} shows that the VC dimension of $\Pi_k$ admits a nearly tight upper bound of order $O(k^{2}\log k)$. The approximation rate of deep neural networks for smooth functions is provided in \cite{shen2021deep,shen2022optimal}. For example, by Corollary 1.3 in \cite{shen2022optimal}, for the H\"older space of continuous function of order $\gamma \in (0,1]$,\footnote{Here, fix a constant $L>0$, the H\"older space is the set of all functions $f$ satisfying $|f(x) - f(x')| \leq L \lVert x - x' \rVert_2^\gamma$.} condition (\ref{eqn:deficiency-rate}) holds with $\alpha_k \propto k^{-2\gamma/d_X}$.

\section{Empirical Study: Optimal Job Training Durations} \label{sec:applications}

In this section, we apply our proposed method to assigning individuals to job training of varying lengths, using data from the Job Training Partnership Act (JTPA) study.\footnote{The data is sourced from \cite{kitagawa2018should}, with background information extracted from the \texttt{expbif.dta} dataset, publicly available on the W.E. Upjohn Institute for Employment Research website. Observations with missing values for the included covariates have been excluded. The code for our empirical study is available on \url{https://github.com/yuefang11/continuous_policy_learning.git}.} Individuals often enroll in job training programs for varying durations. In the dataset, $22\%$ of individuals enrolled in job training for less than one month, $34.2\%$ received training for more than one month but less than three months, and $20\%$ for more than six months. We plot the frequency of the training time (measured in weeks) and its estimated density in Figure \ref{fig:t_density_freq}, demonstrating that the treatment variable is continuous.
\begin{figure}[htbp]
    \centering
    \includegraphics[width=.6\linewidth]{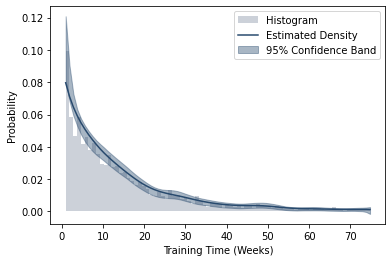}
    \caption{Frequency and estimated density of training time (weeks)}
    \label{fig:t_density_freq}
    \caption*{\footnotesize The plot displays the distribution of training times in the dataset, with density estimated using the package \texttt{lpdensity} \citep{cattaneo2022lpdensity} based on the methods developed by \cite{cattaneo2020simple,cattaneo2024local}. This highlights the rich variation in treatment, supporting its modeling as a continuous variable.}
\end{figure}

For policymakers, the challenge is not only to determine who should receive job training but also to tailor the duration of training to meet individual needs effectively. Building on prior studies, such as \cite{kitagawa2018should} and \cite{mbakop2021model}, which focus on binary eligibility for job training, our study addresses how long participants should receive training, treating the duration as a continuous variable. This shift introduces two distinctions. First, the propensity score becomes unknown, making our double-debiased method particularly effective for this setting. Second, identification requires justification, as we are no longer in an experimental context.

We argue that the unconfoundedness assumption is reasonable, as supported by \cite{flores2012estimating, hsu2022counterfactual,ColangeloLee2025} in their analysis of training duration using the Job Corps (JC) dataset. Given the close alignment between the two programs, the identification strategy developed for the JC dataset can be effectively applied to the JTPA study dataset. Both programs operated under the JTPA framework and shared similar institutional features, including open entry and exit policies, individualized training plans, and self-paced structures \citep{bloom1993national, bloom1997benefits, doolittle1993design}. This strong alignment, combined with rich pre-treatment demographic data and comparable participant characteristics, supports the validity of the unconfoundedness assumption for the JTPA dataset. Additionally, following \cite{flores2012estimating}, we restrict our analysis to individuals who have completed at least one week (40 hours) of training. This criterion ensures that the sample includes participants who have engaged with the training program to a minimal degree, enabling a more meaningful assessment of the training's effects.

Following \cite{ColangeloLee2025}, we define the continuous treatment $T$ as the total time spent in academic and vocational training (measured in weeks).\footnote{Participants are encouraged to search for employment while still in training, so the actual length of participation need not coincide with the nominal period assigned at entry. Hence, our policy in this setting is best interpreted as a recommended duration rather than a rigidly enforced requirement.} The outcome variable $Y$ is the applicants' earnings for $30$ months following the program, subtracting the training program cost, which is, on average, $\$5$ per hour \citep{bloom1997benefits}. The policy is based on three variables: years of education ($X_1$), pre-treatment earnings ($X_2$, measured by dollars per year), and working experience ($X_3$, measured by total weeks worked in the previous year). While other covariates are utilized to estimate the nuisance functions, due to legal and ethical considerations, they are only used for de-confounding but are not included in the policy.\footnote{Those covariates include gender, race,  age, location, and site of enrollment, etc.}  There are 2740 observations in our sample. The average participant is 32.39 years old, with a 40.58\% likelihood of being male, has 11.73 years of education, \$2952.80 of pre-treatment earnings, and 22.08 weeks of work experience in the previous year.

The policy spaces we consider are those defined in Section \ref{sec:monotone-sieve}. This monotonicity constraint reflects the assumption that individuals with lower levels of education, pre-treatment earnings, and work experience should receive at least as much training time as those with higher levels of these attributes. The rationale is that individuals with fewer resources or lower baseline levels in these areas may benefit more from extended training, helping them achieve outcomes comparable to those of their more advantaged peers. 

The welfare is estimated based on the double-debiased method. The estimated order of smoothness is equal to one. The propensity score $f$ is estimated using nearest neighbor kernel density estimation, and the conditional outcome function $m$ is estimated with linear regression. More complicated methods (e.g., random forests) could be adopted at the cost of increasing optimization time. For the Rademacher penalty, 100 random draws are used to simulate the Rademacher complexity for each fold, with one optimization problem solved for each draw. For the holdout penalty, 20\% of the sample is used as the testing sample. The optimization model is implemented with Gurobi 11.0 in a Python 3.11.9 environment. The factor $\gamma$ in the term $(1+\gamma)\hat{B}(h)$ is set to be 0.1. The bandwidth set $\mathcal{H}$ is the exponential sequence provided in Section \ref{sec:large}. 

Figures \ref{fig:fully_rad} and \ref{fig:fully_hold} depict the policy learning results obtained using the Rademacher and holdout penalties, respectively. For each method, \(\hat{\pi}_{k,\hat{h}_k}\) is presented for \(k = 1, \dots, 9\), where \(\hat{h}_k\) denotes the bandwidth that maximizes the penalized welfare for the corresponding \(k\). The selected policies are similar, with \((k=4, h=0.1)\) chosen under the Rademacher penalty and \((k=5, h=0.1)\) under the holdout penalty. Figure \ref{fig:slices} illustrates slices of the learned policy.

These results offer several insights. First, the learned policy yields an average training duration of approximately 11 weeks in both cases, with extended training periods—particularly those exceeding 12 weeks—rarely assigned. This observation could be attributed to the tendency of longer training durations to reduce participants' motivation to seek employment, as the training often provides subsidies or support services that may diminish the urgency of job searching.

Second, education level emerges as the primary factor influencing the assignment of training durations, whereas earnings and work experience appear to have insignificant effects. The complementary relationship between education and training may explain this distinction. Job training enhances and builds upon the foundational skills acquired through education, suggesting that individuals with lower levels of education require longer training durations to address skill gaps effectively. In contrast, earnings and work experience are typically outcomes of an individual's existing skills and reflect their current position in the labor market rather than their potential to benefit from additional training.

From an econometric perspective, we note the following observations: First, the policies learned through the Rademacher and holdout procedures are similar, yielding consistent results. Second, for a fixed sieve index \(k\), the bandwidth has a relatively minor impact on penalized welfare. Third, across the complete set of results, we observe a tendency for the selected bandwidth to increase with \(k\), aligning with our theoretical understanding of the relationship between the two tuning parameters.

\begin{figure}[htbp]
    \centering
    \includegraphics[width=\linewidth]{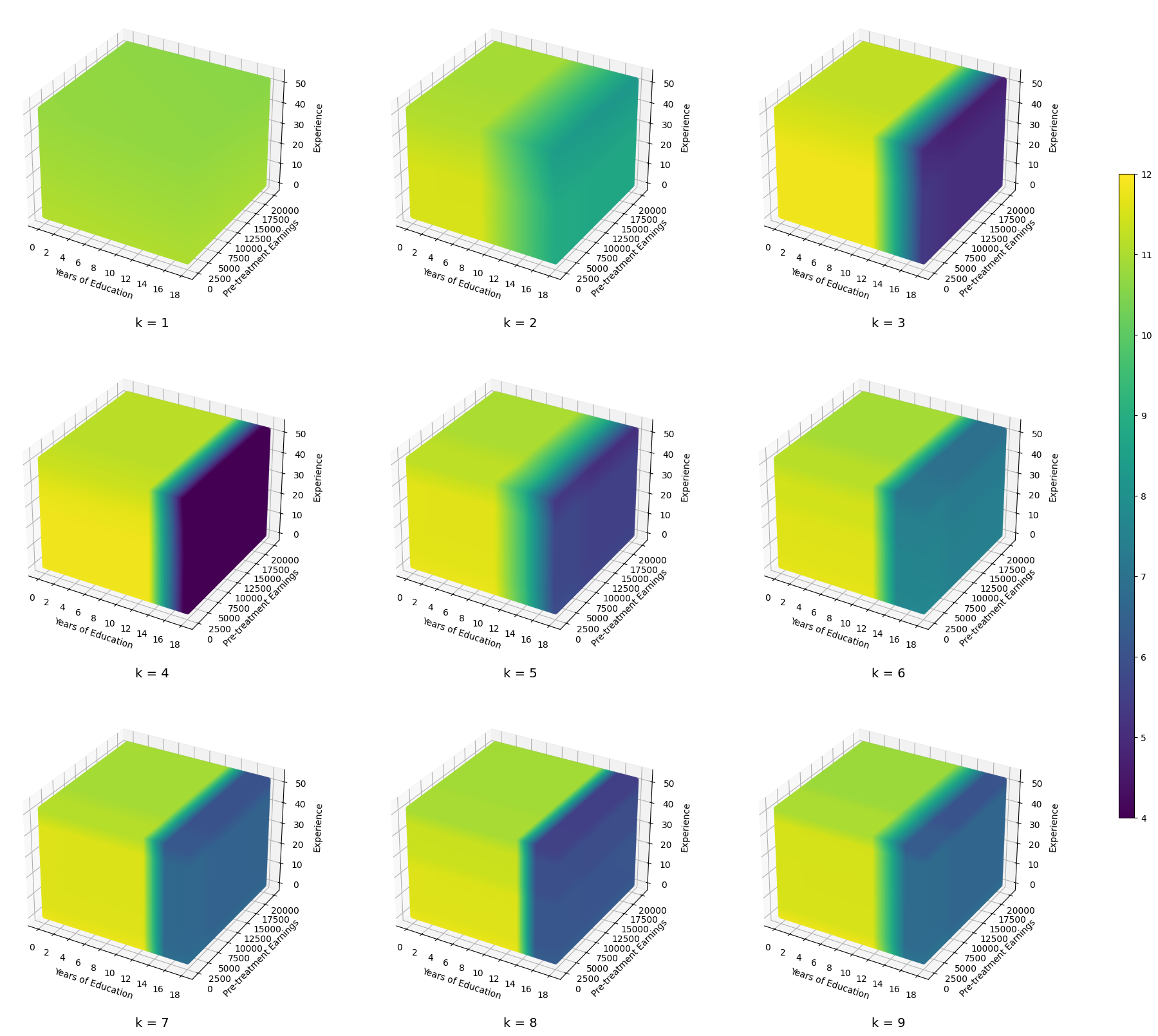}
    \caption{Optimal Training Duration (Rademacher penalty)}
    \caption*{The plots display \(\hat{\pi}_{k,\hat{h}_k}\) for \(k = 1, \dots, 9\), where \(\hat{h}_k\) represents the bandwidth that maximizes the penalized welfare for each \(k\). These plots are presented as 3-dimensional color maps, with lighter shades indicating longer training durations. Pre-treatment earnings are measured in dollars per year, working experience is measured by the total number of weeks worked in the previous year, and the training duration is measured in weeks. The policy learning procedure using the Rademacher penalty selects \(k = 4\) and \(h = 0.1\). Under this policy, the average duration of training is 10.96 weeks.}

    \label{fig:fully_rad}
\end{figure}

\begin{figure}[htbp]
    \centering
    \includegraphics[width=\linewidth]{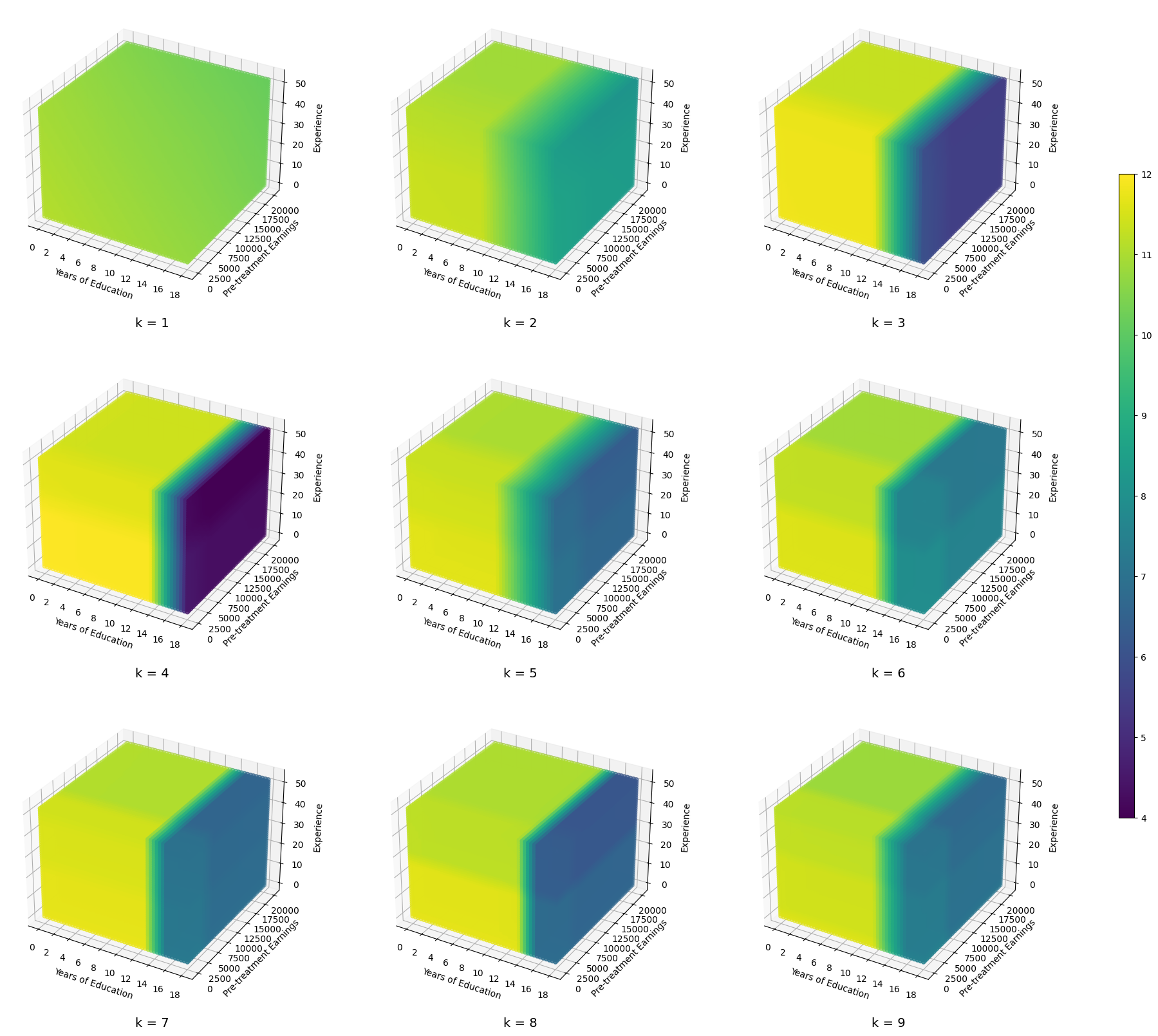}
    \caption{Optimal Training Duration (holdout penalty)}
    \caption*{The plots are analogous to those in Figure \ref{fig:fully_rad}, but are estimated using 80\% of the sample. The remaining 20\% serves as the testing sample to compute the holdout penalty. The policy learning procedure using the holdout penalty selects \(k = 5\) and \(h = 0.1\). Under this policy, the average duration of training is 10.86 weeks.}

    \label{fig:fully_hold}
\end{figure}

\begin{figure}
    \centering
    \includegraphics[width=\linewidth]{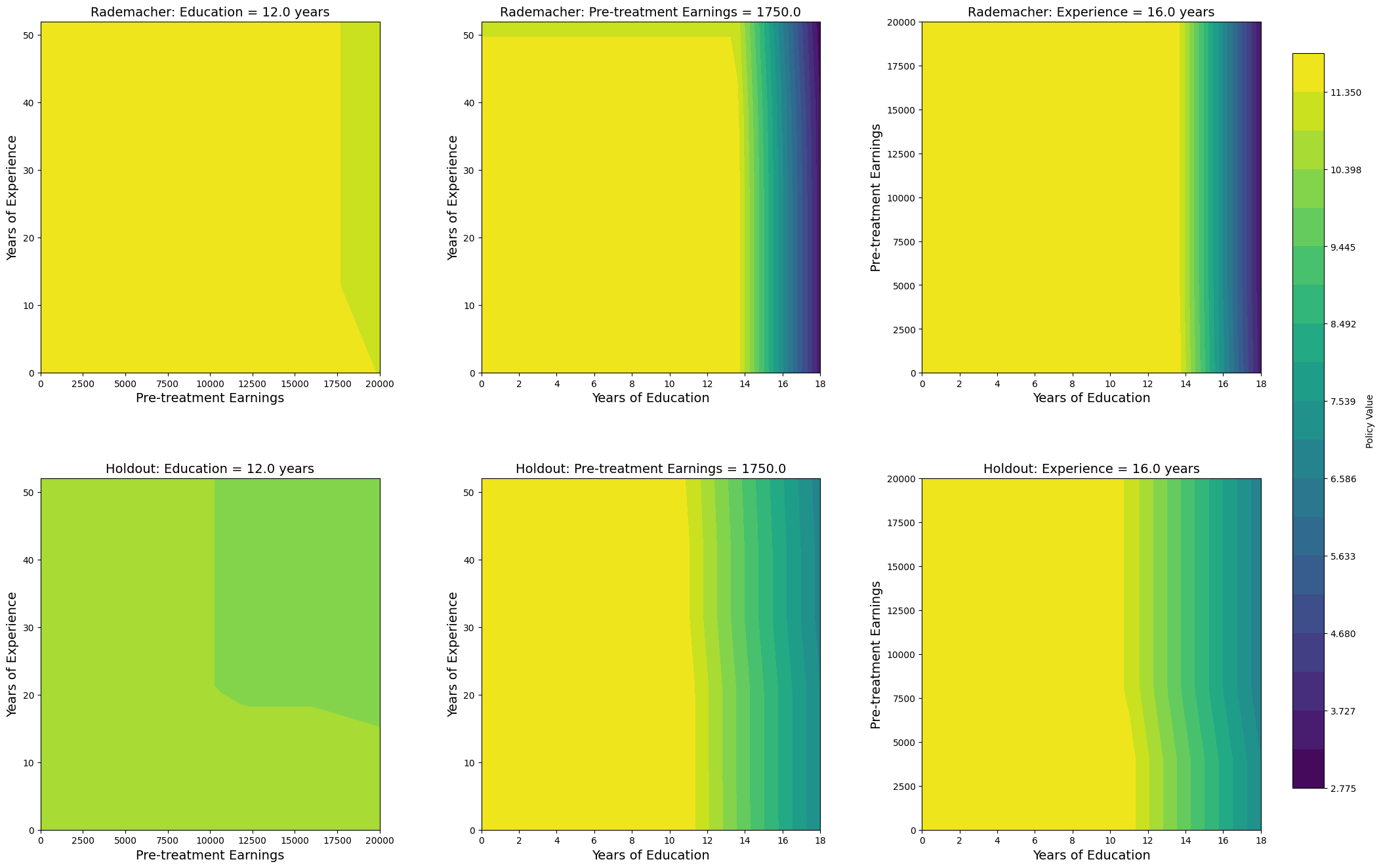}
    \caption{Slices of Optimal Policies at Medians}
    \caption*{The first row presents slices of the optimal policy from Figure \ref{fig:fully_rad}, taken at the median levels of years of education, pre-treatment earnings, and work experience, respectively. The second row presents the corresponding slices of the optimal policy from Figure \ref{fig:fully_hold}.}
    \label{fig:slices}
\end{figure}

\section{Conclusion}

This paper investigates policy learning in settings where the treatment variable is continuous. Following the framework of \cite{mbakop2021model}, we approximated the infinite-dimensional policy space using a sequence of finite-dimensional subspaces. However, in contrast to the binary treatment setting, the welfare function for continuous treatments required nonparametric estimation, even with a known propensity score. This nonparametric estimation introduced an additional tuning parameter—the bandwidth—which interacted with the dimension of the approximating space. We developed a data-automated penalization procedure for simultaneously selecting the tuning parameters. The penalty term was designed to control overfitting and account for the bias induced by the nonparametric estimation. We established oracle inequalities to demonstrate that the policy learned through this procedure effectively balanced the trade-offs between variance, the bias arising from the policy space approximation, and the kernel estimation bias. We proposed a double-debiased procedure for cases with an unknown propensity score that achieved a similar oracle inequality. We applied the proposed methodology to the JTPA dataset to determine the optimal training durations for participants based on their profiles. Compared to binary decisions regarding program participation, our approach offered more personalized recommendations for training durations. These findings suggest that decision-makers in continuous treatment settings should strongly consider using data-driven policy learning procedures.

\section*{Acknowledgment}

The authors contributed equally to this work and are listed in alphabetical order. We appreciate the Co-Editor Michael Jansson, an associate editor, and three referees whose suggestions have significantly improved this article. The National Natural Science Foundation of China (No. T2261160400, No. 72133005) supported Chunrong Ai's work, the National Natural Science Foundation of China (No. 72503208) supported Yue Fang's work, and the National Natural Science Foundation of China (No. 72403008, No. 72495123) supported Haitian Xie's work.

\bibliographystyle{chicago}
\bibliography{continuous_policy_learning.bib}

\appendix 
\begin{center}
{\large\bf APPENDIX}
\end{center}

\numberwithin{theorem}{section}
\numberwithin{lemma}{section}
\numberwithin{equation}{section}

\section{Proofs for Results in the Main Text and Appendix} \label{sec:proof-main}

\begin{lemma}\label{lm:bias-bound}
    Under Assumptions \ref{ass:unconfoundedness}, \ref{ass:infinite-kernel}, and \ref{ass:smoothness-r}, we have 
    \begin{align} \label{eqn:bound-mFT}
        |\mu^{\operatorname{FT}}(\xi)| \leq V_{\mu} |\xi|^{-(r+1)},
    \end{align}
    and the bias of $\hat{W}_{h}$ is bounded as
    \begin{align*}
        \sup_{\pi}|{W}_{h}(\pi) - W(\pi)| \leq B(h;r,V_{\mu}) \asymp h^r, 
    \end{align*}
    where the supremum is taken over the set of all measurable policies, and the expression of the bias bound is
    \begin{align*}
        B(h) = B(h;r,V_{\mu}) \equiv \frac{1}{2 \pii} \int |1- K^{\operatorname{FT}}(h \xi)|  V_{\mu} |\xi|^{-(r+1)} d\xi.
    \end{align*}
\end{lemma}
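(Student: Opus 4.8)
The plan is to work in the Fourier domain, since the kernel-smoothing bias has a clean multiplicative representation there. First I would establish the frequency-domain decay bound \eqref{eqn:bound-mFT}. Under Assumption \ref{ass:smoothness-r}(1), $\mu$ is $r$ times differentiable with $r$th derivative of bounded variation (absolutely continuous off a finite set), so integrating by parts $r$ times in the definition $\mu^{\operatorname{FT}}(\xi)=\int\mu(t)e^{\mathbf{i}\xi t}dt$ transfers the $r$ derivatives onto the exponential, producing a factor $(\mathbf{i}\xi)^{-r}$ (the boundary terms vanish because $T$ is compactly supported, Assumption \ref{ass:bounded}(1)), and one further integration by parts against $d\mu^{(r)}$ yields an extra $(\mathbf{i}\xi)^{-1}$ together with a factor bounded in modulus by the total variation $V_\mu$. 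This gives $|\mu^{\operatorname{FT}}(\xi)|\le V_\mu|\xi|^{-(r+1)}$. This step is essentially the argument in \cite{schennach2020bias} and I would cite it rather than grind through it.

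Next I would express the smoothing bias itself via Fourier inversion. Using the unconfoundedness Assumption \ref{ass:unconfoundedness} to write $m(t,x)=\mathbb{E}[Y\mid T=t,X=x]$, one checks that
\begin{align*}
    W_h(\pi)-W(\pi) = \mathbb{E}\Big[\int \tfrac1h K\big(\tfrac{t-\pi(X)}{h}\big)\big(m(t,X)-m(\pi(X),X)\big)\,dt\Big],
\end{align*}
i.e. for each $x$ the bias is the convolution $(K_h * m(\cdot,x))(\pi(x)) - m(\pi(x),x)$ where $K_h(\cdot)=h^{-1}K(\cdot/h)$. By the convolution theorem, the Fourier transform of $K_h*m(\cdot,x)-m(\cdot,x)$ is $(K^{\operatorname{FT}}(h\xi)-1)\,m^{\operatorname{FT}}(\xi;x)$, where $m^{\operatorname{FT}}(\cdot;x)$ is the transform of $m(\cdot,x)$. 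Applying Fourier inversion and then the triangle inequality (absolute value inside the integral), the bias at any point is bounded by $\frac{1}{2\pii}\int|1-K^{\operatorname{FT}}(h\xi)|\,|m^{\operatorname{FT}}(\xi;x)|\,d\xi$. Here the near-origin behaviour is harmless: by Assumption \ref{ass:infinite-kernel}, $K^{\operatorname{FT}}(h\xi)=1$ for $|\xi|$ in a neighbourhood of $0$, so $|1-K^{\operatorname{FT}}(h\xi)|$ vanishes there and the would-be singularity of $|\xi|^{-(r+1)}$ at $\xi=0$ is never hit. Finally, Assumption \ref{ass:smoothness-r}(2) gives $V_{m(\cdot,x)}\le V_\mu$ uniformly in $x$, hence by Step~1 applied to $m(\cdot,x)$ we get $|m^{\operatorname{FT}}(\xi;x)|\le V_\mu|\xi|^{-(r+1)}$, which bounds the per-$x$ bias by $B(h;r,V_\mu)$ uniformly; taking expectations over $X$ and then the supremum over measurable $\pi$ preserves the bound.

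It remains to verify the rate $B(h)\asymp h^r$. Substituting $u=h\xi$ in the defining integral gives $B(h)=\frac{h^r}{2\pii}\int|1-K^{\operatorname{FT}}(u)|\,V_\mu|u|^{-(r+1)}du$, so I just need the remaining integral to be a finite positive constant. Finiteness near $u=0$ follows because $1-K^{\operatorname{FT}}(u)=0$ on a neighbourhood of the origin (Assumption \ref{ass:infinite-kernel}), killing the $|u|^{-(r+1)}$ singularity; finiteness at $|u|\to\infty$ follows because $K$ has bounded variation, so $K^{\operatorname{FT}}(u)=O(1/|u|)$, making the integrand $O(|u|^{-(r+2)})$ with $r\ge1$, which is integrable at infinity; positivity is immediate since the integrand is nonnegative and strictly positive on a set of positive measure (where $K^{\operatorname{FT}}<1$). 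The main obstacle — though it is more bookkeeping than depth — is making the Fourier-inversion step rigorous: justifying the interchange of $\mathbb{E}[\cdot]$, the $dt$-integral, and the inversion integral, and confirming that $m(\cdot,x)$ (hence its transform) is well enough behaved (it is bounded by $M$ and of bounded variation) for the convolution theorem and inversion formula to apply pointwise at $t=\pi(x)$. Given the boundedness and bounded-variation assumptions, these are standard, and I would dispatch them by invoking the corresponding regularity lemmas in \cite{schennach2020bias}.
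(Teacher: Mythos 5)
Your proof is correct and follows essentially the same route as the paper: both work in the Fourier domain, obtain $|m^{\operatorname{FT}}(\xi,x)|\le V_{m(\cdot,x)}|\xi|^{-(r+1)}\le V_\mu|\xi|^{-(r+1)}$ from \cite{schennach2020bias} together with Assumption \ref{ass:smoothness-r}(2), and bound the bias by $\frac{1}{2\pii}\int|1-K^{\operatorname{FT}}(h\xi)|\,|m^{\operatorname{FT}}(\xi,x)|\,d\xi$ via the triangle inequality --- your convolution-theorem framing is the paper's change-of-variables-plus-Fubini computation in different clothing. The only substantive difference is that you verify $B(h)\asymp h^r$ directly by the substitution $u=h\xi$ where the paper cites Lemma 2(ii) of \cite{schennach2020bias}; there your claim that the integrand is $O(|u|^{-(r+2)})$ at infinity is a slip, since $1-K^{\operatorname{FT}}(u)\to 1$ rather than $0$, but the correct order $O(|u|^{-(r+1)})$ is still integrable for $r\ge 1$, so the conclusion is unaffected.
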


\begin{proof}[Proof of Lemma \ref{lm:bias-bound}]
    Denote $m^{\operatorname{FT}}(\xi,x) \equiv \int m(t,x)e^{\mathbf{i}t\xi} dt$ as the Fourier transform of $m$ with respect to $t$. By Lemma 1(i) in \cite{schennach2020bias} and Assumption \ref{ass:smoothness-r}(2), we have
    \begin{align*}
        |m^{\operatorname{FT}}(\xi,x)| \leq V(m(\cdot,x)) |\xi|^{-(r+1)} \leq V_{\mu} |\xi|^{-(r+1)}.
    \end{align*}
    Then we use the inverse Fourier formula to transform $W$ and $W_h$. The function $m(t,x)$ can be written as
    \begin{align*}
        m(t,x) = \frac{1}{2\pii} \int m^{\operatorname{FT}}(\xi,x) e^{-\mathbf{i}t\xi} d\xi.
    \end{align*}
    The welfare can be written as
    \begin{align*}
        W(\pi) = & \int m(\pi(x),x) f_X(x) dx = \frac{1}{2\pii} \int \int m^{\operatorname{FT}}(\xi,x) e^{-\mathbf{i}\pi(x)\xi} d\xi f_X(x) dx.
    \end{align*}
    For the kernel term, a standard change of variables gives
    \begin{align*}
        \frac{1}{h}K \left( \frac{\pi(X) - T}{h} \right) = & \frac{1}{2\pii h} \int K^{\operatorname{FT}}(\xi) \exp(-\mathbf{i}(\pi(X)-T)\xi/h) d\xi \\
        = & \frac{1}{2\pii} \int K^{\operatorname{FT}}(h\xi) e^{-\mathbf{i}\pi(X)\xi} e^{\mathbf{i}T\xi} d\xi.
    \end{align*}
    Using the above expression, we can write the expectation of the welfare estimator as
    \begin{align*}
        W_{h}(\pi) = & \mathbb{E}\left[\frac{1}{h} K \left( \frac{\pi(X) - T}{h} \right) \frac{m(T,X)}{f(T|X)}  \right] \\
        = & \frac{1}{2\pii} \mathbb{E}\left[ \int \int K^{\operatorname{FT}}(h\xi) e^{-\mathbf{i}\pi(X)\xi} e^{\mathbf{i}t\xi} m(t,X) dt d\xi \right] \\
        = & \frac{1}{2\pii} \mathbb{E}\bigg[ \int K^{\operatorname{FT}}(h\xi) e^{-\mathbf{i}\pi(X)\xi} \underbrace{\int e^{\mathbf{i}t\xi} m(t,X) dt}_{= m^{\operatorname{FT}}(\xi,X)} d\xi \bigg] \\
        = & \frac{1}{2\pii} \mathbb{E}\left[ \int K^{\operatorname{FT}}(h\xi) e^{-\mathbf{i}\pi(X)\xi} m^{\operatorname{FT}}(\xi,X) d\xi \right],
    \end{align*}
    where the third lines follows from switching the order of integration by Fubini theorem.
    Therefore, the bias is characterized as
    \begin{align*}
        |W_{h}(\pi) - W(\pi)| = & \frac{1}{2\pii} \left| \int \int (1-K^{\operatorname{FT}}(h\xi)) m^{\operatorname{FT}}(\xi,x) e^{-\mathbf{i}\pi(x)\xi} d\xi f_X(x) dx \right| \\
        \leq & \frac{1}{2\pii} \int \int |1-K^{\operatorname{FT}}(h\xi)) ||m^{\operatorname{FT}}(\xi,x) | d\xi f_X(x) dx \\
        \leq & \int B(h;r,V_{\mu}) f_X(x) dx = B(h;r,V_{\mu}).
    \end{align*}
    The fact that $B(h;r,V_{\mu}) \asymp h^r$ is proved by Lemma 2(ii) in \cite{schennach2020bias}.
\end{proof}

\begin{lemma} \label{lm:r-hat-consistency}
    Let Assumptions \ref{ass:bounded} and \ref{ass:bias-est} hold. 
    Then we have $\mathbb{P}(\hat{r}=r) \rightarrow 1$ and $\hat{V}_{\mu} = V_{\mu} + o_p(1)$. 
\end{lemma}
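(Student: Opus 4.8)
The plan proceeds in three parts: first, a uniform consistency statement for the debiased Fourier-transform estimator $\hat\mu^{\operatorname{FT}}$ over the frequency band appearing in (\ref{eqn:r-hat}); second, a population-identification statement showing that $(V_\mu,r)$ is the well-separated solution of the population analogue of (\ref{eqn:r-hat}); third, a transfer of that separation to the sample program. For the first part, I would use the cross-fitting structure to split, fold by fold, $\hat\mu^{\operatorname{FT}}(\xi)-\mu^{\operatorname{FT}}(\xi)$ into a centered average and its conditional mean given the out-of-fold nuisance estimates. A short computation using $\mathbb{E}[Y\mid T]=\mu(T)$ shows the conditional mean equals $\int(\hat\mu(t)-\mu(t))(\hat f_T(t)-f_T(t))\hat f_T(t)^{-1}e^{\mathbf{i}t\xi}\,dt$, a product of first-stage errors; since $|e^{\mathbf{i}t\xi}|=1$, $T$ is compactly supported, and $\hat f_T$ is bounded away from zero with probability approaching one (Assumption \ref{ass:bias-est}(1)), this is bounded uniformly in $\xi$ by a constant times $\lVert\hat\mu-\mu\rVert_\infty\lVert\hat f_T-f_T\rVert_\infty=o_{a.s.}(n^{-1/2})$ by Assumption \ref{ass:bias-est}(3). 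The centered average $n^{-1}\sum_i(W_i(\xi)-\mathbb{E}[W_i(\xi)\mid\text{out-of-fold data}])$, where $W_i(\xi)\equiv(Y_i-\hat\mu(T_i))\hat f_T(T_i)^{-1}e^{\mathbf{i}T_i\xi}$, has uniformly bounded summands and, as a function of $\xi$, is Lipschitz with an $O_p(1)$ constant because its $\xi$-derivative is bounded by $\max_i|T_i|$ times the bounded weights. Covering the band by a grid, applying Hoeffding's inequality and a union bound at the grid points, and interpolating via the Lipschitz property would then give $\sup_{0\le\log|\xi|\le\log^2 n}|\hat\mu^{\operatorname{FT}}(\xi)-\mu^{\operatorname{FT}}(\xi)|=O_p(n^{-1/2}\log n)$.

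For the second part, by Lemma \ref{lm:bias-bound} the pair $(V_\mu,r)$ satisfies the population version of the constraint in $\mathcal{A}$, and for a fixed slope the objective $\int_0^{\log^2 n}(\log A-(r'+1)\lambda)\,d\lambda=(\log^2 n)\log A-(r'+1)(\log^2 n)^2/2$ is increasing in $A$, so along the population feasible set it is minimized at $A^*_{r'}\equiv\sup\{|\mu^{\operatorname{FT}}(\xi)||\xi|^{r'+1}:0\le\log|\xi|\le\log^2 n\}$. The population program then reduces to minimizing $(\log^2 n)\log A^*_{r'}-(r'+1)(\log^2 n)^2/2$ over $r'\in\mathbb{N}_+$. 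For $r'<r$ the slope-$(r'+1)$ constraint stays feasible with $\log A^*_{r'}$ differing from $\log A^*_r$ only by an $O(1)$ amount, so the quadratic term dominates and the value at $r'$ exceeds that at $r$ by order $(r-r')(\log^2 n)^2$. For $r'>r$ the nonempty singular set in Assumption \ref{ass:smoothness-r}(1) forces $|\mu^{\operatorname{FT}}(\xi)||\xi|^{r+1}$ not to vanish at infinity, so a slope steeper than $-(r+1)$ requires $\log A^*_{r'}$ to grow like $(r'-r)\log^2 n$, which makes the $(\log^2 n)\log A^*_{r'}$ term blow up and again exceed the value at $r$. Hence $(A^*_r,r)$ is the unique population minimizer, separated from every competitor by order $(\log^2 n)^2$, and $A^*_r$ equals $V_\mu$ under the maintained assumptions (it is $\le V_\mu$ by Lemma \ref{lm:bias-bound}, and the exact-smoothness structure makes the bound attained up to $o(1)$ on the band).

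For the third part, since (\ref{eqn:r-hat}) depends on the data only through $\hat\mu^{\operatorname{FT}}$, the uniform bound from the first part shows that on the sub-band where $|\mu^{\operatorname{FT}}(\xi)|$ dominates the $O_p(n^{-1/2}\log n)$ estimation error, $\log|\hat\mu^{\operatorname{FT}}(\xi)|$ is within $o_p(1)$ of $\log|\mu^{\operatorname{FT}}(\xi)|$, so $(A^*_r(1+o_p(1)),r)$ remains feasible for the sample program while the $O_p$-sized perturbations of the objective are negligible against the $(\log^2 n)^2$ separation from the second part; every competitor $r'\ne r$ therefore keeps a strictly larger sample objective, so $\mathbb{P}(\hat r=r)\to1$. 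On $\{\hat r=r\}$, optimality in $A$ forces $\hat V_\mu=\sup_{0\le\log|\xi|\le\log^2 n}|\hat\mu^{\operatorname{FT}}(\xi)||\xi|^{r+1}$, which by the first part converges to $\sup_{0\le\log|\xi|\le\log^2 n}|\mu^{\operatorname{FT}}(\xi)||\xi|^{r+1}=V_\mu+o(1)$; hence $\hat V_\mu=V_\mu+o_p(1)$.

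The hard part will be the interplay between the width of the frequency band and the fact that the estimation error of $\hat\mu^{\operatorname{FT}}(\xi)$ does not decay in $\xi$ while the signal $|\mu^{\operatorname{FT}}(\xi)|\asymp|\xi|^{-(r+1)}$ does: at the high-frequency end of the band one is effectively fitting noise, so I would need to rule out carefully that this noisy tail tilts the fitted slope away from $-(r+1)$ toward a flatter one. Making the third part rigorous thus requires reconciling the truncation level $\log^2 n$ with the $n^{-1/2}$ noise scale and with the exact-smoothness lower bound on $|\mu^{\operatorname{FT}}|$ so that the slack-minimizing slope stays pinned at $-(r+1)$; that bookkeeping, rather than any single inequality, is the crux.
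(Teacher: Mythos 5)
Your Part 1 is sound and is essentially the content of the paper's own proof, reached by a different concentration tool. The paper introduces the infeasible estimator $\tilde{\mu}^{\operatorname{FT}}(\xi) \equiv n^{-1}\sum_i \frac{Y_i-\mu(T_i)}{f_T(T_i)}e^{\mathbf{i}T_i\xi}+\mu^{\operatorname{FT}}(\xi)$, decomposes $\hat{\mu}^{\operatorname{FT}}-\tilde{\mu}^{\operatorname{FT}}$ into a product-of-errors term plus two centered averages, and controls the latter by Kolmogorov's three-series theorem and Kronecker's lemma to get $o_{a.s.}(n^{-1/2})$ uniformly in $\xi$ (uniformity coming only from $|e^{\mathbf{i}t\xi}|\le 1$); your version isolates the same product-of-errors conditional mean and handles the centered part by Hoeffding plus a grid/Lipschitz argument, which also works and even makes the dependence on the band width explicit. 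Either route is acceptable for this step.

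The genuine divergence — and the gap — is in your Parts 2 and 3. The paper does \emph{not} reprove identification or separation: having shown $\hat{\mu}^{\operatorname{FT}}$ is uniformly within $o_{a.s.}(n^{-1/2})$ of $\tilde{\mu}^{\operatorname{FT}}$, it states that ``the remaining parts of the proof are the same as that of Theorem 1 in Schennach (2020),'' i.e., it imports her Lemma A.4 and Theorem 1 wholesale. You instead attempt to rebuild that argument, and your own closing paragraph correctly identifies where it breaks: over most of the band $0\le\log|\xi|\le\log^2 n$ the signal $|\mu^{\operatorname{FT}}(\xi)|\asymp|\xi|^{-(r+1)}$ lies far below the $n^{-1/2}$-scale sampling noise in $\hat{\mu}^{\operatorname{FT}}(\xi)$, so the sample constraint set $\mathcal{A}$ in (\ref{eqn:r-hat}) is \emph{not} a small perturbation of its population analogue — the binding constraints at high frequencies are determined by noise, and $\log|\hat{\mu}^{\operatorname{FT}}(\xi)|-\log|\mu^{\operatorname{FT}}(\xi)|$ is large there, not $o_p(1)$ as your Part 3 asserts. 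Your claim that ``the $O_p$-sized perturbations of the objective are negligible against the $(\log^2 n)^2$ separation'' therefore does not follow from Part 1 as stated; reconciling the noise floor with the fitted slope is exactly the content of Schennach's Theorem 1, which you have flagged but not supplied. A secondary soft spot is the assertion in Part 2 that $A_r^*=\sup_\xi|\mu^{\operatorname{FT}}(\xi)||\xi|^{r+1}$ equals $V_\mu$ exactly: Lemma \ref{lm:bias-bound} only gives $A_r^*\le V_\mu$, and attainment requires the exact-smoothness lower-bound argument you gesture at but do not prove. The efficient fix is to do what the paper does: stop after your Part 1 and invoke Schennach's Theorem 1 for the consistency of $(\hat{V}_\mu,\hat{r})$, rather than re-deriving it.
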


\begin{proof}[Proof of Lemma \ref{lm:r-hat-consistency}]
    For simplicity in the exposition, we assume that $\hat{\mu}$ and $\hat{f}_T$ are independent of the sample, essentially achieved by the cross-fitting method.
    The proof is based on Theorem 1 of \cite{schennach2020bias}. The only difference is that we do not observe $Y'$, and therefore, the estimator of $\mu^{\operatorname{FT}}$ involves nuisances estimators. The goal is to show that the uniform convergence result in Lemma A.4 of \cite{schennach2020bias} still holds for our estimator $\hat{\mu}^{\operatorname{FT}}$. Consider the following infeasible estimator
    \begin{align*}
        \tilde{\mu}^{\operatorname{FT}}(\xi) \equiv \frac{1}{n} \sum_{i=1}^n \frac{Y_i - \mu(T_i)}{f_T(T_i)} e^{iT_i\xi} + \mu^{\operatorname{FT}}(\xi),
    \end{align*}
    which satisfies Lemma A.4 of \cite{schennach2020bias}.
    We want to show that the double-debiased estimator, $\hat{\mu}^{\operatorname{FT}}$, is sufficiently close to $\tilde{\mu}^{\operatorname{FT}}$ uniformly. The difference between $\hat{\mu}^{\operatorname{FT}}$ and $\tilde{\mu}^{\operatorname{FT}}$ is decomposed into the following three terms:
    \begin{align*}
        & \int (\hat{\mu}(t) - \mu(t))(\hat{f}_T(t) - f_T(t))\frac{e^{\mathbf{i}t\xi}}{\hat{f}_T(t)}  dt, \\
        & \frac{1}{n}\sum_{i=1}^n \left( \frac{1}{\hat{f}_T(T_i)} - \frac{1}{f_T(T_i)} \right) (Y_i - \mu(T_i)) e^{iT_i\xi}, \\
        & \frac{1}{n}\sum_{i=1}^n \frac{\mu(T_i) - \hat{\mu}(T_i)}{\hat{f}_T(T_i)}e^{iT_i\xi} - \int (\hat{\mu}(t) - \mu(t))\frac{e^{\mathbf{i}t\xi}}{\hat{f}_T(t)} f_T(t) dt.
    \end{align*}
    The first term is $o_{a.s.}(n^{-1/4})$ because $T$ has bounded support, $\hat{f}_T$ is bounded away from zero, and $\lVert \hat{\mu} - \mu \rVert_\infty \lVert \hat{f}_T - f_T \rVert_\infty = o_{a.s.}(n^{-1/2})$. For the second term, we use Kolmogorov's three-series theorem. Define $Z_{j} \equiv \left( \frac{1}{\hat{f}_T(T_j)} - \frac{1}{f_T(T_j)} \right) (Y_j - \mu(T_j)) e^{iT_j\xi}$which has a mean zero. Then, by the boundedness of $Y$, $1/f_T$, and $1/\hat{f}_{T}$, we have 
    \begin{align*}
        \sum_{n=1}^\infty \mathbb{E}|Z_n|^2/n \lesssim \sum_{n=1}^\infty  \mathbb{E}\int (\hat{f}_T(t) - f_T(t))^2 f_T(t)^2/n \leq \sum_{n=1}^\infty n^{-1-\epsilon} < \infty.
    \end{align*}
    According to Kolmogorov's three-series theorem,  $\sum_{n=1}^\infty Z_n/\sqrt{n}$ converges almost surely. Then, according to Kronecker's Lemma \citep[e.g., Lemma 7.4.1 in][]{resnick2005}, we have $\sum_{i=1}^n Z_i/\sqrt{n} = o_{a.s.}(1)$, which implies that the second term in decomposition is of order $o_{a.s.}(n^{-1/2})$. Following the same procedure, we can derive that the third term in the decomposition is also of order $o_{a.s.}(n^{-1/2})$. I noticed that the above convergence is all uniform  $\xi$ because $|e^{\mathbf{i}t\xi}| \leq 1$. Therefore, we have $\hat{\mu}^{\operatorname{FT}}(\xi)-\tilde{\mu}^{\operatorname{FT}}(\xi) = o_{a.s.}(n^{-1/2})$ uniformly in $\xi$. Then rate results in Lemma A.4 of \cite{schennach2020bias} would also apply to $\hat{\mu}^{\operatorname{FT}}(\xi)$. The remaining parts of the proof are the same as that of Theorem 1 in \cite{schennach2020bias}, which establishes the consistency of $\hat{r}$ and $\hat{V}_{\mu}$.
\end{proof}

\begin{proof}[Proof of Theorem \ref{thm:fully}]
In the proof, we will use \(C, C_1, C_2,\) and so forth to represent constants that are independent of \((n, k, h)\). These constants may vary at different points in the proof.
For the sake of simplicity, we will assume that there exists $\pi_k^* \in \Pi_k, k=1,\cdots,\infty,$ such that $W^*(\Pi_k) = W(\pi_k^*)$. If not, one can also form a sequence of policies with welfare approaching $W(\pi_k^*)$. 
Decompose the welfare regret into the following two parts:
\begin{align*}
    W(\pi_\infty^*) - W(\hat{\pi}) = \left(W(\pi_\infty^*) - \hat{Q}_{\hat{h},\hat{k}}\right) +\left(\hat{Q}_{\hat{h},\hat{k}} - W(\hat{\pi})\right).
\end{align*}
The first term is the leading term, and the second is the remainder. We first deal with the leading term. Fix $h \in \mathcal{H}$ and $k$ such that $\operatorname{VC}(\Pi_k) < nh^2$, we have
\begin{align*}
    W(\pi_\infty^*) - \hat{Q}_{\hat{h},\hat{k}} = \underbrace{W(\pi_\infty^*) - W(\pi_k^*)}_{\text{welfare deficiency}} + W(\pi_k^*) - \hat{Q}_{\hat{h},\hat{k}}.
\end{align*}
The first term is welfare deficiency. The second term is bounded by 
\begin{align} \label{eqn:W-Q-bound}
    W(\pi^*_k) - \hat{Q}_{\hat{h},\hat{k}} & \leq W(\pi^*_k) - \hat{Q}_{h,k} \nonumber \\
    & = W(\pi^*_k) - \hat{W}_{h}(\hat{\pi}_{h,k}) + \hat{R}_{h,k} + B(h) + \tau(h,k,n) \text{ (definition of $\hat{Q}_{h,k}$)} \nonumber \\
    & \leq W(\pi^*_k) - \hat{W}_{h}(\pi^*_k) + \hat{R}_{h,k} + B(h) + \tau(h,k,n) \text{ (definition of $\hat{\pi}_{h,k}$)} \nonumber \\
    & \leq \sup_{\pi \in \Pi_k} |\hat{W}_{h}(\pi) - W(\pi)| + \hat{R}_{h,k} + B(h) + \tau(h,k,n) \nonumber \\
    & \leq \Delta_{h,k} + \hat{R}_{h,k} + 2B(h) + \tau(h,k,n).
\end{align}
where $\Delta_{h,k} \equiv \sup_{\pi \in \Pi_k} |\hat{W}_h(\pi) - W_h(\pi)|$.
On the right-hand side of the above inequality, the random terms are $\Delta_{h,k}$ and $\hat{R}_{h,k}$, both of which can be bounded using similar arguments: we first bound their expectations using Lemma \ref{lm:kitagawa}, and then apply Talagrand’s inequality to control their deviations from the mean. We begin with $\hat{R}_{h,k}$. Define $\bar{R}_{h,k} \equiv \mathbb{E}[\hat{R}_{h,k}]$ as the (expected) Rademacher complexity, which is the expected supremum of the empirical process indexed by the following class of functions
    \begin{align*}
        \left\{(Y,T,X,\text{Rad}) \mapsto \frac{2}{h} \text{Rad} K \left( \frac{T - \pi(X)}{h} \right) \frac{Y}{f(T|X)}, \pi \in \Pi_k \right\}.
    \end{align*}
    Notice that the functions in this class have zero means due to the independent Rademacher variables. According to Lemma \ref{lm:kernel-VC-dim}, the VC dimension of this function class is bounded by $2\operatorname{VC}(\Pi_k)$.
    Also, this class of functions admits a uniform bound $\frac{2\bar{\kappa}M}{h \underline{f}}$ and a second-moment bound $\frac{4M^2\kappa_2}{h \underline{f}}$ (Lemma \ref{lm:uniform-variance-bound}).
    Plugging the VC dimension, the uniform bound, and the second-moment bound into (\ref{eqn:EDelta-bound2}) of Lemma \ref{lm:kitagawa}, we obtain that $\bar{R}_{h,k}$ is bounded as
    \begin{align*}
        \bar{R}_{h,k} \leq & 2 \frac{2\bar{\kappa}M}{h \underline{f}} c^2 \frac{2\operatorname{VC}(\Pi_k)}{n} + c \sqrt{\frac{4M^2 \kappa_2}{h\underline{f}}\frac{2\operatorname{VC}(\Pi_k)}{n}} \\
        = & 8c^2 \frac{\bar{\kappa}M}{\underline{f}} \frac{\operatorname{VC}(\Pi_k)}{nh} + 2\sqrt{2}c M \sqrt{\frac{\kappa_2}{\underline{f}}} \sqrt{\frac{\operatorname{VC}(\Pi_k)}{nh}},
    \end{align*}
    where $c$ is the universal constant given by Lemma A.4 in \cite{kitagawa2018should}, which can be computed explicitly. Since the second term dominates the first term, $\bar{R}_{h,k}$ is bounded by $(C_v + o(1)) \sqrt{\frac{\operatorname{VC}(\Pi_k)}{nh}}$, where we redefine $c$ as $2\sqrt{2}c$. The probability of deviation from the mean can be bounded by using the Talagrand inequality:
\begin{align*}
    \mathbb{P}(\hat{R}_{h,k} - \mathbb{E}[\hat{R}_{h,k}] > \alpha) & \leq C \exp\left( -\frac{n \alpha^2}{\frac{C_1}{h} + \frac{C_2}{h} \sqrt{\frac{\operatorname{VC}(\Pi_k)}{nh^2}} + \frac{C_3 \alpha}{h}} \right) \\
    & \leq C \exp\left( -\frac{nh_{min} \alpha^2}{C + C \alpha} \right),
\end{align*}
where the second line follows from the construction that $\operatorname{VC}(\Pi_k) \leq nh^2$ and $h \geq (\log n)^2/n$. This tail bound, together with Lemma \ref{lm:tail-to-Op-rate}, shows that $\hat{R}_{h,k} - \mathbb{E}[\hat{R}_{h,k}] = O_p((nh_{min})^{-1/2})$ uniformly over $h$ and $k$. Similarly, for the term $\Delta_{h,k}$ on the right-hand side of (\ref{eqn:W-Q-bound}), we can first bound its mean by $\bar{R}_{h,k}$ by using he standard symmetrization argument \citep[see, for example, Lemma 2.3.1 in][]{wellner1996}. The deviation $\Delta_{h,k} - \mathbb{E}[\Delta_{h,k}]$ can again be bounded by applying Talagrand’s inequality.
Plugging the bounds on $\Delta_{h,k}$ and $\bar{R}_{h,k}$ into (\ref{eqn:W-Q-bound}), we obtain that
\begin{align*}
    W(\pi_k^*)  - \hat{Q}_{\hat{h},\hat{k}} & \mathbb{E}[\Delta_{h,k}] + \mathbb{E}[\hat{R}_{h,k}] + 2B(h) + \tau(h,k,n) + (\Delta_{h,k} - \mathbb{E}[\Delta_{h,k}] + \hat{R}_{h,k}-\mathbb{E}[\hat{R}_{h,k}]) \\
    & \leq 2 (C_v + o(1)) \sqrt{\frac{\operatorname{VC}(\Pi_k)}{nh}} + 2 B(h) + \tau(h,k,n) + O_p\left( (nh_{min})^{-1/2} \right).
\end{align*}
Since the choice of $(h,k)$ is arbitrary and the $O_p$-terms are uniform in $(h,k)$, we obtain that 
\begin{align*}
    W(\pi_\infty^*)  - \hat{Q}_{\hat{h},\hat{k}} & \leq \inf_{h \in \mathcal{H},k:\operatorname{VC}(\Pi_k) \leq nh^2} \left(  W(\pi_\infty^*) - W(\pi_k^*) + 2 (C_v + o(1)) \sqrt{\frac{\operatorname{VC}(\Pi_k)}{nh}} + 2 B(h) + \tau(h,k,n) \right) \\
    & + O_p\left( (nh_{min})^{-1/2} \right).
\end{align*}
Then, we deal with the remainder term $\hat{Q}_{\hat{h},\hat{k}} - W(\hat{\pi}_{\hat{h},\hat{k}})$. We have
\begin{align*}
    & \hat{Q}_{\hat{h},\hat{k}} - W(\hat{\pi}_{\hat{h},\hat{k}}) \\
    = & \hat{W}_{\hat{h}}(\hat{\pi}_{\hat{h},\hat{k}}) - W_{\hat{h}}(\hat{\pi}_{\hat{h},\hat{k}}) -\hat{R}_{\hat{h},\hat{k}}  - \tau(\hat{h},\hat{k},n) + \underbrace{W_{\hat{h}}(\hat{\pi}_{\hat{h},\hat{k}}) - W(\hat{\pi}_{\hat{h},\hat{k}}) -  B(\hat{h})}_{\leq 0 \text{ (Lemma \ref{lm:bias-bound})}} \\
    \leq & \Delta_{\hat{h},\hat{k}} -\hat{R}_{\hat{h},\hat{k}}  - \tau(\hat{h},\hat{k},n),
\end{align*}
where recall that $\Delta_{h,k} \equiv \sup_{\pi \in \Pi_k} |\hat{W}_h(\pi) - W_h(\pi)|$.
The right tail bound for the above term can be derived by using the union bound as follows: 
\begin{align} \label{eqn:union-bound}
& \mathbb{P}(\Delta_{\hat{h},\hat{k}} -\hat{R}_{\hat{h},\hat{k}} - \tau(\hat{h},\hat{k},n) > \alpha) \nonumber \\
    \leq & \sum_{k, h} \mathbb{P}(\Delta_{h,k} - \bar{\Delta}_{h,k} > (\alpha + \tau(h,k,n))/2) + \sum_{k, h} \mathbb{P}(\hat{R}_{h,k} - \bar{R}_{h,k} > (\alpha + \tau(h,k,n))/2),
\end{align}
where $\bar{\Delta}_{h,k} \equiv \mathbb{E}[\Delta_{h,k}]$ and $\bar{R}_{h,k} \equiv \mathbb{E}[R_{h,k}]$. We only need to analyze the first probability as the second one is similar. Using Talagrand inequality again, we obtain that 
\begin{align} \label{eqn:tail-bound}
    \mathbb{P}(\Delta_{h,k} - \bar{\Delta}_{h,k} > (\alpha + \tau(h,k,n))/2) & \leq C \exp\left( -\frac{n (\alpha + \tau(h,k,n))^2}{\frac{C_1}{h} + \frac{C_2}{h} \sqrt{\frac{\operatorname{VC}(\Pi_k)}{nh^2}} + \frac{C_3 (\alpha+\tau(h,k,n))}{h}} \right) \nonumber \\
    & \leq C \exp\left( -\frac{nh (\alpha + \tau(h,k,n))^2}{C + C(\alpha+\tau(h,k,n))} \right),
\end{align}
where the last line follows from the restrictions that $\operatorname{VC}(\Pi_k) \leq nh^2$.
We study two cases. First, in the case of $C > C(\alpha+\tau)$, the tail bound becomes
\begin{align*}
    \exp\left( -nh (\alpha + \tau(h,k,n))^2/C \right) & \leq \exp(-nh\alpha^2/C) \exp(-nh\tau(h,k,n)^2/C) \\
    & \leq \exp(-nh_{min}\alpha^2/C) \exp(-nh\tau(h,k,n)^2/C).
\end{align*}
Therefore, assuming that $\sum_{k,h} \exp(-nh\tau(h,k,n)^2/C)$ is finite and does not grow with $n$, we have 
\begin{align*}
    \sum_{k,h} \exp\left( -nh (\alpha + \tau(h,k,n))^2/C \right) \leq C \exp(-nh_{min}\alpha^2/C).
\end{align*}
In view of Lemma \ref{lm:tail-to-Op-rate}(i), this term is $O_p((nh_{min})^{-1/2})$.
In the case where $C \leq C(\alpha+\tau)$, we have
\begin{align*}
    \exp\left( -\frac{nh (\alpha + \tau(h,k,n))^2}{C(\alpha+\tau)} \right) & = \exp\left( -nh (\alpha + \tau(h,k,n))/C \right) \\
    & \leq \exp(-nh_{min}\alpha/C) \exp(-nh\tau(h,k,n)^2/C),
\end{align*}
assuming that $\tau \in (0,1)$. Therefore, this term is an exponential tail given the same condition. In view of Lemma \ref{lm:tail-to-Op-rate}(ii), this term is $O_p((nh_{min})^{-1})$. To summarize, we have shown that the positive part of $\hat{Q}_{\hat{h},\hat{k}} - W(\hat{\pi}_{\hat{h},\hat{k}})$
is of order $O_p((nh_{min})^{-1/2})$. Combining the leading term $W(\pi_\infty^*)  - \hat{Q}_{\hat{h},\hat{k}}$ and the remainder term $\hat{Q}_{\hat{h},\hat{k}} - W(\hat{\pi}_{\hat{h},\hat{k}})$, we have shown that
\begin{align*}
    W(\pi_\infty^*) - W(\hat{\pi}_{\hat{h},\hat{k}}) & \leq \inf_{h \in \mathcal{H},k:\operatorname{VC}(\Pi_k) \leq nh^2} \left(  W(\pi_\infty^*) - W(\pi_k^*) + 2 (C_v + o(1)) \sqrt{\frac{\operatorname{VC}(\Pi_k)}{nh}} + 2B(h) + \tau(h,k,n) \right) \\
    & + O_p\left( (nh_{min})^{-1/2} \right),
\end{align*}
under the condition that $\sum_{k,h} \exp(-nh\tau(h,k,n)^2/C)$ is finite and does not grow with $n$. One sufficient condition for the remainder $O_p\left( (nh_{min})^{-1/2} \right)$ to be smaller than the leading term is to take $h_{min} \geq n^{-1/(2r+1)}$. This proves part (1) of the theorem. To verify the feasibility of the choices of $\mathcal{H}$ and $\tau$ given below Theorem \ref{thm:fully}, notice that in the case of geometric $\mathcal{H}$, we have
\begin{align*}
    \sum_{h,k} \exp(-nh\tau(h,k,n)^2) \leq \sum_{j=1}^\infty \exp(-2 \log j) \sum_{k=1}^\infty \exp(-2 \log k).
\end{align*}
The case of exponential $\mathcal{H}$ is similar.

For part (2), the bias bound is estimated using $\hat{r}$ and $\hat{V}_\mu$, which are consistent given Lemma \ref{lm:r-hat-consistency}. The smallest bandwidth in the grid is $h_{min} = n^{-1/(2\hat{r}+1)}$. To differentiate the two procedures, we denote $\hat{\pi}^B$ for part (1) and $\hat{\pi}^{(1+\gamma)\hat{B}}$ for part (2).
The analysis is conducted conditionally on the event $\mathscr{E} \equiv \{\hat{r} = r, V_\mu < (1+\gamma)\hat{V}_\mu < (1+2\gamma)V_\mu \}$. By Lemma \ref{lm:r-hat-consistency}, $\mathbb{P}(\mathscr{E}) \rightarrow 1$. Also, we have $(1+\gamma)\hat{B}(h)\mathbf{1}_{\mathscr{E}} \leq (1+2\gamma)B(h)$, and $(W_{\hat{h}}(\hat{\pi}_{\hat{h},\hat{k}}) - W(\hat{\pi}_{\hat{h},\hat{k}}) -  (1+\gamma)\hat{B}(\hat{h}))\mathbf{1}_{\mathscr{E}} \leq 0$.
Decompose the welfare regret based on whether $\mathscr{E}$ holds:
\begin{align*}
    W(\pi_\infty^*) - W(\hat{\pi}^{(1+\gamma)\hat{B}}) = (W(\pi_\infty^*) - W(\hat{\pi}^{(1+\gamma)\hat{B}})) \mathbf{1}_{\mathscr{E}} + (W(\pi_\infty^*) - W(\hat{\pi}^{(1+\gamma)\hat{B}}))\mathbf{1}_{\mathscr{E}^c}.
\end{align*}
For the first term $(W(\pi_\infty^*) - W(\hat{\pi}^{(1+\gamma)\hat{B}})) \mathbf{1}_{\mathscr{E}}$, we can follow the proof for part (1) and show that it is bounded by
\begin{align*}
    & (W(\pi_\infty^*) - W(\hat{\pi}^{(1+\gamma)\hat{B}}))\mathbf{1}_{\mathscr{E}}\\
    \leq & \inf_{h \in \mathcal{H},k:\operatorname{VC}(\Pi_k) \leq nh^2} \left(  W(\pi_\infty^*) - W(\pi_k^*) + 2 (C_v + o(1)) \sqrt{\frac{\operatorname{VC}(\Pi_k)}{nh}} + (2+2\gamma)B(h) + \tau(h,k,n) \right).
\end{align*}
To show that the second term
$$
\bigl(W(\pi_\infty^*)-W(\hat\pi^{(1+\gamma)\hat B})\bigr)\mathbf 1_{\mathscr E^c}
$$
is $O_p\!\bigl(n^{-r/(2r+1)}\bigr)$, we in fact establish a stronger result:\footnote{This result corresponds to an interesting feature of convergence in probability: the indicator of an event with vanishing probability converges to zero at an arbitrarily fast rate, regardless of how slowly the probability of the event itself decays. This behavior is specific to convergence in probability and does not extend to stronger notions such as $L_p$ convergence.
} for every $\delta>0$,
$$
\bigl(W(\pi_\infty^*)-W(\hat\pi^{(1+\gamma)\hat B})\bigr)\mathbf 1_{\mathscr E^c}=o_p\!\bigl(n^{-\delta}\bigr).
$$
Equivalently, $n^{\delta}\bigl(W(\pi_\infty^*)-W(\hat\pi^{(1+\gamma)\hat B})\bigr)\mathbf 1_{\mathscr E^c}$ converges to zero in probability. For any $\varepsilon>0$,
$$
n^{\delta}\bigl|W(\pi_\infty^*)-W(\hat\pi^{(1+\gamma)\hat B})\bigr|\mathbf 1_{\mathscr E^c}>\varepsilon
\;\;\Longrightarrow\;\;
\mathscr E^c\ \text{occurs},
$$
and therefore
$$
\mathbb{P}\Bigl(n^{\delta}\bigl|W(\pi_\infty^*)-W(\hat\pi^{(1+\gamma)\hat B})\bigr|\mathbf 1_{\mathscr E^c}>\varepsilon\Bigr)
\le
\mathbb{P}(\mathscr E^c) \rightarrow 0,
$$
which confirms that
$$
\bigl|W(\pi_\infty^*)-W(\hat\pi^{(1+\gamma)\hat B})\bigr|\mathbf 1_{\mathscr E^c}=o_p\!\bigl(n^{-\delta}\bigr),\qquad \forall\delta>0.
$$
This proves part (2) of the theorem. Note that the above $O_p$ bound for $\mathbf{1}_{\mathscr E^c}$ cannot be upgraded to an $L_p$ (expected-regret) bound, because that would require a rate for $\hat r$, which is typically very slow.

\end{proof}

\begin{proof}[Proof of Corollary \ref{cor:holdout}]
    In the holdout procedure, redefine the term $\hat{R}_{h,k}$ as $\hat{W}^E_{h}(\hat{\pi}^E_{k}) - \hat{W}^T_{h}(\hat{\pi}^E_{k})$. Then, the leading term in the welfare regret can be derived in the same way as in Theorem \ref{thm:fully}. The only difference lies in the derivation of the remainder term:
    \begin{align*}
    & \hat{Q}_{\hat{h},\hat{k}}^{\textit{hold}} - W(\hat{\pi}_{\hat{h},\hat{k}}) \\
    = & \hat{W}^E_{\hat{h}}(\hat{\pi}^E_{\hat{h},\hat{k}}) - W_{\hat{h}}(\hat{\pi}^E_{\hat{h},\hat{k}}) -\hat{R}_{\hat{h},\hat{k}}  - \tau(\hat{h},\hat{k},n) + \underbrace{W_{\hat{h}}(\hat{\pi}^E_{\hat{h},\hat{k}}) - W(\hat{\pi}^E_{\hat{h},\hat{k}}) -  B(\hat{h})}_{\leq 0 \text{ (Lemma \ref{lm:bias-bound})}} \\
    \leq & \hat{W}^T_{\hat{h}}(\hat{\pi}^E_{\hat{h},\hat{k}}) - W_{\hat{h}}(\hat{\pi}^E_{\hat{h},\hat{k}}) - \tau(\hat{h},\hat{k},n).
\end{align*}
Due to the holdout structure, the tail probability can now be bounded by after conditioning on the estimating sample:
\begin{align*}
    & \mathbb{P}\left( \hat{W}^T_{\hat{h}}(\hat{\pi}^E_{\hat{h},\hat{k}}) - W_{\hat{h}}(\hat{\pi}^E_{\hat{h},\hat{k}}) - \tau(\hat{h},\hat{k},n) > \alpha \right) \\
    \leq & \sum_{h,k} \mathbb{E}\left[ \mathbb{P}\left( \hat{W}^T_{h}(\hat{\pi}^E_{h,k}) - W_{h}(\hat{\pi}^E_{h,k}) > \alpha + \tau(h,k,n) | S_{n_E}\right) \right],
\end{align*}
The probability $\mathbb{P}\left( \hat{W}^T_{h}(\hat{\pi}^E_{h,k}) - W_{h}(\hat{\pi}^E_{h,k}) > \alpha + \tau(h,k,n) | S_{n_E}\right)$ can be bounded using Bernstein inequality \citep[e.g., Proposition 2.14 in][]{wainwright_2019}. Then we can proceed as in the proof of Theorem \ref{thm:fully}.
\end{proof}

\begin{proof}[Proof of Theorem \ref{thm:dr}]
    For simplicity, we consider the case where \( B(h) \) is known, as handling the estimated case follows the same approach as in the proof of Theorem \ref{thm:fully}(2) by conditioning on the event \( \mathscr{E} \). Similarly, we can treat the two events in Assumption \ref{ass:rates-nonparametric-estimates}(ii)–(iii) as occurring with probability one.

    The structure of the proof follows that of Theorem \ref{thm:fully}. The extra work is to bound the difference between $\hat{R}^{\operatorname{DD},\ell}_{h,k}$ with the infeasible Rademacher complexity $\tilde{R}^{\operatorname{DD},\ell}_{h,k}$ constructed using the true nuisance parameters
    \begin{align*}
        \tilde{R}^{\operatorname{DD},\ell}_{h,k} \equiv \mathbb{E} \left[ \sup_{\pi \in \Pi_k} \frac{L}{n} \sum_{i \in I_\ell} 2\text{Rad}_i \cdot \Gamma_h(Y_i,T_i,X_i;\pi;g,m) \Bigg| S_\ell \right], \tilde{R}^{\operatorname{DD}}_{h,k} \equiv \frac{1}{L} \sum_{\ell=1}^L \tilde{R}^{\operatorname{DD},\ell}_{h,k}.
    \end{align*}
    The difference between $W(\pi^*_k)$ and $\hat{Q}^{\operatorname{DD}}_{h,k}$ is bounded as
    \begin{align*}
W(\pi^*_k) - \hat{Q}^{\operatorname{DD}}_{h,k} & = W(\pi^*_k) - \hat{W}^{\operatorname{DD}}_{h}(\hat{\pi}_{h,k}) + \hat{R}^{\operatorname{DD}}_{h,k} + B(h) + \tau(h,k,n) \\
    & \leq W(\pi^*_k) - \hat{W}^{\operatorname{DD}}_{h}(\pi^*_k) + \hat{R}^{\operatorname{DD}}_{h,k} + B(h) + \tau(h,k,n) \\
    & \leq \tilde{W}^{\operatorname{DD}}_{h}(\pi^*_k) - \hat{W}^{\operatorname{DD}}_{h}(\pi^*_k) + \tilde{R}^{\operatorname{DD}}_{h,k} + \hat{R}^{\operatorname{DD}}_{h,k} + 2B(h) + \tau(h,k,n),
\end{align*}
where $\tilde{W}^{\operatorname{DD}}_{h}$ is the infeasible welfare constructed using the true nuisance parameters
\begin{align*}
    \tilde{W}^{\operatorname{DD}}_{h}(\pi) \equiv \frac{1}{n}\sum_{i=1}^n \Gamma_h(Y_i,T_i,X_i;\pi;g,m).
\end{align*}
Given Lemma \ref{lm:DR-calculation}, the mean of $\tilde{W}^{\operatorname{DD}}_{h}(\pi^*_k) - \hat{W}^{\operatorname{DD}}_{h}(\pi^*_k)$ is of order $o(1)B(h) + o(n^{-(\rho_g + \rho_m)})$. The bound on $\mathbb{E}[\tilde{R}^{\operatorname{DD}}_{h,k}] = \mathbb{E}[\tilde{R}^{\operatorname{DD},\ell}_{h,k}]$ is derived in Lemma \ref{lm:tilde-R}, as $(C'_v+o(1))\sqrt{L\frac{\operatorname{VC}(\Pi_k)}{nh}}$. The deviations $\tilde{R}^{\operatorname{DD}}_{h,k}-\mathbb{E}[\tilde{R}^{\operatorname{DD}}_{h,k}]$ from the mean can be negligible in the same way as in the proof of Theorem \ref{thm:fully} by using the Talagrand inequality. The relevant class of functions is
\begin{align*}
    \left\{(Y,T,X,\text{Rad})\mapsto 2\text{Rad}\Gamma_h(Y,T,X;\pi;g,m),\pi\in\Pi_k\right\}.
\end{align*}
A uniform bound of this class of functions is $2\left(M+\frac{2\bar{\kappa} M}{h\underline{f}}\right) \asymp 1/h$, and a second-moment bound is $4\left(5M^2 + \frac{4M^2\kappa_2}{h\underline{f}}\right)\asymp 1/h$. The deviation from the mean can be bounded as
\begin{align*}
    \mathbb{P}\left(\tilde{R}^{\operatorname{DD}}_{h,k}-\mathbb{E}[\tilde{R}^{\operatorname{DD}}_{h,k}]>\alpha\right) &\leq C'\exp\left(-\frac{n\alpha^2}{\frac{C_1'}{h} + \frac{C_2'}{h}\sqrt{\frac{\operatorname{VC(\Pi_k)}}{nh^2}}+\frac{C_3'\alpha}{h}}\right) \\
    &\leq C'\exp\left(-\frac{nh_{min}\alpha^2}{C'+C'\alpha}\right)
\end{align*}
where $C_1'$, $C_2'$ and $C_3'$ involve constants in the above uniform and second-moment bounds. Note that for the bound on $\mathbb{E}[\tilde{R}^{\operatorname{DD},\ell}_{h,k}]$, we take the first bound (\ref{eqn:EDelta-bound1}) in Lemma \ref{lm:kitagawa}. Applying Lemma \ref{lm:tail-to-Op-rate}, we obtain that $\tilde{R}^{\operatorname{DD}}_{h,k}-\mathbb{E}[\tilde{R}^{\operatorname{DD}}_{h,k}]=O_p((nh_{min})^{-1/2})$.

The next task is to bound the feasible Rademacher complexity term (constructed using the estimated nuisance parameters) $\hat{R}^{\operatorname{DD},\ell}_{h,k}$:
    \begin{align*}
        \hat{R}^{\operatorname{DD},\ell}_{h,k}
        & = \underbrace{\sup_{\pi \in \Pi_k} \frac{L}{n} \sum_{i \in I_\ell} 2 \text{Rad}_i \Gamma_h(Y_i,T_i,X_i;\pi;g,m) }_{ \tilde{R}^{\operatorname{DD}}_{h,k}} \\
        & \quad + \sup_{\pi \in \Pi_k} \frac{L}{n} \sum_{i \in I_\ell} 2 \text{Rad}_i (\Gamma_h(Y_i,T_i,X_i;\pi;\hat{g}_\ell,\hat{m}_\ell) - \Gamma_h(Y_i,T_i,X_i;\pi;g,m)) .
    \end{align*}
    The term $\hat{R}^{\operatorname{DD},\ell}_{h,k}$ is already taken care of. To bound the remaining term, notice that, in view of Lemma \ref{lm:DR-calculation}, we can write
    \begin{align*}
        & \sup_{\pi \in \Pi_k} \frac{L}{n} \sum_{i \in I_\ell} 2 \text{Rad}_i (\Gamma_h(Y_i,T_i,X_i;\pi;\hat{g}_\ell,\hat{m}_\ell) - \Gamma_h(Y_i,T_i,X_i;\pi;g,m)) \\
        \leq & \sup_{\pi \in \Pi_k} \frac{L}{n} \sum_{i \in I_\ell} 2 \text{Rad}_i \Gamma_{1h}  +\sup_{\pi \in \Pi_k} \frac{L}{n} \sum_{i \in I_\ell} 2 \text{Rad}_i \Gamma_{2h}+ \sup_{\pi \in \Pi_k} \frac{L}{n} \sum_{i \in I_\ell} 2 \text{Rad}_i \Gamma_{3h} ,
    \end{align*}
    where $\Gamma_{1h}$, $\Gamma_{2h}$, and $\Gamma_{3h}$ are given in Lemma \ref{lm:DR-calculation}, and for simplicity, their expressions are suppressed. Denote the above three terms on the right-hand side as $\Delta_{1h\ell}$, $\Delta_{2h\ell}$, and $\Delta_{3h\ell}$ respectively.
    
    By Lemma \ref{lm:Rhat-R}, the condition mean (given the estimators $\hat{g}_\ell$ and $\hat{m}_\ell$) of the above sum of three terms is bounded by
    \begin{align*}
        & \left(C_1 \frac{\lVert \hat{g}_\ell - g \rVert_\infty \lVert \hat{m}_\ell - m \rVert_\infty}{\sqrt{h}} + C_2 \frac{\lVert \hat{g}_\ell - g \rVert_\infty}{\sqrt{h}} + C_3 \frac{\lVert \hat{m}_\ell - m \rVert_\infty}{\sqrt{h}} \right) \sqrt{\frac{\operatorname{VC}(\Pi_k)}{(n/L)h}}\\
        = & o(1)\sqrt{L\frac{\operatorname{VC}(\Pi_k)}{nh}},
    \end{align*}
    where the sum in the parenthesis is $o(1)$ because we assume that $n^{-\rho_g},n^{-\rho_m} = o(n^{-r/(4r+2)})$ and $h \geq n^{-1/(2r+1)}$. 
    
    For the first term, its uniform bound is $\frac{2\kappa}{h}\|\hat{g}_\ell-g\|_\infty\|\hat{m}_\ell-m\|_\infty$, and the second moment bound is $\frac{4\kappa_2}{h\underline{f}}\|\hat{m}_\ell-m\|_\infty^2\|\hat{g}_\ell-h\|_\infty^2$. For the second term, the uniform bound is $\frac{4M\kappa}{h}\|\hat{g}_\ell-g\|_\infty$, and the second moment bound is $\frac{16M^2\kappa_2}{h\underline{f}}\|\hat{g}_\ell-g\|_\infty^2$. For the third term, the uniform bound is $2\left(\frac{\kappa}{h\underline{f}}-1\right)\|\hat{m}_\ell-m\|_\infty$, and the second moment bound is $4\left(\frac{\kappa_2}{h\underline{f}}-1\right)\|\hat{m}_\ell-m\|_\infty^2$. Then, the deviations from the mean of the three terms, $\Delta_{jh\ell}-\mathbb{E}[\Delta_{jh\ell}],j=1,2,3$, can be bounded, respectively, using the Talagrand inequality.
    \begin{align*}
        \mathbb{P}\left(\Delta_{jh\ell}-\mathbb{E}[\Delta_{jh\ell}]>\alpha \mid \hat{g}_\ell,\hat{m}_\ell\right) \leq C'_{j\ell}\exp\left(-\frac{n\alpha^2}{\frac{C'_{1jh\ell}}{h}+\frac{C'_{2jh\ell}}{h}\sqrt{\frac{\operatorname{VC}(\Pi_k)}{nh^2}}+\frac{C'_{3jh\ell}\alpha}{h}}\right),
    \end{align*}
for $j=1,2,3$, and the constant terms depend on the constants in the above uniform bounds and second-moment bounds. Same as the argument for the proof of Theorem \ref{thm:fully}, given the estimators $\hat{g}_\ell$ and $\hat{m}_\ell$, $\Delta_{jh\ell}-\mathbb{E}[\Delta_{jh\ell}] = O_p((nh_{min})^{-1/2})=O_p(n^{-r/(2r+1)})$ since $h_{min}= n^{-1/(2r+1)}$.

    Putting the above results together, we obtain the leading term for the double-debiased welfare regret:
    \begin{align*}
    & W^*(\Pi_\infty) - \hat{Q}^{\operatorname{DD}}_{h,k} \\
    \leq & W^*(\Pi_\infty) - W^*(\Pi_k) + 2 (C_v' + o(1)) \sqrt{L \frac{\operatorname{VC}(\Pi_k)}{nh}} + (2+o(1))B(h) + O_p(n^{-r/(2r+1)}).
    \end{align*}
    The remainder term $\hat{Q}_{h,k}^{\operatorname{DD}} - W(\hat{\pi}^{\operatorname{DD}})$ can be bounded following the steps in the proof of Theorem \ref{thm:fully}. The only difference is that we must take care of the difference $\hat{W}^{\operatorname{DD}}_{h,k} - \tilde{W}^{\operatorname{DD}}_{h,k}$, which converts to bounding the following tail.
    \begin{align*}
        & \sum_{h,k} \mathbb{P} \left( \hat{W}^{\operatorname{DD}}_{h,k} - \tilde{W}^{\operatorname{DD}}_{h,k} - \mathbb{E}[\hat{W}^{\operatorname{DD}}_{h,k} - \tilde{W}^{\operatorname{DD}}_{h,k}] >\alpha + \tau(h,k,n) \right) \\
        \leq & \sum_{j=1}^3 \sum_{h,k} \mathbb{P} \left( \hat{\Gamma}_{jh} - \mathbb{E}[\hat{\Gamma}_{jh}] > (\alpha + \tau(h,k,n))/3 \right),
    \end{align*}
    where $\hat{\Gamma}_{jh}$ is the sample mean of $\Gamma_{jh},j=1,2,3,$ as defined in Lemma \ref{lm:DR-calculation}. These tail probabilities are controlled in the same way using Talagrand's inequality, as
    \begin{align*}
        \mathbb{P}\left(\hat{\Gamma}_{jh}-\mathbb{E}[\hat{\Gamma}_{jh}]>(\alpha+\tau(h,k,n))/3\right) &\leq C''_j\exp\left(-\frac{n(\alpha+\tau(h,k,n))^2}{\frac{C''_{1jh}}{h}+\frac{C''_{2jh}}{h}\sqrt{\frac{\operatorname{VC}(\Pi_k)}{nh^2}}+\frac{C''_{3jh}(\alpha+\tau(h,k,n))}{h}}\right) \\
        &\leq C''_j\exp\left(-\frac{n(\alpha+\tau(h,k,n))^2}{C''_j + C''_j(\alpha+\tau(h,k,n))}\right).
    \end{align*}
    Taking the same argument as in the proof of Theorem \ref{thm:fully}, under condition (\ref{eqn:tau-requirement}), we obtain that the remainder term $\hat{Q}_{h,k}^{\operatorname{DD}}-W(\hat{\pi}^{\operatorname{DD}})$ is of order $O_p((nh_{min})^{-1/2})$. Combining the leading term and the remainder term completes the proof.
\end{proof}

\begin{lemma} \label{lm:monotone-sieve-approx}
    Assume that the marginal density of $X$ is bounded. For the monotone policy class introduced by Section \ref{sec:monotone-sieve}, the sieve approximation rate in condition (\ref{eqn:deficiency-rate}) can be taken to be $\alpha_k = O(k^{-1})$.
\end{lemma}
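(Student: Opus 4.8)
The plan is to approximate each coordinate function of a given policy by its continuous piecewise-linear interpolant on the uniform grid $\{k'/k:k'=0,\dots,k\}$, which is exactly the function spanned by the tent functions $\varphi_{k,0},\dots,\varphi_{k,k}$. Fix an arbitrary $\pi\in\Pi_\infty$ and write $\pi(x)=\sum_{p=1}^{d_X}h_p(x_p)$ with each $h_p$ increasing, bounded, and Lipschitz with constant $L_p$; after a harmless affine renormalization of each coordinate (which only rescales $L_p$) we may assume $\mathcal X_p\subseteq[0,1]$, extending $h_p$ to all of $[0,1]$ by a Lipschitz extension if necessary. I would then set $\theta_{p,k'}\equiv h_p(k'/k)$ and let $\pi_{k,\theta}$ be the corresponding element of the sieve.

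The first step is to check feasibility, i.e.\ that $\pi_{k,\theta}\in\Pi_k$. Since $\varphi_{k,j}(k''/k)=\mathbf 1\{j=k''\}$, the map $x_p\mapsto\sum_{j=0}^k\theta_{p,j}\varphi_{k,j}(x_p)$ is precisely the piecewise-linear interpolant of $h_p$ at the grid nodes; and because $h_p$ is increasing, the coefficient vector $\theta_p=(h_p(0),\dots,h_p(1))$ is nondecreasing, so $E\theta_p\ge0$ holds coordinatewise. Hence $\pi_{k,\theta}$ obeys all the defining constraints of $\Pi_k$.

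The second step is a coordinatewise interpolation estimate. On each mesh cell $[k'/k,(k'+1)/k]$ the error $g_p\equiv h_p-\sum_j\theta_{p,j}\varphi_{k,j}$ vanishes at both endpoints, and it is Lipschitz with constant at most $2L_p$ (the interpolant has slope bounded by $L_p$ in modulus). Therefore $|g_p(x_p)|\le 2L_p\min(x_p-k'/k,(k'+1)/k-x_p)\le L_p/k$ on that cell, which gives $\|h_p-\sum_j\theta_{p,j}\varphi_{k,j}\|_\infty\le L_p/k$. Summing over $p$ yields the uniform bound $\|\pi-\pi_{k,\theta}\|_\infty\le k^{-1}\sum_{p=1}^{d_X}L_p$, hence
\[
\sup_{\pi\in\Pi_\infty}\inf_{\pi_k\in\Pi_k}\mathbb E\,|\pi(X)-\pi_k(X)|\le\sup_{\pi\in\Pi_\infty}\|\pi-\pi_{k,\theta}\|_\infty\le\Big(\textstyle\sum_{p=1}^{d_X}L_p\Big)k^{-1},
\]
so $\alpha_k=O(k^{-1})$. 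The boundedness of the density of $X$ is only needed to guarantee the expectation is well defined (and is what one would invoke for an $L_1(P_X)$-version of the estimate).

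I do not anticipate a genuine obstacle: this is a classical linear-interpolation bound. The points that require a little care are (i) verifying that the interpolant of a monotone function is again feasible for the sieve (monotone coefficients $\Rightarrow E\theta_p\ge0$); (ii) obtaining the $O(1/k)$ rate from mere Lipschitz---rather than $C^2$---smoothness, which forces the ``vanishes at the endpoints and is Lipschitz'' argument above in place of a Taylor remainder; and (iii) the bookkeeping needed to normalize $\mathcal X_p$ to $[0,1]$ and to make sense of $h_p$ at grid nodes that might fall outside the original support (handled by a Lipschitz extension).
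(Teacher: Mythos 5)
Your proposal is correct under the reading that every $h_p$ in $\Pi_\infty$ is Lipschitz with a common constant $L_p$, but it takes a genuinely different route from the paper, and the difference matters. You prove a \emph{sup-norm} interpolation bound: the error of the piecewise-linear interpolant on each cell is controlled by $L_p/k$ via the Lipschitz property, so the $L_1(P_X)$ bound follows for free and the bounded-density hypothesis plays no real role. The paper instead proves an \emph{$L_1$} bound directly: on each cell $[(k'-1)/k,k'/k]$ both $\pi^*$ and its interpolant are monotone and agree at the endpoints, so the pointwise error is at most the oscillation $\pi^*(k'/k)-\pi^*((k'-1)/k)$; integrating against the density bounded by $\bar f_X$ contributes a factor $\bar f_X/k$ per cell, and the oscillations telescope to the total range of $\pi^*$, which is bounded because the treatment space is compact. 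That argument uses only monotonicity and boundedness — no Lipschitz constant — which matches the displayed definition of $\Pi_\infty$ in Section 5.1 (``$h_p$ increasing and bounded'') and explains why the lemma's hypothesis is a bounded density rather than a uniform Lipschitz bound. The distinction is not cosmetic: for a monotone but non-Lipschitz component such as $h(x)=\sqrt{x}$ on $[0,1]$, the sup-norm interpolation error on the first cell is of order $k^{-1/2}$, so your uniform bound $\sum_p L_p/k$ degenerates ($L_p=\infty$), yet the paper's telescoping $L_1$ argument still delivers $O(k^{-1})$. Your feasibility check ($E\theta_p\ge 0$ because the interpolated coefficients of a monotone function are nondecreasing) is a useful step that the paper leaves implicit, and your Lipschitz-extension bookkeeping is fine; but if you want the lemma as stated for the class as displayed, you should replace the Lipschitz cell bound by the oscillation-plus-telescoping bound and actually use the bounded density where the paper does.
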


\begin{proof}[Proof of Lemma \ref{lm:monotone-sieve-approx}]
Without loss of generality, let $d_X=1$. The case with $d_X>1$ can be proved in the same way. 
For simplicity, assume $\pi^*=\sum_{p=1}^{d_X} h_p^*(x_p)$, the optimal policy in the global policy class $\Pi_\infty$, exists. That is, $\pi^*$ maximize $W(\pi)$ for $\pi \in \Pi_\infty$. Let $\pi^*_k \in \Pi_k$ denote the piecewise linear approximation of $\pi^*$. That is, on each endpoint $k'/k$, we have $\pi^*_k = \pi^*$, and $\pi_k^*$ is linear on each interval $[(k'-1)/k,k'/k], \forall k^\prime=1,\dots,k$. Denote $\bar{f}_X$ as the upper bound on the marginal density of $X$, $f_X$. The $L_1$ distance between $\pi^*$ and $\pi^*_k$ is bounded as
    \begin{align*}
    \E\left[|\pi^*(X) -\pi_k^*(X)|\right] & = \sum_{k^\prime=1}^k \int_{\frac{k^\prime-1}{k}}^\frac{k^\prime}{k} |\pi^*\left(x\right)-\pi^*\left(x\right)|f(x)dx \\
     &\leq \sum_{k^\prime=1}^k \left(\pi^*\left(\frac{k^\prime}{k}\right)-\pi^*\left(\frac{k^\prime-1}{k}\right)\right)\int_{\frac{k^\prime-1}{k}}^\frac{k^\prime}{k} f_X(x)dx \text{ (monotonicity of $\pi^*,\pi^*_k$)} \\
     &\leq \frac{\bar{f}_X}{k}\sum_{k^\prime=1}^k \left(\pi^*\left(\frac{k^\prime}{k}\right)-\pi^*\left(\frac{k^\prime-1}{k}\right)\right) \text{ ($f_X$ bounded)} \\
     & \leq \frac{\bar{f}_X}{k} \left(\pi^*\left(\sup \text{supp}(X)\right)-\pi^*\left(\inf \text{supp}(X)\right)\right) \text{ (monotonicity of $\pi^*,\pi^*_k$)},
     \end{align*}
     where in the last line, $\sup\operatorname{supp}(X)$ and $\inf\operatorname{supp}(X)$ denote the upper and lower endpoints of the support of $X$, respectively. This term on the right-hand side is $O(1/k)$ because $\pi^{*}\in\Pi_{\infty}$ is bounded, given that the treatment variable has compact support by Assumption \ref{ass:bounded}.

\end{proof}

\section{Additional Theoretical Discussion}

\subsection{Sufficient conditions for Assumption \ref{ass:rates-nonparametric-estimates}} \label{sec:nonparametric-estimators}
In this section, we describe concrete estimators for $g$ and $m$ and show how the high-level Assumption \ref{ass:rates-nonparametric-estimates} in Section \ref{sec:dr} can be satisfied.

\paragraph{Conditional density estimator} Consider the local polynomial conditional density estimator, as proposed by \cite{cattaneo2024boundary}:
\begin{align*}
    \hat{f}(t|x)=e_1'\hat{\beta}(t|x),\quad \hat{\beta}(t|x)=\argmin_{u\in \mathbb{R}^{p+1}} \sum_{i=1}^n\left(\hat{F}(T_i|x)-p(T_i-t)'u\right)^2K_h(T_i;t),
\end{align*}
where $p$ is the order of the polynomial basis $p(t) = (1,t/1!,t^2/2!,\cdots,t^p/p!)'$, $e_1$ is the unit vector, $K_h(T_i,t)=K((T_i-t)/h)/h$ for some kernel function $K$, and 
\begin{align*}
    \hat{F}(t|x)=e_0'\hat{\gamma}(t|x), \quad \hat{\gamma}(t|x)=\argmin_{v\in \mathbb{R}^{q_{d_X}+1}}\sum_{i=1}^n\left(\mathbf{1}(T_i\leq t)-q(X_i-x)'v\right)^2L_b(X_i;x),
\end{align*}
where $q(x)$ denotes the $(q_{d_X}+1)$-dimensional vector of terms $x^m/m!=x_1^{m_1}\cdots x_{d_X}^{m_{d_X}}$ for $x = (x_1,\cdots,x_m)$, $m=(m_1,\cdots,m_{d_X})$ with $|m|=m_1+\dots+m_{d_X}\leq q$, $q_{d_X}=({d_X}+q)!/({d_X}!q!)-1$, and $L_b(X_i,x)=L((X_i-x)/b)/b^{d_X}$ for some multivariate kernel function $L$.

According to Theorem 1 in \cite{cattaneo2024boundary}, if the joint density of $(T,X)$, $f(t,x)$, is continuous and bounded away from zero; if the conditional density $f(t|x)$ exists, is continuous, and has continuous $p$-th partial derivatives with respect to $x$; and if the kernel $K$ is symmetric, Lipschitz continuous, and support on $[0,1]$, then, as $h\to0$ and $nh^{1+d}/\log n\to\infty$,
\begin{align*}
    \sup_{t\in\mathcal{T},x\in\mathcal{X}}\left|\hat{f}(t|x)-f(t|x)\right| = h^p + \sqrt{\frac{\log n}{nh^{1+d_X}}}, 
\end{align*}
which is of order $O_p\left(\frac{\log n}{n}\right)^{\frac{p}{1+d_X+2p}}$ when choosing $h=\left(\frac{\log n}{n}\right)^{\frac{1}{1+d_X+2p}}$. By Assumption \ref{ass:bounded}, $f(t|x)$ is bounded below by some constant $c>0$. The uniform rate on $\hat{f}(t|x)$ can be transformed to the uniform rate of $\hat{g}(t,x) = 1/\hat{f}(t|x)$:
$$
    \sup_{t \in \mathcal{T},x \in \mathcal{X}} \left| \hat{g}(t,x) - g(t,x)\right| \leq  \frac{\sup_{t\in\mathcal{T},x\in\mathcal{X}}\left|\hat{f}(t|x)-f(t|x)\right|}{c (c - \lVert \hat{f} - f\rVert_\infty)} = O_p\left(\frac{\log n}{n}\right)^{\frac{p}{1+d_X+2p}},
$$
where we have used the fact that $(1+o_p(1))^{-1} = O_p(1)$ from Section 2.2 in \cite{vandervaart1998asymptotic}.

Therefore, when the smoothness of $f(t|x)$ satisfies that $p> \frac{r(1+d_X)}{2(1+r)}$, the rate requirement on $\hat{g}$ in Assumption \ref{ass:rates-nonparametric-estimates}(i) can be satisfied. Also, $\hat{g}$ is bounded with probability approaching one, as required by Assumption \ref{ass:rates-nonparametric-estimates}(ii), by the uniform consistency.

\paragraph{Conditional mean estimator} Consider the partitioning-based estimator for $m$ as proposed by \cite{CattaneoFengShigida2024uniform}. Assume that $m$ is $p$ times continuously differentiable in $(t,x)$. Here, $p$ may be smaller than $r$ in Assumption \ref{ass:smoothness-r} as $r$ only characterizes the smoothness of $m$ in $t$. 

The estimator is constructed as follows. First, partition the support of $(T,X)$ into hyper-rectangles of side length at most $h$. On each cell, form $p-\text{th}$-order piecewise-polynomial basis functions $p_k(t,x)\quad(k=1,\dots,K)$, with support confined to a few neighbouring cells.  The dictionary size grows like $K\;\asymp\;h^{-(1+d_X)}$. Then, we estimate the coefficients by least squares
   $$
   \hat{\beta}\in\arg\min_{b\in\mathbb{R}^K}\;
      \sum_{i=1}^n\bigl(Y_i-p(Z_i)^{\!\top}\!b\bigr)^2,
      \qquad Z_i=(T_i,X_i).
   $$
The plug-in predictor is $\hat{m}(t,x)\;=\;p(t,x)^{\!\top}\hat{\beta}$.

Under Assumptions 1–6 in \cite{CattaneoFengShigida2024uniform}, choosing $h\asymp\bigl(\frac{\log n}{n}\bigr)^{\frac{1}{2p+1+d_X}}$ implies, by their Corollary 1, that
$$
\sup_{t,x}\lvert\hat m(t,x)-m(t,x)\rvert=O_p\!\bigl(\tfrac{\log n}{n}\bigr)^{\frac{p}{2p+1+d_X}}.
$$
Therefore, when the smoothness parameter of $m(t,x)$ satisfies $p>\frac{r(1+d_X)}{2(1+r)}$, the rate requirement in Assumption \ref{ass:rates-nonparametric-estimates}(i) holds. Moreover, uniform consistency ensures that $\hat m$ is bounded with probability approaching one, satisfying Assumption \ref{ass:rates-nonparametric-estimates}(ii). Finally, such partitioning-based estimators are of bounded variation for each fixed partition, fulfilling Assumption \ref{ass:rates-nonparametric-estimates}(iii).

\subsection{Searching across $k \in \mathbb{N}$} \label{sec:k-infty}

The data-driven procedure restricts attention to values of $k$ satisfying $\operatorname{VC}(\Pi_k)\le nh^{2}$. Although an exhaustive search over all $k\ge1$ (until $\infty$) is impractical, it highlights an interesting theoretical point worth discussing.

The point at which the proof of Theorem \ref{thm:fully} breaks down is the tail bound in the last line of (\ref{eqn:tail-bound}). Under the restriction $\operatorname{VC}(\Pi_k) \le nh^2$, the denominator in the exponential bound simplifies to $C + C(\alpha + \tau)$, with the term $\sqrt{\frac{\operatorname{VC}(\Pi_k)}{nh^2}}$ absorbed into the constant. However, when $k$ is allowed to grow without bound, the term $\sqrt{\frac{\operatorname{VC}(\Pi_k)}{nh^2}}$ may become dominant in the denominator. In that case, the exponential tail takes the following form:
\begin{align*}
    & \exp \left( - \frac{nh (\alpha + \tau(h,k,n))^2}{C \sqrt{\frac{\operatorname{VC}(\Pi_k)}{nh^2}}} \right) \\
    = & \exp \left( - \frac{nh \alpha^2}{C \sqrt{\frac{\operatorname{VC}(\Pi_k)}{nh^2}}} \right) \underbrace{\exp \left( - \frac{2nh \alpha\tau(h,k,n)}{C \sqrt{\frac{\operatorname{VC}(\Pi_k)}{nh^2}}} \right)}_{\leq 1} \exp \left( - \frac{nh \tau(h,k,n)^2}{C \sqrt{\frac{\operatorname{VC}(\Pi_k)}{nh^2}}} \right).
\end{align*}
Now we integrate the tail probability over $\alpha$ to pin down the rate for the first term on the right-hand side:
\begin{align*}
    \int_0^\infty \mathbb{P}(\Delta_{h,k} - \bar{\Delta}_{h,k} - \tau(h,k,n)/2 > \alpha) d \alpha \leq \left[\int_0^\infty \exp \left( - \frac{nh \alpha^2}{C \sqrt{\frac{\operatorname{VC}(\Pi_k)}{nh^2}}} \right) d\alpha \right] \exp \left( - \frac{nh \tau(h,k,n)^2}{C \sqrt{\frac{\operatorname{VC}(\Pi_k)}{nh^2}}} \right).
\end{align*}
By a standard integration argument \citep[see, e.g., Problem 12.1 in][]{devroye1996probabilistic}, the integral on the right-hand side is bounded as
\begin{align*}
    \int_0^\infty \exp \left( - \frac{nh \alpha^2}{C \sqrt{\frac{\operatorname{VC}(\Pi_k)}{nh^2}}} \right) d\alpha \leq C \frac{\operatorname{VC}(\Pi_k)^{1/4}}{n^{3/4}h} \leq C \frac{\operatorname{VC}(\Pi_k)^{1/4}}{n^{3/4}h_{min}}.
\end{align*}
Therefore, we have
\begin{align} \label{eqn:k-infty-calculation1}
    \mathbb{E}[\Delta_{\hat{h},\hat{k}} - \bar{\Delta}_{\hat{h},\hat{k}} - \tau(\hat{h},\hat{k},n)/2] & \leq \int_0^\infty \mathbb{P}(\Delta_{\hat{h},\hat{k}} - \bar{\Delta}_{\hat{h},\hat{k}} - \tau(\hat{h},\hat{k},n)/2>\alpha) d\alpha \nonumber \\
    & \leq \sum_{h \in \mathcal{H}, k \geq 1} \int_0^\infty \mathbb{P}(\Delta_{h,k} - \bar{\Delta}_{h,k} - \tau(h,k,n)/2 > \alpha) d \alpha \nonumber \\
    & \leq C \sum_{h \in \mathcal{H}, k \geq 1} \frac{\operatorname{VC}(\Pi_k)^{1/4}}{n^{3/4}h_{min}} \exp \left( - \frac{n^{3/2}h^2 \tau(h,k,n)^2}{C \sqrt{\operatorname{VC}(\Pi_k)}} \right).
\end{align}
where the first line is a standard representation of the expectation by integrating the tail probability, and the second line uses the union bound. For simplicity, we treat $h_{min}$ as non-random as it can be estimated from an independent sample. Observe that $h_{min} \geq n^{-1/(2\hat{r}+1)} \geq n^{-1/3}$ since the order of smoothness $r$ is at least 1 by Assumption \ref{ass:smoothness-r}. Then we have $(n^{3/4}h_{min})^{-1} = o((nh_{min})^{-1/2})$ because
\begin{align*}
    \frac{n^{3/4}h_{min}}{\sqrt{nh_{min}}} = \sqrt{n^{1/2}h_{min}} \rightarrow \infty.
\end{align*}
Therefore, the right-hand side of (\ref{eqn:k-infty-calculation1}) can be made smaller than $(nh_{min})^{-1/2}$ provided that the following double sum is finite and stays bounded as $n$ grows.
\begin{align*}
    \sum_{h \in \mathcal{H}, k \geq 1} \operatorname{VC}(\Pi_k)^{1/4}\exp \left( - \frac{n^{3/2}h^2 \tau(h,k,n)^2}{C \sqrt{\operatorname{VC}(\Pi_k)}} \right).
\end{align*}
Using once again the fact that $n^{1/2}h_{min} \rightarrow \infty$, the above double sum is bounded by
\begin{align*}
    \sum_{h \in \mathcal{H}, k \geq 1} \operatorname{VC}(\Pi_k)^{1/4}\exp \left( - \frac{nh \tau(h,k,n)^2}{C \sqrt{\operatorname{VC}(\Pi_k)}} \right).
\end{align*}
This sum can be finite and bounded as $n$ grows if we choose 
\begin{align*}
    \tau(k,h,n) = \sqrt{\frac{\operatorname{VC}(\Pi_k)^{1/2}}{nh} (\lambda_k\log (\operatorname{VC}(\Pi_k)) + \lambda_k \log k - \lambda'_h \log h)},
\end{align*}
for any sequences $\lambda_k,\lambda'_h \rightarrow \infty$. Consider the bandwidth set $\mathcal{H} \subset \{h = j^{-\rho}, j \in \mathbb{N}_+\}$. Then the double sum is bounded above by
\begin{align*}
    & \sum_{h \in \mathcal{H}, k \geq 1} \operatorname{VC}(\Pi_k)^{1/4}\exp \left( - \frac{nh \frac{\operatorname{VC}(\Pi_k)^{1/2}}{nh} (\lambda_k\log (\operatorname{VC}(\Pi_k)) + \lambda_k \log k - \lambda'_h \log h)}{C \sqrt{\operatorname{VC}(\Pi_k)}} \right) \\
    = & \sum_{h \in \mathcal{H}, k \geq 1} \underbrace{\operatorname{VC}(\Pi_k)^{1/4}\exp \left(-\lambda_k\log (\operatorname{VC}(\Pi_k)/C) \right)}_{\leq 1} \underbrace{\exp(-(\lambda_k \log k)/C)}_{\leq k^{-2}} \underbrace{\exp((\lambda'_h \log h)/C)}_{\leq h^{2/\rho}} \\
    \leq & C \left( \sum_{k \geq 1} 1/k^2 \right) \left( \sum_{j \geq 1} j^{-2}\right),
\end{align*}
where the inequalities in the underbraces hold when $k$ is sufficiently large and when $h$ is sufficiently small. 

The derivation above shows that, when optimizing over $k \in \mathbb{N}_{+}$, we need to increase $\tau$ from logarithmic order to a polynomial order in $\operatorname{VC}(\Pi_k)$. Nevertheless, $\tau$ can remain of smaller order than the variance term $\frac{\operatorname{VC}(\Pi_k)}{n h}$.

\section{Auxiliary results} \label{sec:aux}
\subsection{Additional lemmas}
This section provides auxiliary lemmas for the proofs in the previous section. In particular, we derive results for the VC dimension, the uniform bound, and the second-moment bound for the relevant function classes.

\begin{lemma} \label{lm:kernel-VC-dim}
    Let Assumption \ref{ass:infinite-kernel} hold. For any $h > 0$ and $k \geq 1$, the VC dimension of the function class
    \begin{align*}
        \left\{ (Y,T,X) \mapsto \frac{1}{h} K \left( \frac{T-\pi(X)}{h} \right) \frac{Y}{f(T|X)} : \pi \in \Pi_k\right\}
    \end{align*}
    is bounded by $2 \operatorname{VC}(\Pi_k)$. The same VC dimension bound also applies to the function class associated with the Rademacher complexity:
    \begin{align*}
        \left\{ (\text{Rad},Y,T,X) \mapsto 2\text{Rad}\frac{1}{h} K \left( \frac{T-\pi(X)}{h} \right) \frac{Y}{f(T|X)} : \pi \in \Pi_k\right\}.
    \end{align*}
\end{lemma}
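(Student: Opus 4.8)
The plan is to pass to subgraphs and run the standard VC-subgraph preservation calculus. Since the VC dimension of a class of real-valued functions is, in this paper's convention, the VC dimension of the collection of their subgraphs, it suffices to bound that VC dimension for the maps $z=(y,t,x)\mapsto h^{-1}K((t-\pi(x))/h)\,y/f(t|x)$ indexed by $\pi\in\Pi_k$; the Rademacher-augmented family needs no separate treatment, since it only replaces the $\pi$-free weight $w(z):=y/(h f(t|x))$ by $2\,\text{Rad}\cdot y/(hf(t|x))$.

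First I would peel off the innermost layer and show that $\mathcal{G}_0:=\{(t,x)\mapsto (t-\pi(x))/h:\pi\in\Pi_k\}$ is VC-subgraph with index at most $\operatorname{VC}(\Pi_k)$. Its subgraph equals $\{(t,x,s): \pi(x)<t-hs\}$, which is the preimage of the open supergraph $\{(v,x):\pi(x)<v\}$ of $\pi$ under the surjection $(t,x,s)\mapsto(t-hs,x)$. Taking preimages under a fixed map cannot increase the VC dimension of a set system — if the preimage system shatters a finite set, the images of those points are distinct and shattered by the original system — and a set system has the same VC dimension as the system of its complements; since the complements of the open supergraphs are the closed subgraphs of $\Pi_k$, whose VC dimension is $\operatorname{VC}(\Pi_k)$, the index of $\mathcal{G}_0$ is at most $\operatorname{VC}(\Pi_k)$.

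Next I would push $\mathcal{G}_0$ through the kernel and then through multiplication by the weight $w$. Since $K$ has bounded variation, write $K=K_1-K_2$ with $K_1,K_2$ bounded and nondecreasing; composing a VC-subgraph class with a monotone function preserves the subgraph index (Lemma 2.6.18 in \citet{wellner1996}; cf.\ \citet{kitagawa2018should}), so each of $\{K_j\circ g:g\in\mathcal{G}_0\}$ has index at most $\operatorname{VC}(\Pi_k)$. To bring in the weight, split on the sign of $w$: on $\{w>0\}$ the subgraph $\{s<w\,(K_1-K_2)((t-\pi(x))/h)\}$ becomes, after the fiberwise rescaling $s\mapsto s/w$, the subgraph of the $w$-free family $\{K_1\circ g-K_2\circ g\}$; on $\{w<0\}$ it becomes the subgraph of $\{K_2\circ g-K_1\circ g\}$; and on $\{w=0\}$ it is constant in $\pi$. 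The bookkeeping point is that the entire subgraph collection is then cut out by a single ``difference of two monotone compositions of $\mathcal{G}_0$'', costing one $\operatorname{VC}(\Pi_k)$ per monotone piece, hence index at most $2\operatorname{VC}(\Pi_k)$; the Rademacher version is identical.

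The step I expect to be the main obstacle is exactly this last constant count: a priori, composition with a bounded-variation kernel (two monotone pieces) and multiplication by a sign-changing weight (two signs) each threaten to double the index, and the real work is to order the argument so that the weight's sign split is absorbed into the kernel's two-piece decomposition rather than multiplying it. Two more routine points: reconciling the monotone-composition and preimage facts with this paper's VC-dimension convention (the ``off-by-one'' of the earlier footnote), which only moves additive constants; and noting that the bounded-variation hypothesis on $K$ is what makes $\{K\circ g:g\in\mathcal{G}_0\}$ VC-subgraph at all (in the spirit of \citet{GINE2002rates}), not merely what controls its index. It is also worth recording that downstream only a bound of the form $C\operatorname{VC}(\Pi_k)$ is ever used — it feeds the Rademacher-complexity estimate up to universal constants — so a looser multiple than $2$ would be harmless.
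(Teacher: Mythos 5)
Your reduction of the innermost class $\mathcal{G}_0=\{(t,x)\mapsto (t-\pi(x))/h\}$ to the subgraphs of $\Pi_k$ via a fixed preimage map is fine, as is the observation that the Rademacher-augmented class needs no separate treatment. The genuine gap is the final counting step: after writing $K=K_1-K_2$ with $K_1,K_2$ nondecreasing, you assert that the class $\{K_1\circ g-K_2\circ g: g\in\mathcal{G}_0\}$ is VC-subgraph, ``costing one $\operatorname{VC}(\Pi_k)$ per monotone piece.'' There is no permanence result of this form: VC-subgraph classes are not closed under pointwise sums or differences, and the VC-subgraph index of a difference class is not bounded by the sum of the two indices (the difference class need not be VC-subgraph at all). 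What does add under sums and products is the logarithm of the uniform covering numbers --- this is exactly Lemma \ref{lm:covering-number-permanence}, which the paper invokes in Lemma \ref{lm:tilde-R} precisely because a product of bounded-variation pieces arises there and a VC-dimension count would not survive. Your route can therefore be repaired only by downgrading the conclusion to a uniform-entropy bound $\sup_Q N(\epsilon\lVert G\rVert_{L_2(Q)},\cdot,L_2(Q))\le (c/\epsilon)^{C\operatorname{VC}(\Pi_k)}$, which (as you note) is all that Lemma \ref{lm:kitagawa} consumes; but that is not a proof of the stated VC-dimension bound.

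The paper's proof avoids forming a difference altogether. It represents the bounded-variation kernel as a composition $K=K_1\circ K_2$ with $K_1$ Lipschitz and $K_2$ nondecreasing (Lemma 3.6.11 in \cite{gine2021mathematical}), handles the monotone inner layer much as you do (generalized inverse, negative sets, adding a fixed function), and then peels off the outer layer and the fixed weight $Y/(hf(T|X))$ with the permanence rules of Lemma \ref{lm:VC-permanence}. In particular, the factor $2$ in $2\operatorname{VC}(\Pi_k)$ comes from multiplication by the sign-changing fixed weight (the $\operatorname{VC}(g_3\otimes\mathcal{G})\le 2\operatorname{VC}(\mathcal{G})$ clause, proved by splitting the domain on the sign of $g_3$ and adding VC dimensions over disjoint pieces), not from two monotone pieces of the kernel. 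Your plan to ``absorb the weight's sign split into the kernel's two-piece decomposition'' thus inverts the actual bookkeeping and, more importantly, rests on an additivity of VC-subgraph indices under differences that is not available.
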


\begin{proof}[Proof of Lemma \ref{lm:kernel-VC-dim}]
    The mapping $(Y,T,X) \mapsto \frac{Y}{hf(T|X)}$ is a fixed function. Given Lemma \ref{lm:VC-permanence}, we only need to bound the VC dimension of the following class:
    \begin{align*}
        \left\{ (Y,T,X) \mapsto K \left( \frac{T-\pi(X)}{h} \right) : \pi \in \Pi_k\right\}.
    \end{align*}
    By Assumption \ref{ass:infinite-kernel} and Lemma 3.6.11 in \cite{gine2021mathematical}, $K = K_1 \circ K_2$, where $K_1$ is a Lipschitz continuous function and $K_2$ is a nondecreasing function. Given Lemma \ref{lm:VC-permanence}, we only need to bound the VC dimension of the following class:
    \begin{align*}
        \left\{ (Y,T,X) \mapsto K_2 \left( \frac{T-\pi(X)}{h} \right) : \pi \in \Pi_k\right\}.
    \end{align*}
    Denoting $K_2^{-1}$ as the generalized inverse of the non-decreasing function $K_2$, we can write the subgraph of a function in the above class as
    \begin{align*}
        \left\{ (\Upsilon,Y,T,X): \Upsilon \leq K_2 \left( \frac{T-\pi(X)}{h}  \right) \right\} = \left\{ (\Upsilon,Y,T,X): \pi(X) + hK_2^{-1}(\Upsilon) - T \leq 0 \right\},
    \end{align*}
    which is the negative set of the function $(\Upsilon,Y,T,X) \mapsto \pi(X) + hK_2^{-1}(\Upsilon) - T$. Since $hK_2^{-1}(\Upsilon) - T$ is a fixed function, the function class $\{(\Upsilon,Y,T,X) \mapsto \pi(X) + hK_2^{-1}(\Upsilon) - T:\pi \in \Pi_k\}$ has the same VC dimension as $\Pi_k$ (Lemma \ref{lm:VC-permanence}). Its negative set also has the same VC dimension given the proof of Lemma 2.6.18(iii) in \cite{wellner1996}. In the end, the above derivation gives the desired result. The second claim of the lemma also follows from Lemma \ref{lm:VC-permanence} by treating $(\text{Rad},Y,T,X) \mapsto \frac{2\text{Rad}Y}{hf(T|X)}$ as a fixed function.
\end{proof}

\begin{lemma} \label{lm:uniform-variance-bound}
    Let Assumptions \ref{ass:bounded} and \ref{ass:infinite-kernel} hold. For any $h >0$ and $k \geq 1$, the following function class
    \begin{align*}
        \mathcal{G}_k \equiv \left\{ (Y,T,X) \mapsto \frac{1}{h} K \left( \frac{T-\pi(X)}{h} \right) \frac{Y}{f(T|X)} : \pi \in \Pi_k\right\}
    \end{align*}
    admits the uniform bound $B(\mathcal{G}_k) = \frac{\bar{\kappa} M}{h \underline{f}}$ and the second-moment bound $\sigma^2(\mathcal{G}_k) = \frac{M^2 \kappa_2}{h \underline{f}}$. 
\end{lemma}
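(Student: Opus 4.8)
The plan is to establish both bounds by elementary pointwise estimates together with a single change of variables; no empirical-process machinery is needed here. For the uniform bound, I would combine three facts: Assumption \ref{ass:infinite-kernel} gives $\sup_v|K(v)| = \bar\kappa$, Assumption \ref{ass:bounded}(2) gives $|Y|\le M$ almost surely (hence $|Y(T)|\le M$), and Assumption \ref{ass:bounded}(1) gives $f\ge\underline f$. Multiplying these together with the explicit factor $1/h$ gives, for every $\pi\in\Pi_k$ and every realization $(y,t,x)$,
$$
\left|\frac{1}{h}K\!\left(\frac{t-\pi(x)}{h}\right)\frac{y}{f(t|x)}\right|\le\frac{\bar\kappa M}{h\underline f},
$$
which is precisely the claimed envelope $B(\mathcal{G}_k)=\bar\kappa M/(h\underline f)$.

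For the second-moment bound, I would fix $\pi\in\Pi_k$, condition on $X=x$, and use $Y^2\le M^2$ almost surely. Writing the conditional expectation as an integral against $f(t|x)\,dt$, one factor of $1/f(t|x)$ cancels, leaving $M^2\int h^{-2}K((t-\pi(x))/h)^2 f(t|x)^{-1}\,dt$. Bounding the remaining $f(t|x)^{-1}\le 1/\underline f$ and substituting $v=(t-\pi(x))/h$ (so $dt=h\,dv$) collapses the integral to $h^{-1}\int K(v)^2\,dv=\kappa_2/h$, finite by Assumption \ref{ass:infinite-kernel}. Hence the conditional second moment is at most $M^2\kappa_2/(h\underline f)$, uniformly in $x$ and in $\pi$; taking the expectation over $X$ preserves this bound, yielding $\sigma^2(\mathcal{G}_k)=M^2\kappa_2/(h\underline f)$.

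There is no genuine obstacle; the one feature worth emphasizing is that the change of variables absorbs exactly one power of $h$, so the variance proxy is of order $1/h$ rather than $1/h^2$---the ``small second-moment property'' highlighted around \eqref{eqn:kernel-property}. This asymmetry between envelope and variance proxy is precisely what later permits the sharper Talagrand-based concentration (in place of the bounded-difference inequality) in the proof of Theorem \ref{thm:fully}.
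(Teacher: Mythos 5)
Your proposal is correct and follows essentially the same argument as the paper: the envelope comes from the pointwise bounds $\sup_v|K(v)|=\bar\kappa$, $|Y|\le M$, $f\ge\underline f$, and the second-moment bound comes from cancelling one factor of $1/f(t|x)$ against the conditional density and then substituting $v=(t-\pi(x))/h$ to absorb one power of $h$. The only cosmetic difference is that you condition on $X=x$ before integrating while the paper writes the double integral directly; the computation is identical.
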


\begin{proof}[Proof of Lemma \ref{lm:uniform-variance-bound}]
    The uniform bound is evident from the boundedness of $Y$, $f$, and $K$. For the second-moment bound, notice that
    \begin{align*}
        \mathbb{E}\left[ \left( \frac{1}{h} K \left( \frac{T - \pi(X)}{h} \right) \frac{Y}{f(T|X)} \right)^2 \right] = & \frac{1}{h^2} \mathbb{E}\left[  K \left( \frac{T - \pi(X)}{h} \right)^2\frac{Y^2}{f(T|X)^2}  \right] \\
        \leq & \frac{M^2}{h^2} \mathbb{E}\left[  K \left( \frac{T - \pi(X)}{h} \right)^2 \frac{1}{f(T|X)^2} \right].
    \end{align*}
    By change of variables, we have 
    \begin{align*}
         \mathbb{E}\left[  K \left( \frac{T - \pi(X)}{h} \right)^2 \frac{1}{f(T|X)^2}  \right] = & \int \int K\left( \frac{t - \pi(x)}{h} \right)^2 \frac{1}{f(t|x)} dt f_X(x) dx \\
         = & h \int \int K\left( v \right)^2 \frac{1}{f(\pi(x)+hv|x)} dv f_X(x) dx \\
         \leq & \frac{h}{\underline{f}} \int K(v)^2 dv \int f_X(x)dx = \frac{\kappa_2h}{\underline{f}} .
    \end{align*}
    Therefore, the second moment is bounded as
    \begin{align*}
        \mathbb{E}\left[ \left( \frac{1}{h} K \left( \frac{T - \pi(X)}{h} \right) \frac{Y}{f(T|X)} \right)^2 \right] \leq \frac{M^2 \kappa_2}{h\underline{f}}.
    \end{align*}
\end{proof}

\begin{lemma} \label{lm:DR-IPW-same-bias}
    Under Assumption \ref{ass:infinite-kernel}, we have 
    \begin{align*}
        \mathbb{E}\left[ \left( 1 - \frac{1}{h f(T|X)} K \left( \frac{T-\pi(X)}{h} \right) \right)m(\pi(X),X) \right] = 0.
    \end{align*}
\end{lemma}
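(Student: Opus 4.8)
The plan is to compute the expectation by conditioning on $X$ and then integrating out $T$ against the conditional density $f(\cdot\mid X)$. Since, for a fixed policy $\pi$, the factor $m(\pi(X),X)$ is measurable with respect to $X$ alone, the tower property gives
\begin{align*}
    \mathbb{E}\!\left[\left(1 - \tfrac{1}{h f(T|X)} K\!\left(\tfrac{T-\pi(X)}{h}\right)\right)m(\pi(X),X)\right]
    = \mathbb{E}\!\left[m(\pi(X),X)\,\mathbb{E}\!\left[1 - \tfrac{1}{h f(T|X)} K\!\left(\tfrac{T-\pi(X)}{h}\right)\,\Big|\,X\right]\right].
\end{align*}
So it suffices to show that the inner conditional expectation vanishes for ($f_X$-almost) every $x$.

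For fixed $x$, write the conditional expectation as an integral against $f(t\mid x)$ and observe that the density cancels: $\mathbb{E}[\tfrac{1}{h f(T|X)}K(\tfrac{T-\pi(X)}{h})\mid X=x] = \int \tfrac{1}{h f(t|x)}K\!\left(\tfrac{t-\pi(x)}{h}\right) f(t|x)\,dt = \tfrac{1}{h}\int K\!\left(\tfrac{t-\pi(x)}{h}\right)dt$. The change of variables $v=(t-\pi(x))/h$ then yields $\int K(v)\,dv$. By Assumption \ref{ass:infinite-kernel}, $K^{\operatorname{FT}}$ equals $1$ in a neighborhood of the origin, so in particular $\int K(v)\,dv = K^{\operatorname{FT}}(0) = 1$. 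Hence the inner conditional expectation is $1 - 1 = 0$, which completes the argument.

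The only points that need a little care are the measure-theoretic justifications: the use of Fubini/Tonelli to interchange the outer expectation with the $t$-integral and to perform the change of variables, and the evaluation $\int K = 1$. Integrability is not an obstacle here, since $|m|\le M$ by Assumption \ref{ass:bounded}(2), $f\ge\underline f>0$ by Assumption \ref{ass:bounded}(1), and $K$ is integrable (with $\int |K|<\infty$, as its Fourier transform is well defined and $K$ is of bounded variation with finite sup and $L^2$ norm by Assumption \ref{ass:infinite-kernel}); these bounds make the integrand absolutely integrable and legitimize all the interchanges. There is no substantive difficulty beyond this bookkeeping—the identity is exactly the statement that the kernel-weighting operator reproduces constants in expectation.
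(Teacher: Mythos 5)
Your proof is correct and follows essentially the same route as the paper's: apply the law of iterated expectations, reduce to showing the conditional expectation of $\frac{1}{hf(T|X)}K\bigl(\frac{T-\pi(X)}{h}\bigr)$ given $X$ equals one, and conclude by the density cancellation and change of variables with $\int K(v)\,dv = K^{\operatorname{FT}}(0)=1$. Your version simply spells out the measure-theoretic bookkeeping that the paper leaves implicit.
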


\begin{proof}[Proof of Lemma \ref{lm:DR-IPW-same-bias}]
    By the law of iterated expectations, it suffices to show that the following conditional expectation equals one almost surely:
    \begin{align*}
        \mathbb{E}\left[\frac{1}{h f(T|X)} K \left( \frac{T-\pi(X)}{h} \right)\Bigg| X\right].
    \end{align*}
    This holds by applying the standard change of variables.
\end{proof}

\begin{lemma} \label{lm:tilde-R}
    Given Assumptions \ref{ass:infinite-kernel} and \ref{ass:m-bounded-variation}, the expectation of the infeasible Rademacher complexity is bounded by
    \begin{align*}
        \mathbb{E}[\tilde{R}^{\operatorname{DD},\ell}_{h,k}] \leq ( C_{v}' + o(1)) \sqrt{L\frac{\operatorname{VC}(\Pi_k)}{nh}},
    \end{align*}
    where $C_v' \equiv (c+c')M \sqrt{\frac{\kappa_2}{\underline{f}}}$ and $c'$ is another universal constant derived from the proof.
\end{lemma}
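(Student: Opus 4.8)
The plan is to split the double-debiased moment exactly as in (\ref{eqn:double-robustness}), writing $\Gamma_h = \Gamma_h^{\mathrm{IPW}} + \Gamma_h^{\mathrm{adj}}$, where $\Gamma_h^{\mathrm{IPW}}(Y,T,X;\pi) \equiv \frac{1}{h}K\left(\frac{T-\pi(X)}{h}\right)\frac{Y}{f(T|X)}$ is the IPW moment and $\Gamma_h^{\mathrm{adj}}(Y,T,X;\pi) \equiv \left(1 - \frac{1}{hf(T|X)}K\left(\frac{T-\pi(X)}{h}\right)\right)m(\pi(X),X)$ is the adjustment term. Since the supremum over $\pi\in\Pi_k$ of a sum is at most the sum of suprema, and this survives the conditional expectation over the Rademacher draws, we obtain $\tilde{R}^{\mathrm{DD},\ell}_{h,k} \le \tilde{R}^{\mathrm{IPW},\ell}_{h,k} + \tilde{R}^{\mathrm{adj},\ell}_{h,k}$, where each summand is the corresponding Rademacher complexity built from the fold $I_\ell$, which has size $n/L$. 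It then suffices to bound $\mathbb{E}[\tilde{R}^{\mathrm{IPW},\ell}_{h,k}]$ and $\mathbb{E}[\tilde{R}^{\mathrm{adj},\ell}_{h,k}]$ separately; the former will contribute the constant $c$ and the latter a new universal constant $c'$ to $C_v'$.

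For the IPW piece the relevant function class is exactly the one analyzed in the proof of Theorem \ref{thm:fully}: its VC-subgraph dimension is at most $2\operatorname{VC}(\Pi_k)$ by Lemma \ref{lm:kernel-VC-dim}, and it admits the uniform bound $\bar{\kappa}M/(h\underline{f})$ and second-moment bound $M^2\kappa_2/(h\underline{f})$ by Lemma \ref{lm:uniform-variance-bound}. Feeding these into the Dudley-type bound of Lemma \ref{lm:kitagawa} — using the variant (\ref{eqn:EDelta-bound1}) with effective sample size $n/L$ — gives $\mathbb{E}[\tilde{R}^{\mathrm{IPW},\ell}_{h,k}] \le (cM\sqrt{\kappa_2/\underline{f}}+o(1))\sqrt{L\operatorname{VC}(\Pi_k)/(nh)}$, which is the bound of Theorem \ref{thm:fully} up to the factor $\sqrt{L}$. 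The ``uniform bound $\times\ \operatorname{VC}(\Pi_k)/n$'' term produced by Lemma \ref{lm:kitagawa} is of smaller order than the ``$\sqrt{\text{second moment}\times\operatorname{VC}(\Pi_k)/n}$'' term, since the search restriction $\operatorname{VC}(\Pi_k)\le nh^2$ forces $\operatorname{VC}(\Pi_k)/(nh)\le h\to0$; it is therefore absorbed into the $o(1)$.

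For the adjustment piece the structure of the argument is identical, so the only new work is to control the complexity and the second moment of $\Gamma_h^{\mathrm{adj}}$. Expanding $\left(1 - \frac{1}{hf(T|X)}K\left(\frac{T-\pi(X)}{h}\right)\right)^2$ and using both the identity $\mathbb{E}\left[\frac{1}{hf(T|X)}K\left(\frac{T-\pi(X)}{h}\right)\mid X\right]=1$ from the proof of Lemma \ref{lm:DR-IPW-same-bias} and the change-of-variables bound $\mathbb{E}\left[\frac{1}{h^2f(T|X)^2}K\left(\frac{T-\pi(X)}{h}\right)^2\right]\le\kappa_2/(h\underline{f})$ will give $\mathbb{E}[(\Gamma_h^{\mathrm{adj}})^2]\le M^2(\kappa_2/(h\underline{f})-1)\le M^2\kappa_2/(h\underline{f})$, consistent with the calculation noted after Theorem \ref{thm:dr}; a crude uniform bound is $M(1+\bar{\kappa}/(h\underline{f}))\asymp M/h$. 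For the complexity we invoke Assumption \ref{ass:m-bounded-variation}: writing $m(\cdot,x)=m_1(\cdot,x)-m_2(\cdot,x)$ as a difference of two nondecreasing functions, bounded and with the monotone direction taken uniformly in $x$, each class $\{x\mapsto m_j(\pi(x),x):\pi\in\Pi_k\}$ has VC dimension $\lesssim\operatorname{VC}(\Pi_k)$ by the generalized-inverse/subgraph argument already used in the proof of Lemma \ref{lm:kernel-VC-dim}, and the VC-permanence results of Lemma \ref{lm:VC-permanence} then bound the VC dimension of the product class $\{(Y,T,X)\mapsto\Gamma_h^{\mathrm{adj}}:\pi\in\Pi_k\}$ by a fixed multiple of $\operatorname{VC}(\Pi_k)$. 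Plugging the VC bound, the uniform bound, and the second-moment bound into Lemma \ref{lm:kitagawa} yields $\mathbb{E}[\tilde{R}^{\mathrm{adj},\ell}_{h,k}]\le(c'M\sqrt{\kappa_2/\underline{f}}+o(1))\sqrt{L\operatorname{VC}(\Pi_k)/(nh)}$ for a universal constant $c'$ that absorbs the VC-inflation factor, the subgraph constant, and the Dudley constant; the uniform-bound term is again $o(1)$ relative to the main term because $\operatorname{VC}(\Pi_k)\le nh^2$.

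Summing the two bounds gives $\mathbb{E}[\tilde{R}^{\mathrm{DD},\ell}_{h,k}]\le(c+c')M\sqrt{\kappa_2/\underline{f}}\,\sqrt{L\operatorname{VC}(\Pi_k)/(nh)}+o(1)\sqrt{L\operatorname{VC}(\Pi_k)/(nh)}=(C_v'+o(1))\sqrt{L\operatorname{VC}(\Pi_k)/(nh)}$, which is the claim; the bound is the same for every $\ell$, so it also bounds $\mathbb{E}[\tilde{R}^{\mathrm{DD}}_{h,k}]$. The main obstacle will be the VC-dimension control of the composition class $\{x\mapsto m(\pi(x),x):\pi\in\Pi_k\}$: one must check that the bounded-variation decomposition of $m(\cdot,x)$ can be chosen with a bound and an ordering that do not depend on $x$, so the subgraph argument goes through uniformly, and one must route every dimensionless constant (the subgraph factor, the VC-inflation from the product, and the universal Dudley constant) into a single $c'$ that is free of $M$, $h$, $\underline{f}$, $\kappa_2$, and $k$, so that the final prefactor is genuinely $(c+c')M\sqrt{\kappa_2/\underline{f}}$.
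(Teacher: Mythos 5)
Your proposal is correct and follows essentially the same route as the paper: the same split of $\Gamma_h$ into the IPW moment plus the adjustment term, the same uniform and second-moment bounds, the same Jordan decomposition of $K$ and $m(\cdot,x)$ into monotone pieces combined with the generalized-inverse subgraph argument and covering-number permanence to control the adjustment class, and the same application of Lemma \ref{lm:kitagawa} with fold size $n/L$. The only slip is that you cite the bound (\ref{eqn:EDelta-bound1}) while clearly using the two-term second-moment bound (\ref{eqn:EDelta-bound2}) — which is the one needed to get $\sqrt{\operatorname{VC}(\Pi_k)/(nh)}$ rather than $h^{-1}\sqrt{\operatorname{VC}(\Pi_k)/n}$ — and your explicit use of $\operatorname{VC}(\Pi_k)\le nh^{2}$ to absorb the uniform-bound term, as well as your flagging of the uniformity-in-$x$ of the decomposition of $m(\cdot,x)$, actually make explicit two points the paper leaves implicit.
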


\begin{proof}[Proof of Lemma \ref{lm:tilde-R}]
    By the definition of the double-debiased moment  function $\Gamma_h$ in (\ref{eqn:double-robustness}), the Rademacher complexsity $\tilde{R}^{\operatorname{DD},\ell}_{h,k}$ can be decomposed into two parts:
    \begin{align*}
        \mathbb{E}[\tilde{R}^{\operatorname{DD},\ell}_{h,k}] \leq & \mathbb{E}[\hat{R}^{\operatorname{IPW},\ell}_{h,k}] \\
        & + 2 \mathbb{E}\left[ \sup_{\pi \in \Pi_k} \frac{L}{n} \sum_{i \in I_\ell} \text{Rad}_i \left(1- \frac{1}{hf(T_i|X_i)} K \left( \frac{T_i-\pi(X_i)}{h} \right)  \right)m(\pi(X_i),X_i) \right],
    \end{align*}
    where $\hat{R}^{\operatorname{IPW},\ell}_{h,k}$ denotes the IPW Rademacher complexity used in Section \ref{sec:ipw} but constructed only using the cross-fitting method.
    The first term was already examined in the proof of Theorem \ref{thm:fully}. The second term is (twice of) the Rademacher complexity for the following class
    \begin{align*}
        \mathcal{F}_k \equiv \left\{ (Y,T,X) \mapsto \left(1- \frac{1}{hf(T|X)} K \left( \frac{T-\pi(X)}{h} \right)  \right)m(\pi(X),X): \pi \in \Pi_k \right\}.
    \end{align*}
    For this class, the uniform bound can be taken as $M(1 + \frac{\bar{\kappa}}{h \underline{f}})$, and the second-moment bound can be taken as $M^2 (1+ \frac{\kappa_2}{h\underline{f}})$. To bound the complexity of this class, note that both $K(\cdot)$ and $m(\cdot,x)$ are of bounded variation. Therefore, we can write
    \begin{align*}
        K(\cdot) = K_+(\cdot) - K_-(\cdot), m(\cdot,x) = m_+(\cdot,x) - m_-(\cdot,x),
    \end{align*}
    where $K_+(\cdot)$, $K_-(\cdot)$, $m_+(\cdot,x)$, and $m_-(\cdot,x)$ are nondecreasing functions. The function class $\mathcal{F}_k$ can be decomposed as
    \begin{align*}
        \mathcal{F}_k \subset (\mathcal{F}_{k,K+} \oplus \mathcal{F}_{k,K-})(\mathcal{F}_{k,m+} \oplus \mathcal{F}_{k,m-}),
    \end{align*}
    where
    \begin{align*}
        \mathcal{F}_{k,K+} & \equiv \left\{ (T,X) \mapsto 1- K_+\left( \frac{T-\pi(X)}{h} \right): \pi \in \Pi_k \right\}, \\
        \mathcal{F}_{k,K-} & \equiv \left\{ (T,X) \mapsto K_-\left( \frac{T-\pi(X)}{h} \right): \pi \in \Pi_k \right\}, \\
        \mathcal{F}_{k,m+} & \equiv \left\{ (T,X) \mapsto m_+(\pi(X),X) : \pi \in \Pi_k \right\}, \\
        \mathcal{F}_{k,m-} & \equiv \left\{ (T,X) \mapsto m_-(\pi(X),X) : \pi \in \Pi_k \right\}.
    \end{align*}
    The subgraph of a function in $\mathcal{F}_{k,K+}$ is 
    \begin{align*}
        \left\{ (\Upsilon,T,X) : \Upsilon \leq 1- K_+\left( \frac{T-\pi(X)}{h} \right) \right\} = \{ (\Upsilon,T,X) : \pi(X) + hK_+^{-1}(1-\Upsilon) - T \geq 0 \},
    \end{align*}
    which is the negative set of the function $(\Upsilon,T,X) \mapsto \pi(X) + hK_+^{-1}(1-\Upsilon) - T$ that $K_+^{-1}$ denotes the generalized inverse of the nondecreasing function $K_+$. Because $hK_+^{-1}(1-\Upsilon) - T$ is a fixed function, the VC dimension of this function class is the same as $\Pi_k$ (Lemma \ref{lm:VC-permanence}). Hence, the VC dimension of its negative set is also $\operatorname{VC}(\Pi_k)$. This means that $\operatorname{VC}(\mathcal{F}_{k,K+}) = \operatorname{VC}(\Pi_k)$. Similarly, we can show that the VC dimensions of $\mathcal{F}_{k,K-}$ $\mathcal{F}_{k,m-}$ are equal to those of $\Pi_k$. Then, by Lemma \ref{lm:covering-number-VC}, their covering numbers can be bounded as
    \begin{align*}
        \sup_{Q} N(\epsilon (1+\|K_+\|_\infty), \mathcal{F}_{k,K+}, L_2(Q)) \leq \left( \frac{c_0}{\epsilon} \right)^{2 \operatorname{VC}(\Pi_k)}, 
        \sup_{Q} N(\epsilon \|K_-\|_\infty, \mathcal{F}_{k,K-}, L_2(Q)) \leq \left( \frac{c_0}{\epsilon} \right)^{2 \operatorname{VC}(\Pi_k)}, \\
        \sup_{Q} N(\epsilon \|m_+\|_\infty, \mathcal{F}_{k,m+}, L_2(Q)) \leq \left( \frac{c_0}{\epsilon} \right)^{2 \operatorname{VC}(\Pi_k)}, 
        \sup_{Q} N(\epsilon \|m_-\|_\infty, \mathcal{F}_{k,m-}, L_2(Q)) \leq \left( \frac{c_0}{\epsilon} \right)^{2 \operatorname{VC}(\Pi_k)}.
    \end{align*}
    By Lemma \ref{lm:covering-number-permanence}, we can sum up the VC dimensions when operating element-wise addition and multiplication of function classes,
    \begin{align*}
        \sup_{Q} N(\epsilon (1+\bar{\kappa})M, \mathcal{F}_{k}, L_2(Q)) \leq & \sup_{Q} N(\epsilon (1+\|K_+\|_\infty/4), \mathcal{F}_{k,K+}, L_2(Q)) \\
        & \times \sup_{Q} N(\epsilon \|K_-\|_\infty/4, \mathcal{F}_{k,K-}, L_2(Q)) \\
        & \times \sup_{Q} N(\epsilon \|m_+\|_\infty/4, \mathcal{F}_{k,m+}, L_2(Q)) \\
        & \times \sup_{Q} N(\epsilon \|m_-\|_\infty/4, \mathcal{F}_{k,m-}, L_2(Q)) \leq \left( \frac{4c_0}{\epsilon} \right)^{8 \operatorname{VC}(\Pi_k)}.
    \end{align*}
    Then, by Lemma \ref{lm:kitagawa}, we have
    \begin{align*}
        & 2\mathbb{E}\left[ \sup_{\pi \in \Pi_k} \frac{L}{n} \sum_{i \in I_\ell} \text{Rad}_i \left(1- \frac{1}{hf(T_i|X_i)} K \left( \frac{T_i-\pi(X_i)}{h} \right)  \right)m(\pi(X_i),X_i) \right] \\
        \leq & 2\left( 2 c^2 M\left( 1 + \frac{\bar{\kappa}}{\underline{f}h} \right) \frac{4 \operatorname{VC}(\Pi_k)}{n/L} + c M\sqrt{1 + \frac{\kappa_2}{\underline{f}h}} \sqrt{\frac{4 \operatorname{VC}(\Pi_k)}{n/L}} \right) \\
        \leq & 2\left(2 c^2 M\left( h + \frac{\bar{\kappa}}{\underline{f}} \right) L\frac{4 \operatorname{VC}(\Pi_k)}{nh} + c M\left(\sqrt{h} + \sqrt{\frac{\kappa_2}{\underline{f}}}\right) \sqrt{L \frac{4 \operatorname{VC}(\Pi_k)}{nh}} \right)\text{ ($\because \sqrt{a+b} \leq \sqrt{a} + \sqrt{b}, \forall a,b>0$)} \\
        = & (C_{v}' + o(1)) \sqrt{L \frac{\operatorname{VC}(\Pi_k)}{nh}},
    \end{align*}
    where in the last line we have defined $c' = 4c$.
    This proves the desired results.
\end{proof}

\begin{lemma} \label{lm:DR-calculation}
Based on elementary calculations, the difference $\Gamma_h(Y,T,X;\pi;f,m) - \Gamma_h(Y,T,X;\pi;\hat{f},\hat{m})$ admits the following decomposition:
    \begin{align*}
        \Gamma_h(Y,T,X;\pi;f,m) - \Gamma_h(Y,T,X;\pi;\hat{f},\hat{m}) = \Gamma_{1h} + \Gamma_{2h} + \Gamma_{3h},
    \end{align*}
    where
    \begin{align*}
        \Gamma_{1h}(T,X;\pi;g,m;\hat{g}_\ell,\hat{m}_\ell) & \equiv \frac{1}{h}K\left( \frac{T-\pi(X)}{h} \right) (m(\pi(X),X) - \hat{m}_\ell(\pi(X),X))(\hat{g}_\ell(T,X) - g(T,X)), \\
        \Gamma_{2h}(Y,T,X;\pi;g,m;\hat{g}_\ell) & \equiv \frac{1}{h}K\left( \frac{T-\pi(X)}{h} \right) (Y - m(\pi(X),X))(\hat{g}_\ell(T,X) - g(T,X)) , \\
        \Gamma_{3h}(T,X;\pi;g,m;\hat{m}_\ell) & \equiv \left( \frac{1}{h}K\left( \frac{T-\pi(X)}{h} \right) g(T,X) - 1 \right)(m(\pi(X),X) - \hat{m}_\ell(\pi(X),X)).
    \end{align*}
    Denote $\mathbb{E}_\ell[\cdot] \equiv \mathbb{E}[\cdot|S^{(-\ell)}]$ as the conditional expectation operator given the data not in the $\ell$th fold. Under Assumptions \ref{ass:infinite-kernel}, \ref{ass:smoothness-r}, and \ref{ass:rates-nonparametric-estimates}, the conditional means of the above three terms satisfy that for any policy $\pi$,
    \begin{align*}
        0 \leq \mathbb{E}_\ell[\Gamma_{1h}(T,X;\pi;g,m;\hat{g}_\ell,\hat{m}_\ell)] \leq & \bar{f} \lVert \hat{m}_\ell - m \rVert_\infty \lVert \hat{g}_\ell - g \rVert_\infty, \\
        |\mathbb{E}_\ell[\Gamma_{2h}(Y,T,X;\pi;g,m;\hat{g}_\ell)]| \leq & C h^r \lVert \hat{g}_\ell - g \rVert_\infty, \\
        \mathbb{E}_\ell[\Gamma_{3h}(T,X;\pi;g,m;\hat{m}_\ell)] = & 0,
    \end{align*}
    where the expectation is taken with respect to the joint distribution of $(Y,T,X,
    \hat{g}_\ell,\hat{m}_\ell)$ with $(Y,T,X) \perp (\hat{g}_\ell,\hat{m}_\ell)$.
\end{lemma}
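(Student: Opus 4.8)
The plan is to prove the two halves in turn — first the pointwise algebraic identity, then the three conditional-mean bounds. The identity is mechanical. Writing $g\equiv 1/f$, I rewrite $\Gamma_h(Y,T,X;\pi;f,m)=\frac1h K\big(\tfrac{T-\pi(X)}{h}\big)\big(Y-m(\pi(X),X)\big)g(T,X)+m(\pi(X),X)$, subtract the version built from $(\hat g_\ell,\hat m_\ell)$, add and subtract the cross term $\frac1h K\big(\tfrac{T-\pi(X)}{h}\big)\big(Y-m(\pi(X),X)\big)\hat g_\ell(T,X)$, peel $\hat g_\ell-g$ off the residual factor, and regroup; the three summands $\Gamma_{1h},\Gamma_{2h},\Gamma_{3h}$ then appear exactly as stated — $\Gamma_{1h}$ isolates the product of errors $(m-\hat m_\ell)(\hat g_\ell-g)$, $\Gamma_{2h}$ the score-type term $(Y-m)(\hat g_\ell-g)$, and $\Gamma_{3h}$ the IPW-adjustment residual $\big(\tfrac1h K(\cdot)g-1\big)(m-\hat m_\ell)$.

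For the conditional means, the structural input is the cross-fitting independence recorded in the statement: under $\E_\ell$ the pair $(\hat g_\ell,\hat m_\ell)$ is a fixed function independent of the fresh draw $(Y,T,X)$, so I treat it as deterministic and integrate over $(Y,T,X)$ only. For $\Gamma_{3h}$, the factor $m(\pi(X),X)-\hat m_\ell(\pi(X),X)$ is $\sigma(X)$-measurable; conditioning on $X$ and extracting it, what remains is $\E\big[\tfrac1h K(\tfrac{T-\pi(X)}{h})g(T,X)-1\mid X\big]=0$, which is precisely Lemma \ref{lm:DR-IPW-same-bias} — a change of variables together with $\int K=1$, the latter being the normalization $K^{\operatorname{FT}}(0)=1$ from Assumption \ref{ass:infinite-kernel}. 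For $\Gamma_{1h}$ (which carries no $Y$), the same conditioning extracts the $\sigma(X)$-measurable factor $(m-\hat m_\ell)(\pi(X),X)$, bounded by $\lVert\hat m_\ell-m\rVert_\infty$, times $\E\big[\tfrac1h K(\tfrac{T-\pi(X)}{h})(\hat g_\ell(T,X)-g(T,X))\mid X\big]$; a change of variables $v=(t-\pi(X))/h$, the density upper bound $\bar f$ (finite by compact support of $T$ and continuity of $f$), and $\lvert\hat g_\ell-g\rvert\le\lVert\hat g_\ell-g\rVert_\infty$ yield the claimed bound, up to the $L^1$ norm of $K$.

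The term $\Gamma_{2h}$ is where the real work lies, and I expect it to be the main obstacle. Conditioning on $(T,X)$ and invoking unconfoundedness (Assumption \ref{ass:unconfoundedness}, so $\E[Y\mid T,X]=m(T,X)$) replaces $Y$ by $m(T,X)$, leaving $\E\big[\tfrac1h K(\tfrac{T-\pi(X)}{h})\,(m(T,X)-m(\pi(X),X))\,(\hat g_\ell(T,X)-g(T,X))\big]$. This is a kernel-smoothing operator applied to a target, $m(T,X)-m(\pi(X),X)$, that vanishes on the diagonal $T=\pi(X)$, multiplied by the nuisance error $\hat g_\ell-g$; the rate $O\!\big(h^r\lVert\hat g_\ell-g\rVert_\infty\big)$ is therefore a genuine kernel-bias estimate and cannot be obtained by simply pulling $\lVert\hat g_\ell-g\rVert_\infty$ out of an absolute value, which would only give $O(h)$. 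The plan is to split $\hat g_\ell-g$ into its value at $t=\pi(X)$ — which is $\sigma(X)$-measurable — plus a remainder. For the first piece, the surviving integral is a textbook $r$-th order kernel bias: a Taylor expansion of $m(\cdot,X)f(\cdot\mid X)$ about $\pi(X)$, combined with the vanishing polynomial moments of the infinite-order kernel (Assumption \ref{ass:infinite-kernel}) and the $r$-smoothness of $m$ (Assumption \ref{ass:smoothness-r}), collapses it to the $O(h^r)$ remainder — the same Fourier-domain mechanism behind Lemma \ref{lm:bias-bound}. For the remainder piece, both factors vanish on the diagonal, and — using the regularity of the nuisance estimator — the extra smoothing against the high-order kernel produces the additional powers of $h$. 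It is worth noting, as the main text already stresses, that in the discrete-treatment case this term is identically zero; the $O(h^r)$ residual is the price of kernel-smoothing the continuous treatment and is exactly what makes the bias term in Theorem \ref{thm:dr} marginally larger than in Theorem \ref{thm:fully}.
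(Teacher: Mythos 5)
Your algebraic decomposition and your treatments of $\Gamma_{1h}$ and $\Gamma_{3h}$ coincide with the paper's proof: the identity is the same add-and-subtract of the cross term, $\Gamma_{3h}$ is killed by Lemma \ref{lm:DR-IPW-same-bias} exactly as you describe, and $\Gamma_{1h}$ is handled by extracting the two sup norms and bounding $\mathbb{E}_\ell[\tfrac1h K(\tfrac{T-\pi(X)}{h})]$ by $\bar f$ via the change of variables $v=(t-\pi(x))/h$. (Your caveat about the $L^1$ norm of $K$ is in fact slightly more careful than the paper, which uses $\int K=1$ even though the infinite-order kernel is signed.)

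Where you depart from the paper is $\Gamma_{2h}$, and the departure cuts both ways. The paper's own proof does precisely what you say should not be done: after the law of iterated expectations it pulls $\lVert\hat g_\ell-g\rVert_\infty$ out of an absolute value and asserts $\int\tfrac1hK(\tfrac{t-\pi(x)}{h})\lvert m(t,x)-m(\pi(x),x)\rvert f(t|x)\,dt\le Ch^r$ directly from Assumption \ref{ass:smoothness-r} and the infinite-order kernel. Your objection is well taken for $r\ge2$: once the absolute value sits inside the integral, the first-order Taylor term contributes $h\,\lvert\partial_t m(\pi(x),x)\rvert\int K(v)\lvert v\rvert\,dv$, which does not vanish, so that route only yields $O(h)$; for $r=1$ the two bounds coincide and the paper's step is harmless. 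However, your proposed repair has its own gap: the ``remainder piece'' $(\hat g_\ell-g)(T,X)-(\hat g_\ell-g)(\pi(X),X)$ contributes extra powers of $h$ only if $\hat g_\ell-g$ has some modulus of continuity in $t$, and Assumption \ref{ass:rates-nonparametric-estimates} supplies only sup-norm rates, boundedness, and bounded variation of $\hat m_\ell$ --- no smoothness of $\hat g_\ell$ in $t$. So as written your plan for $\Gamma_{2h}$ does not close either; you would need to add a regularity condition on $\hat g_\ell$, or settle for the $O(h\lVert\hat g_\ell-g\rVert_\infty)$ bound, which is what the paper's argument actually delivers and which matches the stated claim only when $r=1$.
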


\begin{proof}[Proof of Lemma \ref{lm:DR-calculation}]
    Notice that, due to the cross-fitting method, the sample in the $\ell$th fold remains iid given $S^{(-\ell)}$.
    For $\Gamma_{1h}$, we have 
    \begin{align*}
        \mathbb{E}_\ell[\Gamma_{1h}(T,X;\pi;g,m;\hat{g}_\ell,\hat{m}_\ell)] \leq \mathbb{E}_\ell\left[ \frac{1}{h}K\left( \frac{T-\pi(X)}{h} \right)\right] \lVert \hat{m}_\ell - m \rVert_\infty \lVert \hat{g}_\ell - g \rVert_\infty.
    \end{align*}
    The expectation on the RHS can be bounded as
    \begin{align*}
        \mathbb{E}_\ell \left[ \frac{1}{h}K\left( \frac{T-\pi(X)}{h} \right) \right] = & \int \frac{1}{h}K\left( \frac{t-\pi(x)}{h} \right)f(t|x) dt f_X(x)dx \\
        = & \int K(v) f(\pi(X)+vh|x) dv f_X(x)dx \leq \bar{f} \int K(v) dv f_X(x)dx = \bar{f}.
    \end{align*}
    Then, by the Cauchy-Schwartz inequality, we have 
    \begin{align*}
        \mathbb{E}_\ell[\Gamma_{1h}(T,X;\pi;g,m;\hat{g}_\ell,\hat{m}_\ell)] 
        \leq & \bar{f} \lVert \hat{m}_\ell - m \rVert_\infty \lVert \hat{g}_\ell - g \rVert_\infty.
    \end{align*} 
    For $\Gamma_{2h}$, by applying the law of iterated expectations and the standard change of variables, we have 
    \begin{align*}
        & |\mathbb{E}_\ell[\Gamma_{2h}(Y,T,X;\pi;g,m;\hat{g}_\ell)]|\\
        = & \left|\mathbb{E}_\ell \left[ \frac{1}{h}K\left( \frac{T-\pi(X)}{h} \right) (m(T,X) - m(\pi(X),X))(\hat{g}_\ell(T,X) - g(T,X)) \right]  \right| \\
        \leq & \mathbb{E}_\ell \left[ \frac{1}{h}K\left( \frac{T-\pi(X)}{h} \right) |m(T,X) - m(\pi(X),X)| \right] \lVert \hat{g}_\ell - g \rVert_\infty \\
        = & \int \frac{1}{h}K\left( \frac{t-\pi(x)}{h} \right) |m(t,x) - m(\pi(x),x)| f(t|x) dtf_X(x) dx \lVert \hat{g}_\ell - g \rVert_\infty \\
        \leq & C h^r \lVert \hat{g}_\ell - g \rVert_\infty,
    \end{align*}
    where the last line follows from Assumption \ref{ass:smoothness-r} and the fact that $K$ is an infinite-order kernel.
    For $\Gamma_{3h}$, it has zero conditional mean because of Lemma \ref{lm:DR-IPW-same-bias}.
\end{proof}

\begin{lemma}\label{lm:Rhat-R}
    Under Assumptions \ref{ass:infinite-kernel}, \ref{ass:m-bounded-variation}, and \ref{ass:rates-nonparametric-estimates}, we have 
    \begin{align*}
        \mathbb{E}_\ell \left[ \sup_{\pi \in \Pi_k} \frac{L}{n} \sum_{i \in I_\ell} 2 \operatorname{Rad}_i \Gamma_{1h}(T_i,X_i;\pi;g,m;\hat{g}_\ell,\hat{m}_\ell) \right] \leq & C_1 \frac{\mathbb{E}\lVert \hat{g}_\ell - g \rVert_\infty \lVert \hat{m}_\ell - m \rVert_\infty}{h} \sqrt{L\frac{\operatorname{VC}(\Pi_k)}{n}}, \\
        \mathbb{E}_\ell \left[ \sup_{\pi \in \Pi_k} \frac{L}{n} \sum_{i \in I_\ell} 2 \operatorname{Rad}_i \Gamma_{2h}(Y_i,T_i,X_i;\pi;g,m;\hat{g}_\ell) \right] \leq & C_2 \frac{\lVert \hat{g}_\ell - g \rVert_\infty}{h} \sqrt{L\frac{\operatorname{VC}(\Pi_k)}{n}}, \\
        \mathbb{E}_\ell \left[ \sup_{\pi \in \Pi_k} \frac{L}{n} \sum_{i \in I_\ell} 2 \operatorname{Rad}_i \Gamma_{3h}(T_i,X_i;\pi;g,m;\hat{m}_\ell) \right] \leq & C_3 \frac{\mathbb{E}\lVert \hat{m}_\ell - m \rVert_\infty}{h} \sqrt{L\frac{\operatorname{VC}(\Pi_k)}{n}}, 
    \end{align*}
    where $C_1,C_2,C_3$ are constants that only depend on the kernel $K$ and the joint distribution of $(Y,T,X)$ and do not depend on $k$ or $n$. 
\end{lemma}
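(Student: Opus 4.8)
The plan is to read each of the three quantities as a conditional Rademacher average, given $S^{(-\ell)}$, over a function class indexed by $\pi\in\Pi_k$ and evaluated on the fold-$\ell$ subsample of effective size $n/L$; under this conditioning $\hat g_\ell$, $\hat m_\ell$ and the norms $\lVert\hat g_\ell-g\rVert_\infty$, $\lVert\hat m_\ell-m\rVert_\infty$ are fixed. For each class I will (i) bound its covering/VC complexity by a universal constant times $\operatorname{VC}(\Pi_k)$; (ii) exhibit a uniform envelope $B$ and a second-moment bound $\sigma^2$; and (iii) feed these into Lemma \ref{lm:kitagawa} to obtain $\mathbb{E}_\ell[\sup_{\pi\in\Pi_k}\cdots]\lesssim B\cdot\frac{\operatorname{VC}(\Pi_k)}{n/L}+\sigma\sqrt{\frac{\operatorname{VC}(\Pi_k)}{n/L}}$. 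Collecting constants then gives the three displayed bounds: the second term already has the required shape once one notes $\sigma$ carries a factor $h^{-1/2}\le h^{-1}$ (since $h<1$), while the envelope term, proportional to $\frac{\lVert\cdot\rVert}{h}\cdot\frac{L\operatorname{VC}(\Pi_k)}{n}$, is at most $\sqrt L$ times the second term because $\operatorname{VC}(\Pi_k)/n\le1$ in the regime where the lemma is applied, $\operatorname{VC}(\Pi_k)\le nh^2\le n$; the universal constant of Lemma \ref{lm:kitagawa}, the factor $\sqrt L$ (a fixed number of folds), and the constants $\bar\kappa,\kappa_2,\underline f,\bar f,M$ are absorbed into $C_1,C_2,C_3$.

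For the complexity bounds I will recycle the arguments behind Lemmas \ref{lm:kernel-VC-dim} and \ref{lm:tilde-R}. Write $K=K_+-K_-$ with $K_\pm$ nondecreasing, $m(\cdot,x)=m_+(\cdot,x)-m_-(\cdot,x)$ (Assumption \ref{ass:m-bounded-variation}), and $\hat m_\ell(\cdot,x)=\hat m_{\ell,+}(\cdot,x)-\hat m_{\ell,-}(\cdot,x)$ (Assumption \ref{ass:rates-nonparametric-estimates}(iii)), each a difference of functions monotone in the first argument. By the subgraph argument of Lemmas \ref{lm:kernel-VC-dim} and \ref{lm:tilde-R} (each such subgraph, after composing $\Upsilon$ with a fixed function, becomes a half-space condition on $\pi(X)$ relative to a fixed function, and Lemma \ref{lm:VC-permanence} is invariant under fixed reparametrizations and under multiplying in the fixed factors $g$, $\hat g_\ell-g$, $Y$), each base class $\{(t,x)\mapsto K_\pm((t-\pi(x))/h)\}$, $\{(t,x)\mapsto m_\pm(\pi(x),x)\}$ and its $\hat m_{\ell,\pm}$ analogue has VC dimension at most $\operatorname{VC}(\Pi_k)$. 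Since $\Gamma_{1h},\Gamma_{2h},\Gamma_{3h}$ are, for each $\pi$, finite sums of products of functions from these base classes times fixed factors, Lemma \ref{lm:covering-number-VC} together with the permanence of covering numbers under finite sums and products (Lemma \ref{lm:covering-number-permanence}) yields, after normalizing by the relevant envelope, a covering number of the form $(\bar c_0/\epsilon)^{c\operatorname{VC}(\Pi_k)}$, which is exactly the input Lemma \ref{lm:kitagawa} needs.

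For the envelopes and second moments I will argue as in Lemma \ref{lm:uniform-variance-bound} and the proof of Lemma \ref{lm:DR-calculation}. For $\Gamma_{1h}$: $|\Gamma_{1h}|\le\frac{\bar\kappa}{h}\lVert\hat m_\ell-m\rVert_\infty\lVert\hat g_\ell-g\rVert_\infty$, and the change of variables $v=(t-\pi(x))/h$ (which turns the $h^{-2}$ inside the square into $h^{-1}$) gives $\sup_\pi\mathbb{E}[\Gamma_{1h}^2]\lesssim\frac{1}{h}\lVert\hat m_\ell-m\rVert_\infty^2\lVert\hat g_\ell-g\rVert_\infty^2$. Similarly $|\Gamma_{2h}|\lesssim\frac{M}{h}\lVert\hat g_\ell-g\rVert_\infty$ and $\sup_\pi\mathbb{E}[\Gamma_{2h}^2]\lesssim\frac{M^2}{h}\lVert\hat g_\ell-g\rVert_\infty^2$ (using $|Y-m(\pi(X),X)|\le2M$), and $|\Gamma_{3h}|\lesssim\frac{1}{h}\lVert\hat m_\ell-m\rVert_\infty$, $\sup_\pi\mathbb{E}[\Gamma_{3h}^2]\lesssim\frac{1}{h}\lVert\hat m_\ell-m\rVert_\infty^2$ (from $(h^{-1}Kg-1)^2\le2h^{-2}K^2g^2+2$ and the same change of variables). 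Plugging these together with $\operatorname{VC}\lesssim\operatorname{VC}(\Pi_k)$ and sample size $n/L$ into Lemma \ref{lm:kitagawa}, and simplifying as in the first paragraph, produces the three stated inequalities.

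The main obstacle is the VC-permanence bookkeeping of the second paragraph. Unlike in Lemma \ref{lm:kernel-VC-dim}, $\pi$ here enters the integrands both through $K((t-\pi(x))/h)$ and through the first argument of $m(\pi(x),x)$ and $\hat m_\ell(\pi(x),x)$, while the covariate argument varies simultaneously, so the kernel VC bound cannot be quoted verbatim; the fix, exactly as in the proof of Lemma \ref{lm:tilde-R}, is to split $m$ and $\hat m_\ell$ into pieces monotone in the first coordinate so that each subgraph becomes a half-space condition on $\pi(X)$ against a fixed function. A secondary point requiring care is that all the relevant second moments are of order $h^{-1}$ rather than $h^{-2}$ — this is what makes the $\sigma$-term of order $\lVert\cdot\rVert\,h^{-1/2}\sqrt{\operatorname{VC}(\Pi_k)/n}\le\lVert\cdot\rVert\,h^{-1}\sqrt{\operatorname{VC}(\Pi_k)/n}$ — and the envelope contribution becomes negligible only after invoking the restriction $\operatorname{VC}(\Pi_k)\le nh^2$.
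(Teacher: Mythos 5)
Your proposal is correct and follows the same skeleton as the paper's proof: condition on the out-of-fold data so that the fold-$\ell$ observations are iid and the nuisance estimators (hence the sup-norms $\lVert\hat g_\ell-g\rVert_\infty$, $\lVert\hat m_\ell-m\rVert_\infty$) are fixed, bound the entropy of the $\pi$-indexed class by a multiple of $\operatorname{VC}(\Pi_k)$ via the monotone-decomposition/subgraph arguments of Lemmas \ref{lm:kernel-VC-dim} and \ref{lm:tilde-R}, and invoke Lemma \ref{lm:kitagawa} on a sample of size $n/L$. The one place you diverge is in which bound of Lemma \ref{lm:kitagawa} you apply. The paper uses the envelope-only bound (\ref{eqn:EDelta-bound1}), $cB\sqrt{V/(n/L)}$, which with $B\propto h^{-1}\lVert\hat g_\ell-g\rVert_\infty\lVert\hat m_\ell-m\rVert_\infty$ delivers the stated $h^{-1}$ inequality in one line. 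You instead use the two-term bound (\ref{eqn:EDelta-bound2}), compute second moments of order $h^{-1}$, and then absorb the envelope contribution $B\cdot LV/n$ into the $\sigma$-term by appealing to $\operatorname{VC}(\Pi_k)\le nh^2\le n$. That restriction belongs to the selection procedure, not to the lemma's hypotheses (Assumptions \ref{ass:infinite-kernel}, \ref{ass:m-bounded-variation}, and \ref{ass:rates-nonparametric-estimates}), so as written your argument establishes the lemma only in the regime where it is later applied rather than as stated; the paper's choice of (\ref{eqn:EDelta-bound1}) avoids this dependence entirely. Your finer second-moment accounting would in principle yield the sharper rate $h^{-1/2}$ for the variance contribution, but since the lemma only claims $h^{-1}$ (the losses here are already lower order in the regret analysis), nothing is gained, and the simpler envelope bound suffices. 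Everything else—the envelope and second-moment computations for $\Gamma_{1h},\Gamma_{2h},\Gamma_{3h}$ and the VC-permanence bookkeeping—matches the paper's reasoning.
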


\begin{proof}[Proof of Lemma \ref{lm:Rhat-R}]
    We only prove the first inequality because the others can be proved analogously. We first bound the conditional expectation:
    \begin{align*}
        \mathbb{E} \left[ \sup_{\pi \in \Pi_k} \frac{L}{n} \sum_{i \in I_\ell} 2 \text{Rad}_i \Gamma_{1h}(T_i,X_i;\pi;g,m;\hat{g}_\ell,\hat{m}_\ell) \Bigg| \hat{g}_\ell,\hat{m}_\ell \right].
    \end{align*}
    Because, given $(\hat{g}_\ell,\hat{m}_\ell)$, the data points in $I_\ell$ remain an iid sample, we can use the empirical process theory in Appendix \ref{sec:empirical-process} to bound this conditional expectation. 
    Like the proof of Lemma \ref{lm:tilde-R}, we can show that the relevant function class satisfies the entropy bound (\ref{eqn:VC-type}) with some index $V$ proportional to $\operatorname{VC}(\Pi_k)$. The uniform bound for the relevant function class is proportional to $\frac{\lVert \hat{g}_\ell - g \rVert_\infty \lVert \hat{m}_\ell - m \rVert_\infty}{h}$ (notice that $\hat{g}_\ell$ and $\hat{m}_\ell$ are fixed for now). Then by (\ref{eqn:EDelta-bound1}) in Lemma \ref{lm:kitagawa}, the Rademacher complexity conditional on $(\hat{g}_\ell,\hat{m}_\ell)$ is bounded as
    \begin{align*}
        \mathbb{E} \left[ \sup_{\pi \in \Pi_k} \frac{L}{n} \sum_{i \in I_\ell} 2 \text{Rad}_i \Gamma_{1h}(T_i,X_i;\pi;g,m;\hat{g}_\ell,\hat{m}_\ell) \Bigg| \hat{g}_\ell,\hat{m}_\ell \right] \leq C_1 \frac{\lVert \hat{g}_\ell - g \rVert_\infty \lVert \hat{m}_\ell - m \rVert_\infty}{h} \sqrt{L\frac{\operatorname{VC}(\Pi_k)}{n}}.
    \end{align*}
\end{proof}

\subsection{Preliminary Results in Empirical Process Theory} \label{sec:empirical-process}

For completeness of the paper, we incorporate a set of results in this section on the VC dimension and empirical process theory that are scattered in the literature. Some minor adjustments are made to suit our needs. When the statements differ from their original form in the literature, we provide the corresponding proofs. Also, in this section, we use the notation $g$ to denote a generic function instead of the inverse propensity score used in the main text.

\begin{lemma} \label{lm:tail-to-Op-rate}
Let $X_n$ be a sequence of random variables and $a_n \uparrow \infty$ be a deterministic sequence.
\begin{enumerate}
    \item If, for all $t>0$, $\mathbb{P}(X_n < t) \leq C \exp(-a_n t^2)$, then $X_n \leq O_p(a_n^{-1/2})$.
    \item If, for all $t>0$, $\mathbb{P}(X_n < t) \leq C \exp(-a_n t)$, then $X_n \leq O_p(a_n^{-1})$.
\end{enumerate}
\end{lemma}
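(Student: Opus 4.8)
The plan is simply to unwind the definition of $O_p$. Recall that $X_n = O_p(b_n)$ means: for every $\varepsilon>0$ there exist $M=M(\varepsilon)<\infty$ and $N$ such that $\mathbb{P}(|X_n|>M b_n)<\varepsilon$ for all $n\ge N$. In every application of this lemma in the proofs of Theorems \ref{thm:fully} and \ref{thm:dr} the variable $X_n$ is nonnegative (it is a supremum of absolute deviations, a Rademacher complexity, or their difference from a mean), and the hypothesis is used with the upper tail event $\{X_n>t\}$; so it suffices to bound $\mathbb{P}(X_n>M b_n)$ for a suitable $M$.

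For part (1), fix $\varepsilon>0$ and pick $M$ large enough that $C\exp(-M^{2})<\varepsilon$; e.g.\ $M=\sqrt{1+\log(C/\varepsilon)}$. Apply the hypothesis with $t = M a_n^{-1/2}$:
\[
\mathbb{P}\bigl(X_n > M a_n^{-1/2}\bigr)\;\le\; C\exp\!\bigl(-a_n\, M^{2} a_n^{-1}\bigr)\;=\;C\exp(-M^{2})\;<\;\varepsilon ,
\]
and this bound is uniform in $n$ because the factor $a_n$ cancels after the substitution. Hence $X_n=O_p(a_n^{-1/2})$. Part (2) is identical with the substitution $t = M a_n^{-1}$: choosing $M>\log(C/\varepsilon)$ gives
\[
\mathbb{P}\bigl(X_n > M a_n^{-1}\bigr)\;\le\; C\exp\!\bigl(-a_n\, M a_n^{-1}\bigr)\;=\;C\exp(-M)\;<\;\varepsilon
\]
uniformly in $n$, so $X_n=O_p(a_n^{-1})$.

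There is no genuine obstacle here; the statement is a one-line substitution. The only point deserving a word of care is the uniformity in $n$, which is automatic precisely because plugging $t\asymp a_n^{-1/2}$ (resp.\ $t\asymp a_n^{-1}$) into the exponent $a_n t^2$ (resp.\ $a_n t$) makes the deterministic sequence $a_n$ disappear, leaving a constant tail bound $C\exp(-M^2)$ (resp.\ $C\exp(-M)$) that can be driven below any prescribed $\varepsilon$ by enlarging $M$. (For the displayed hypothesis to be non-vacuous and to match its use in the text, the tail event should be read as $\{X_n>t\}$ rather than $\{X_n<t\}$; with that reading the above argument is complete.)
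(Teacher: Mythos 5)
Your proof is correct and follows essentially the same route as the paper's: substitute $t = M a_n^{-1/2}$ (resp. $t = M a_n^{-1}$) into the tail bound so that $a_n$ cancels, leaving a constant bound $C\exp(-M^2)$ (resp. $C\exp(-M)$) that can be made arbitrarily small; the paper likewise reads the hypothesis as an upper-tail bound. The only cosmetic difference is that the paper formalizes the one-sided conclusion ``$X_n \leq O_p(\cdot)$'' by passing to the positive part $X_n^+ = \max\{X_n,0\}$ and noting $X_n \leq X_n^+$, whereas you appeal to the nonnegativity of $X_n$ in the lemma's applications.
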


\begin{proof}[Proof of Lemma \ref{lm:tail-to-Op-rate}]
    Denote $X_n^+ = \max\{X_n,0\}$ as the positive part of $X_n$. For any fixed $M > 0$, we have
    \begin{align*}
        \mathbb{P}(a_n^{1/2}X_n^+ > M) \leq C \exp(-a_n (M a^{-1/2})^2) = C \exp(-M^2).
    \end{align*}
    By definition, this shows that $X_n^+ = O_p(a_n^{-1/2})$. The first result follows since $X_n \leq X_n^+$. The second result can be derived analogously.
\end{proof}

\begin{lemma} \label{lm:VC-permanence}
    Let $\mathcal{G}$ be a class of functions with finite VC dimension. Let $g_1,g_2,g_3$ be fixed measurable functions. Define the following function classes:
    \begin{align*}
        g_1 \circ \mathcal{G} \equiv \{g_1 \circ g: g \in \mathcal{G}\}, \\
        g_2 \oplus \mathcal{G} \equiv \{g_2 + g: g \in \mathcal{G}\}, \\
        g_3 \otimes \mathcal{G} \equiv \{g_3 \cdot g: g \in \mathcal{G}\}.
    \end{align*}
    Their VC dimensions are bounded as 
    \begin{align*}
        \operatorname{VC}(g_1 \circ \mathcal{G}) \leq & \operatorname{VC}(\mathcal{G}), \\
        \operatorname{VC}(g_2 \oplus \mathcal{G}) \leq & \operatorname{VC}(\mathcal{G}), \\
        \operatorname{VC}(g_3 \otimes \mathcal{G}) \leq & 2\operatorname{VC}(\mathcal{G}).
    \end{align*}
    If $g_3$ is nonnegative, then $\operatorname{VC}(g_3 \otimes \mathcal{G}) \leq \operatorname{VC}(\mathcal{G})$.
\end{lemma}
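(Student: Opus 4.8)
The plan is to reduce each statement to the behavior of subgraph sets, since $\operatorname{VC}(\mathcal{G})$ is by definition the VC dimension of the collection of subgraphs $\{(x,t): t < g(x)\}$, $g \in \mathcal{G}$. For the composition $g_1 \circ \mathcal{G}$ I would split into cases on the monotonicity of $g_1$. If $g_1$ is nondecreasing, then $t < g_1(g(x))$ is equivalent to $g_1^{-}(t) < g(x)$, where $g_1^{-}$ is a generalized inverse, modulo care at jump points; this identifies the subgraph of $g_1 \circ g$ with a set obtained from the subgraph of $g$ by the fixed reparametrization $(x,t) \mapsto (x, g_1^{-}(t))$, which cannot increase the VC dimension (a fixed map pulls back a shattered set to a shattered set only if the map is injective on the relevant coordinate, but the standard argument is that preimages of a VC class under a fixed map form a VC class of no larger dimension — see Lemma 2.6.18 in \cite{wellner1996}). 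For general $g_1$ of bounded variation, the paper only ever applies this to the decomposition $K = K_1 \circ K_2$ with $K_1$ Lipschitz and $K_2$ nondecreasing, so it suffices to handle $g_1$ Lipschitz and $g_1$ monotone; a Lipschitz $g_1$ applied pointwise is handled by the same subgraph/preimage argument together with the fact that the relevant class of subgraphs is unchanged up to the fixed transformation. I would cite Lemma 2.6.18(viii) of \cite{wellner1996} (composition with a fixed monotone function) for the clean version.

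For the translation $g_2 \oplus \mathcal{G}$, the subgraph of $g_2 + g$ is $\{(x,t): t - g_2(x) < g(x)\}$, which is the image of the subgraph of $g$ under the fixed bijection $(x,t) \mapsto (x, t + g_2(x))$ of the domain. Since this map is a bijection, it carries shattered configurations to shattered configurations and vice versa, so the VC dimension is exactly preserved; in particular $\operatorname{VC}(g_2 \oplus \mathcal{G}) = \operatorname{VC}(\mathcal{G})$, which is stronger than the stated inequality. This is the easiest of the three.

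For the product $g_3 \otimes \mathcal{G}$, the subgraph $\{(x,t): t < g_3(x) g(x)\}$ is not directly a reparametrized subgraph of $\mathcal{G}$ because of the sign of $g_3(x)$. On $\{g_3 > 0\}$ the condition is $t/g_3(x) < g(x)$ (a fixed rescaling of the fiber, VC-dimension-preserving), and on $\{g_3 < 0\}$ it is $t/g_3(x) > g(x)$, i.e.\ a supergraph; on $\{g_3 = 0\}$ it is $t < 0$, a fixed half-space. Thus the subgraph class of $g_3 \otimes \mathcal{G}$ sits inside the class generated by intersecting/unioning a VC class (subgraphs of $\mathcal{G}$ rescaled), its complement (supergraphs, same VC dimension by Lemma 2.6.18(iii) of \cite{wellner1996}), and fixed sets. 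Boolean combinations of two VC classes have VC dimension at most roughly the sum (Sauer–Shelah / Lemma 2.6.17 of \cite{wellner1996}), giving the factor-of-two bound $\operatorname{VC}(g_3 \otimes \mathcal{G}) \le 2\operatorname{VC}(\mathcal{G})$; and when $g_3 \ge 0$ the supergraph piece is absent, so the class is (essentially) a single rescaled copy of the subgraphs of $\mathcal{G}$ together with fixed sets, yielding $\operatorname{VC}(g_3 \otimes \mathcal{G}) \le \operatorname{VC}(\mathcal{G})$.

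The main obstacle I anticipate is purely bookkeeping rather than conceptual: making the generalized-inverse and rescaling identities precise at points where $g_1$, $g_2$, or $g_3$ has jumps or zeros, and tracking the off-by-one conventions in the definition of VC dimension (the paper uses the convention that is one smaller than in \cite{wellner1996, dudley_1999}), so that the constant in the Boolean-combination bound is exactly $2$ and not $2+o(1)$. Since the lemma as used downstream only needs the stated inequalities (not sharp constants beyond the factor $2$), I would invoke the permanence results of \cite{wellner1996} (Lemmas 2.6.17 and 2.6.18) as black boxes and spend the write-up only on exhibiting the fixed transformations above.
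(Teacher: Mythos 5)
Your treatment of the composition and translation cases matches the paper's: both are dispatched by citing the permanence results of \cite{wellner1996} (Lemma 2.6.18(viii) and (v)), and your observation that translation is a fixed bijection of subgraphs, so the VC dimension is preserved exactly, is correct and slightly stronger than needed. (One caveat you share with the paper: 2.6.18(viii) requires $g_1$ monotone, not merely measurable, so the composition claim really rests on the specific decomposition $K=K_1\circ K_2$ used downstream.) Your decomposition of the product case by the sign of $g_3$ is also exactly the paper's decomposition of the subgraph of $g\cdot g_3$ into $C^{+}(g)=\{(x,t):t<g(x)g_3(x),\,g_3(x)>0\}$, $C^{-}(g)$, and the fixed set $C^{0}=\{(x,t):t<0,\,g_3(x)=0\}$.

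The genuine gap is in how you close the product case. You invoke the generic bound for Boolean combinations of two VC classes (Lemma 2.6.17 of \cite{wellner1996}) and hope the constant comes out to exactly $2$; as you yourself flag, that lemma does not give additivity of VC dimensions for unions of arbitrary VC classes — it yields only $\operatorname{VC}(g_3\otimes\mathcal{G})\le C\,\operatorname{VC}(\mathcal{G})$ with an unspecified constant (or a logarithmic factor). The missing observation is that the three pieces are not arbitrary: $C^{+}(g)$, $C^{-}(g)$, and $C^{0}$ are subsets of the pairwise disjoint sets $\{g_3>0\}\times\mathbb{R}$, $\{g_3<0\}\times\mathbb{R}$, and $\{g_3=0\}\times\mathbb{R}$. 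For classes supported on disjoint base sets, the VC dimension of the class of elementwise unions is exactly additive (Problem 2.6.12 in \cite{wellner1996}), since any configuration shattered by the unions splits into a sub-configuration shattered by $\mathcal{C}^{+}$ and one shattered by $\mathcal{C}^{-}$. This gives $\operatorname{VC}(g_3\otimes\mathcal{G})\le\operatorname{VC}(\mathcal{C}^{+})+\operatorname{VC}(\mathcal{C}^{-})+\operatorname{VC}(\{C^{0}\})\le 2\operatorname{VC}(\mathcal{G})+0$, and the nonnegative case follows because $C^{-}(g)$ is then empty. With that substitution for the generic Boolean-combination lemma, your argument coincides with the paper's.
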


\begin{proof}[Proof of Lemma \ref{lm:VC-permanence}]
    The first and second results follow from the proofs of Lemma 2.6.18(viii) and (v) in \cite{wellner1996}, respectively. The third claim can be proved based on the proof of Lemma 2.6.18(vi) in \cite{wellner1996}. For a function $g \in \mathcal{G}$, the subgraph of $g \cdot g_3$ is equal to the union of the following three disjoint sets:
    \begin{align*}
        C^+(g) & \equiv \{(x,t): t < g(x)g_3(x),g_3(x) > 0\}, \\
        C^-(g) & \equiv \{(x,t): t < g(x)g_3(x),g_3(x) < 0\}, \\
        C^0 & \equiv \{(x,t): t<0, g_3(x) = 0\}.
    \end{align*}
    Define $\mathcal{C}^+ \equiv \{C^+(g) : g \in \mathcal{G}\}$ and $\mathcal{C}^- \equiv \{C^-(g) : g \in \mathcal{G}\}$. The class of subgraphs of $\{g\cdot g_3:g \in \mathcal{F}\}$ is equal to $\mathcal{C}^+ \sqcup \mathcal{C}^- \sqcup \{C_0\}$, where $\sqcup$ denotes elementwise union. Notice that $\mathcal{C}^+$ contains subsets of the set $\{(x,t):g_3(x) > 0\}$, while $\mathcal{C}^-$ contains subsets of the set $\{(x,t):g_3(x) < 0\}$. The sets $\{(x,t):g_3(x) \leq 0\}$ and $\{(x,t):g_3(x) > 0\}$ are disjoint. Therefore, by Problem 2.6.12 in \cite{wellner1996}, $\operatorname{VC}(g_3 \otimes \mathcal{G}) = \operatorname{VC}(\mathcal{C}^+) + \operatorname{VC}(\mathcal{C}^-) + \operatorname{VC}(\{C_0\})$. Since the class $\{C_0\}$ only contains one element, its VC dimension is zero (because it cannot shatter any single point). For $\mathcal{C}^+$, suppose that a set of points $(x_i,t_i)_{i \in I}$ is shattered by $\mathcal{C}^+$. Then by defining $t_i' \equiv t_i/g_3(x_i)$, we can see that the subgraphs of $\mathcal{G}$ shatter $(x_i,t_i')_{i \in I}$. This implies that $\operatorname{VC}(\mathcal{C}^+) \leq \operatorname{VC}(\mathcal{G})$. For the class $\mathcal{C}^-$, notice that its element $C^-$ can be written as the complement of a closed subgraph:
    \begin{align*}
        C^- = \{(x,t): t/g_3(x) \leq f(x),g_3(x) < 0\}^c.
    \end{align*}
    By the proof of Lemmas 2.6.17(i) and Problem 2.6.10, we know that taking complements and changing $\leq$ to $<$ in the definition of subgraph do not alter the VC dimension; hence, $\operatorname{VC}(\mathcal{C}^-) \leq \operatorname{VC}(\mathcal{G})$. Then the result follows. In the case that $g_3$ is nonnegative, then $C^-(g)$ is empty, and hence $\operatorname{VC}(g_3 \otimes \mathcal{G}) \leq \operatorname{VC}(\mathcal{G})$.
\end{proof}

For a function class $\mathcal{G}$ of a random vector $S$, let $N(\epsilon,\mathcal{G},L_2(Q)$ be the $\epsilon$-covering number of $(\mathcal{G},L_2(Q))$, that is, the minimal number of balls (with centers in $\mathcal{G}$) of radius $\epsilon$ (under the $L_2(Q)$ metric) needed to cover $\mathcal{G}$. Denote $G$ as the envelope function of $\mathcal{G}$, that is, $|g| \leq G$ for all $g \in \mathcal{G}$.
Define the supremum of the empirical process as
\begin{align*}
    \Delta \equiv \sup_{g \in \mathcal{G}}\left| \frac{1}{n} \sum_{i=1}^n (g(S_i) - \mathbb{E}[g(S_i)]) \right|,
\end{align*}
where $\{S_i: 1 \leq i \leq n\}$ is an iid sample of $S$. Denote $\Delta_\text{Rad} $ as the corresponding Rademacher complexity after symmetrization:
\begin{align*}
    \Delta_\text{Rad} \equiv \sup_{g \in \mathcal{G}}\left| \frac{1}{n} \sum_{i=1}^n 2 \text{Rad}_i g(S_i)\right|,
\end{align*}
where $\{\text{Rad}_i:  1 \leq i \leq n \}$ is an iid sequence of Rademacher variables independent of the sample $\{S_i: 1 \leq i \leq n\}$.
Let $\sigma^2(\mathcal{G})$ and $B(\mathcal{G})$ denote the second-moment bound and uniform bound:
\begin{align*}
    \sigma^2(\mathcal{G}) & \geq \sup_{g \in \mathcal{G}} \mathbb{E}[g(S)^2], \\
        B(\mathcal{G}) & \geq \sup_{g \in \mathcal{G}} \sup_{s}|g(S)|.
\end{align*}
The following two lemmas bound the expectation $\mathbb{E}[\Delta]$ and the tail probability of the deviation $\Delta - \mathbb{E}[\Delta]$, respectively.

\begin{lemma}[Lemmas A.4 and A.5 in \cite{kitagawa2018should}]\label{lm:kitagawa} 
Assume that $B(\mathcal{G})<\infty,\operatorname{VC}(\mathcal{G})<\infty$. Suppose that there exists $V>0$, such that the covering number of $\mathcal{G}$ is bounded by
\begin{align} \label{eqn:VC-type}
    \sup_{Q}N(\epsilon \lVert G \rVert_{L_2(Q)},\mathcal{G},L_2(Q)) \leq \left( \frac{c_0}{\epsilon} \right)^{2V},
\end{align}
for a universal constant $c_0 > 0$. The supremum is taken over all probability measures on the domain of the functions in $\mathcal{G}$. 
Then there exists a universal constant (that only depends on $c_0$) such that the following two bounds hold for all $n \geq 1$.
    \begin{align}
        \mathbb{E}[\Delta_\text{Rad}] \leq & c B(\mathcal{G}) \sqrt{\frac{V}{n}}, \label{eqn:EDelta-bound1} \\
        \mathbb{E}[\Delta_{\text{Rad}}] \leq & 2c^2 B(\mathcal{G}) \frac{V}{n} + c  \sqrt{\frac{\sigma^2(\mathcal{G})V}{n}} \label{eqn:EDelta-bound2}.
    \end{align}
    In particular, due to symmetrization \citep[e.g., Theorem 2.1 in][]{koltchinskii2011oracle}, the above bounds remain valid if we replace $\Delta_{\text{Rad}}$ by $\Delta$.
\end{lemma}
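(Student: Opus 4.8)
The plan is to prove the two moment bounds for $\Delta_{\text{Rad}}$ by standard empirical‑process chaining and then transfer them to $\Delta$ by symmetrization; the precise constants and the form $2c^2 B(\mathcal G)\,V/n + c\sqrt{\sigma^2(\mathcal G)V/n}$ are exactly those recorded as Lemmas A.4--A.5 in \cite{kitagawa2018should}, so the sketch below reconstructs that argument. Throughout, condition on the sample $\{S_i\}_{i=1}^n$ and write $\hat\sigma_n^2 \equiv \sup_{g\in\mathcal G} n^{-1}\sum_{i=1}^n g(S_i)^2$ for the squared empirical $L_2$-radius of $\mathcal G$, and $\|G\|_n^2 \equiv n^{-1}\sum_{i=1}^n G(S_i)^2 \le B(\mathcal G)^2$ for the envelope.

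\textbf{Step 1 (conditional Dudley bound).} Apply Dudley's entropy integral to the Rademacher process $g\mapsto n^{-1}\sum_i 2\,\text{Rad}_i\, g(S_i)$, whose conditional increments are governed by the $L_2(\mathbb{P}_n)$ pseudometric rescaled by $n^{-1/2}$:
$$\E\left[\Delta_{\text{Rad}}\mid \{S_i\}_{i=1}^n\right]\ \le\ \frac{C}{\sqrt n}\int_0^{\,C'\hat\sigma_n}\sqrt{\log N\!\left(u,\mathcal G, L_2(\mathbb{P}_n)\right)}\,du .$$
Inserting the uniform‑entropy hypothesis (\ref{eqn:VC-type}) in the form $\log N(u,\mathcal G,L_2(\mathbb{P}_n))\le 2V\log(c_0\|G\|_n/u)$ and evaluating the integral (the remaining integral $\int_0^1\sqrt{\log(c_0/v)}\,dv$ is an absolute constant depending only on $c_0$) yields a conditional bound of order $c\sqrt{V/n}\,\hat\sigma_n$ for a universal $c$. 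Taking expectations and bounding $\hat\sigma_n\le\|G\|_n\le B(\mathcal G)$ gives $\E[\Delta_{\text{Rad}}]\le cB(\mathcal G)\sqrt{V/n}$, i.e.\ (\ref{eqn:EDelta-bound1}).

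\textbf{Step 2 (self‑bounding for the sharp rate).} Rather than collapsing $\hat\sigma_n$ to $B(\mathcal G)$, keep it: from Step 1 and Jensen, $\E[\Delta_{\text{Rad}}]\le c\sqrt{V/n}\,(\E[\hat\sigma_n^2])^{1/2}$. Control $\E[\hat\sigma_n^2]$ by writing $\hat\sigma_n^2\le \sigma^2(\mathcal G)+\sup_{g\in\mathcal G}\bigl(n^{-1}\sum_i g(S_i)^2-\E[g(S)^2]\bigr)$, symmetrizing the second term, and applying the Ledoux--Talagrand contraction inequality to the $2B(\mathcal G)$-Lipschitz map $t\mapsto t^2$ on $[-B(\mathcal G),B(\mathcal G)]$ (which vanishes at $0$); this yields $\E[\hat\sigma_n^2]\le \sigma^2(\mathcal G)+2B(\mathcal G)\,\E[\Delta_{\text{Rad}}]$. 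Setting $x\equiv\E[\Delta_{\text{Rad}}]$ and $a\equiv c\sqrt{V/n}$ turns the chain of inequalities into $x^2\le a^2\sigma^2(\mathcal G)+2a^2 B(\mathcal G)\,x$, and solving this quadratic gives $x\le 2a^2 B(\mathcal G)+a\sqrt{\sigma^2(\mathcal G)}=2c^2 B(\mathcal G)\tfrac{V}{n}+c\sqrt{\tfrac{\sigma^2(\mathcal G)V}{n}}$, which is (\ref{eqn:EDelta-bound2}). Finally, the standard symmetrization inequality gives $\E[\Delta]\le \E\bigl[\sup_{g\in\mathcal G}|n^{-1}\sum_i 2\,\text{Rad}_i\,g(S_i)|\bigr]=\E[\Delta_{\text{Rad}}]$, so both displays hold verbatim with $\Delta$ in place of $\Delta_{\text{Rad}}$.

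The main obstacle is making Step 1 precise: one must run the chaining so that the integral is truncated at the variance‑adapted radius $\hat\sigma_n$ rather than the envelope radius $\|G\|_n$, while the polynomial‑in‑$1/u$ shape of (\ref{eqn:VC-type}) keeps the resulting constant universal—this is exactly the delicate bookkeeping behind Kitagawa and Tetenov's Lemma A.5, and is what allows $\sigma^2(\mathcal G)$ (not $B(\mathcal G)^2$) to appear in the leading term of (\ref{eqn:EDelta-bound2}). Once that conditional bound is secured, the contraction step and quadratic solve in Step 2 and the symmetrization in Step 3 are entirely routine.
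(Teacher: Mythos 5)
Your proposal takes a genuinely different route from the paper. The paper does not reprove these bounds at all: its proof is essentially a citation to Lemmas A.4 and A.5 of \cite{kitagawa2018should}, plus two small observations --- (i) the VC-dimension hypothesis can be weakened to the covering-number bound (\ref{eqn:VC-type}) because that is the only consequence of the VC property used in Kitagawa--Tetenov's argument, and (ii) the stated form (\ref{eqn:EDelta-bound2}) follows from the last displayed inequality in their proof by the elementary step $\sqrt{a+b}\le\sqrt a+\sqrt b$. You instead reconstruct the underlying argument from first principles (conditional Dudley chaining, the Ledoux--Talagrand contraction to self-bound $\E[\hat\sigma_n^2]\le\sigma^2(\mathcal G)+CB(\mathcal G)\E[\Delta_{\text{Rad}}]$, and a quadratic solve). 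Your Step 2 is sound and does reproduce exactly the quadratic structure behind Kitagawa--Tetenov's final inequality, so the reconstruction is faithful in architecture; your Step 3 (symmetrization) is also correct given that the factor $2$ is built into the definition of $\Delta_{\text{Rad}}$.

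There is, however, a genuine gap in your Step 1, which you flag but do not close. When the Dudley integral is truncated at the variance-adapted radius $\hat\sigma_n$ rather than at the envelope norm $\|G\|_n$, the substitution $u=\|G\|_n v$ turns the integral into $\|G\|_n\sqrt{2V}\int_0^{\hat\sigma_n/\|G\|_n}\sqrt{\log(c_0/v)}\,dv$, and for $s<1$ one has $\int_0^{s}\sqrt{\log(c_0/v)}\,dv\asymp s\sqrt{\log(c_0/s)}$, not $s$ times an absolute constant. Your parenthetical claim that ``the remaining integral $\int_0^1\sqrt{\log(c_0/v)}\,dv$ is an absolute constant'' is only valid for the envelope-truncated integral that yields (\ref{eqn:EDelta-bound1}); for (\ref{eqn:EDelta-bound2}) the computation as written leaves a residual factor $\sqrt{\log(c_0\|G\|_n/\hat\sigma_n)}$ multiplying $\hat\sigma_n$, which cannot be absorbed into a universal constant and would propagate into the final bound. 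Closing this is precisely the nontrivial bookkeeping inside Kitagawa--Tetenov's Lemma A.5, and your proof as it stands does not supply it --- whereas the paper avoids the issue entirely by quoting their final inequality verbatim. If you want a self-contained proof, you either need to reproduce their handling of this step explicitly or accept a logarithmic factor in (\ref{eqn:EDelta-bound2}).
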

\begin{proof}[Proof of Lemma \ref{lm:kitagawa}]
    The two bounds are essentially given by Lemmas A.4 and A.5 in \cite{kitagawa2018should}. The only difference is that \cite{kitagawa2018should} assume that the function class $\mathcal{G}$ has the VC dimension $V$. In our case, we relax this condition to the exponential bound on the covering number. This does not change the proof for the results because \cite{kitagawa2018should} only utilize the VC property to prove this bound on the covering number (which is presented by Lemma \ref{lm:covering-number-VC} below).
    The second bound is slightly different from the statement of Lemma A.5 in \cite{kitagawa2018should}. It follows from the last inequality in the proof of that lemma. In particular, that inequality gives
    \begin{align*}
        \mathbb{E}[\Delta] \leq & c^2 B(\mathcal{G}) \sqrt{\frac{V}{n}} \left( \sqrt{\frac{V}{n}} + \sqrt{\frac{V}{n} + \frac{\sigma^2(\mathcal{G})}{B(\mathcal{G})^2c^2}} \right) \\
        \leq & c^2 B(\mathcal{G}) \sqrt{\frac{V}{n}} \left( 2\sqrt{\frac{V}{n}} + \sqrt{\frac{\sigma^2(\mathcal{G})}{B(\mathcal{G})^2c^2}} \right) = 2c^2 B \frac{V}{n} + c  \sqrt{\frac{\sigma^2(\mathcal{G})V}{n}},
    \end{align*}
    where we have used the fact that $\sqrt{a+b} \leq \sqrt{a} + \sqrt{b}$ for any $a,b>0$.
\end{proof}

\begin{lemma}[Talagrand's inequality] \label{lm:talagrand}
    Assume that $B(\mathcal{G})<\infty$. The following right and left tail bounds hold for all $\alpha>0$.
    \begin{align}
        \mathbb{P}\left( \Delta - \mathbb{E}[\Delta] > \alpha \right) \leq & 2 \exp \left( -\frac{n\alpha^2}{8e\sigma^2(\mathcal{G} )+16eB(\mathcal{G})\mathbb{E}[\Delta] + 4B(\mathcal{G})\alpha} \right), \label{eqn:talagrand-right}\\
        \mathbb{P}\left( -(\Delta - \mathbb{E}[\Delta]) > \alpha \right) \leq & 2 \exp \left( -\frac{n\alpha^2}{2\sigma^2(\mathcal{G} )+4B(\mathcal{G})\mathbb{E}[\Delta] + 2B(\mathcal{G})\alpha} \right)\label{eqn:talagrand-left}.
    \end{align}
\end{lemma}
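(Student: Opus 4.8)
The plan is to obtain both tail bounds from a standard version of Talagrand's concentration inequality for the supremum of an empirical process, after symmetrizing the index class and rescaling. First, to remove the absolute value in the definition of $\Delta$, I would pass to the symmetrized class $\tilde{\mathcal{G}} \equiv \mathcal{G} \cup (-\mathcal{G})$. Each $g \in \tilde{\mathcal{G}}$ still satisfies $\sup_s|g(s)| \le B(\mathcal{G})$ and $\mathbb{E}[g(S)^2] \le \sigma^2(\mathcal{G})$, and since $\sup_{g\in\tilde{\mathcal{G}}}\sum_{i=1}^n(g(S_i)-\mathbb{E}[g(S_i)]) = \sup_{g\in\mathcal{G}}|\sum_{i=1}^n(g(S_i)-\mathbb{E}[g(S_i)])|$,
\[
    Z \;\equiv\; \sup_{g \in \tilde{\mathcal{G}}} \sum_{i=1}^n \bigl(g(S_i) - \mathbb{E}[g(S_i)]\bigr) \;=\; n\Delta, \qquad \mathbb{E}[Z] = n\,\mathbb{E}[\Delta].
\]
Measurability of $Z$ is not an issue in the applications, where the relevant classes are separable (indeed VC-type).

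Next, I would apply Talagrand's inequality \citep{talagrand1996new} to the process $Z$ in the form with explicit constants: for every $u>0$,
\[
    \mathbb{P}\bigl(Z - \mathbb{E}[Z] > u\bigr) \;\le\; 2\exp\!\left(-\frac{u^2}{8e\bigl(n\sigma^2(\mathcal{G}) + 2B(\mathcal{G})\mathbb{E}[Z]\bigr) + 4B(\mathcal{G})u}\right),
\]
together with the companion lower-deviation bound, which carries the smaller constants $2$, $4$, $2$ in place of $8e$, $16e$, $4$. Substituting $u = n\alpha$, $Z = n\Delta$, and $\mathbb{E}[Z] = n\,\mathbb{E}[\Delta]$ makes the common factor $n$ cancel between the numerator $u^2 = n^2\alpha^2$ and the denominator, yielding exactly the right-tail bound (\ref{eqn:talagrand-right}) and the left-tail bound (\ref{eqn:talagrand-left}); in particular the coefficient $16e$ on $B(\mathcal{G})\mathbb{E}[\Delta]$ comes from $8e\cdot 2B(\mathcal{G})\mathbb{E}[Z] = 16e\,n\,B(\mathcal{G})\mathbb{E}[\Delta]$.

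No genuine mathematical obstacle arises here: the argument is a citation of a well-known inequality followed by the bookkeeping needed to match the displayed constants. The only alternative, if a self-contained derivation were required, would be to reproduce Talagrand's argument via the entropy method and a modified log-Sobolev (Bousquet-type) inequality --- substantially longer, and adding nothing to the paper's contribution.
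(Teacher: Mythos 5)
Your proposal is correct and takes essentially the same approach as the paper: both proofs simply invoke a Bousquet/Klein--Rio-type version of Talagrand's concentration inequality with explicit constants (the paper cites Theorem 3.27 of Wainwright for the right tail and Theorem 1.2 of Klein and Rio, after rescaling, for the left tail) and then do the bookkeeping with $u=n\alpha$ to match the displayed constants. Your explicit treatment of the absolute value in $\Delta$ via the symmetrized class $\mathcal{G}\cup(-\mathcal{G})$ is a small detail the paper leaves implicit, but it does not change the argument.
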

\begin{proof}
    Theorem 3.27 in Chapter 3 of \cite{wainwright_2019} gives the right tail bound and the bound in (3.84) following that theorem. The left tail bound is given by Theorem 1.2 in \cite{klein2005concentration}. Notice that the bounds given by \cite{klein2005concentration} are for function classes uniformly bounded by 1. After a simple rescaling, the bound can be changed to our desired bound for any uniformly bounded class.
\end{proof}

\begin{lemma}[2.6.7 Theorem in \cite{wellner1996}] \label{lm:covering-number-VC}
     For a function class $\mathcal{G}$ with finite VC dimension, we have
    \begin{align*}
        N\left(\epsilon\|G\|_{L_2(Q)}, \mathcal{G}, L_2(Q)\right) \leq \left(c_0/\epsilon\right)^{2\operatorname{VC}(\mathcal{G})}, \forall \epsilon \in (0,1),
    \end{align*}
    for a universal constant $c_0$ that can be computed explicitly.
\end{lemma}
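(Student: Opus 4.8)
The result is the classical uniform metric-entropy bound for VC-subgraph classes (essentially Theorem~2.6.7 in \cite{wellner1996}), and the plan is to reconstruct its proof in two stages: first reduce the $L_2(Q)$ covering problem to an $L_1$ covering problem by an envelope-weighted change of measure, and then settle the $L_1$ bound by the Sauer--Shelah lemma together with a probabilistic extraction argument. The change of measure is where the exponent $2\operatorname{VC}(\mathcal{G})$, rather than $\operatorname{VC}(\mathcal{G})$, arises.

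For the first stage, fix a probability measure $Q$ with $0<\lVert G\rVert_{L_2(Q)}<\infty$ (otherwise the claim is vacuous) and set $dQ'=G\,dQ/\lVert G\rVert_{L_1(Q)}$, which is again a probability measure. Since $|f|,|g|\le G$ gives $(f-g)^2\le 2G\,|f-g|$ pointwise, one obtains $\lVert f-g\rVert_{L_2(Q)}^{2}\le 2\lVert G\rVert_{L_1(Q)}\,\lVert f-g\rVert_{L_1(Q')}$, and because $\lVert G\rVert_{L_2(Q)}^{2}=\lVert G\rVert_{L_1(Q)}\,\lVert G\rVert_{L_1(Q')}$, any $(\epsilon^{2}/2)\lVert G\rVert_{L_1(Q')}$-net of $\mathcal{G}$ in $L_1(Q')$ is an $\epsilon\lVert G\rVert_{L_2(Q)}$-net in $L_2(Q)$. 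Taking the supremum over $Q$, it therefore suffices to establish the uniform $L_1$ bound $\sup_{Q}N(\delta\lVert G\rVert_{L_1(Q)},\mathcal{G},L_1(Q))\le(c_1/\delta)^{\operatorname{VC}(\mathcal{G})}$ and then substitute $\delta=\epsilon^{2}/2$, which turns the exponent into $2\operatorname{VC}(\mathcal{G})$ and $\delta^{-1}$ into $\epsilon^{-2}$, with all numerical factors absorbed into a single universal $c_0$.

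The second stage is the standard Pollard--Haussler argument. The subgraphs $\mathcal{C}\equiv\{\{(s,t):t<g(s)\}:g\in\mathcal{G}\}$ form a VC class of sets whose index equals $V\equiv\operatorname{VC}(\mathcal{G})$ by definition, and $L_1(Q)$-covering of $\mathcal{G}$ coincides with covering $\mathcal{C}$ under the symmetric-difference pseudometric of the tilted measure $dQ(s)\,dt$ restricted to $\{|t|\le G(s)\}$. One takes a maximal $\delta$-separated family $C_1,\dots,C_m$ in $\mathcal{C}$, draws $N$ points i.i.d.\ from the normalized tilted measure with $N\asymp\delta^{-1}\log m$, and notes that with positive probability the sample distinguishes every pair $C_i,C_j$ (some sample point falls in $C_i\triangle C_j$); but the number of distinct traces of $\mathcal{C}$ on $N$ points is at most $(eN/V)^{V}$ by Sauer--Shelah, so $m\le(eN/V)^{V}$, and solving this inequality yields $m\le(c_1/\delta)^{V}$ (Haussler's sharper packing bound removes the residual logarithmic factor). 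Converting the resulting packing bound to a covering bound costs only a constant.

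The main obstacle is the first stage: obtaining an $\epsilon$-dependence that is uniform over all $Q$ while the ambient metric is $L_2(Q)$. The naive route through $\lVert f-g\rVert_{L_2(Q)}^{2}\le\lVert 2G\rVert_{L_\infty}\lVert f-g\rVert_{L_1(Q)}$ fails, since $\lVert G\rVert_{L_\infty}$ need not be finite and the bound would then depend on $Q$; the envelope-weighted probability measure $Q'$ is precisely the device that sidesteps this, and the remaining delicacy is the bookkeeping (the $2^{V}$ from $\delta=\epsilon^{2}/2$, the Sauer--Shelah constants, the packing-to-covering passage) so that everything collapses into one universal $c_0$. The combinatorial core --- Sauer--Shelah together with the extraction argument --- is entirely standard.
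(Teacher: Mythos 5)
Your proposal is correct, but it takes a genuinely different route from the paper. The paper's proof of this lemma is a two-line absorption argument: it quotes Theorem 2.6.7 of \cite{wellner1996} verbatim, which gives the bound $c_1(\operatorname{VC}(\mathcal{G})+1)(16e)^{\operatorname{VC}(\mathcal{G})+1}(1/\epsilon)^{2\operatorname{VC}(\mathcal{G})}$, and then observes that $\bigl((\operatorname{VC}(\mathcal{G})+1)(16e)^{\operatorname{VC}(\mathcal{G})+1}\bigr)^{1/(2\operatorname{VC}(\mathcal{G}))}$ is bounded uniformly over all values of the VC dimension, so the dimension-dependent prefactor can be folded into a single universal $c_0$ --- that uniformity is the only nontrivial content of the paper's argument. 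You instead reconstruct the cited theorem from first principles: the envelope-tilted change of measure $dQ'=G\,dQ/\lVert G\rVert_{L_1(Q)}$ reducing $L_2(Q)$ to $L_1(Q')$ (correctly producing the exponent $2\operatorname{VC}(\mathcal{G})$ via $\delta=\epsilon^2/2$), followed by the Pollard--Haussler extraction plus Sauer--Shelah. Your computation in the first stage checks out, and the second stage is the standard proof of the underlying theorem. What your route buys is self-containedness and an explanation of \emph{why} the exponent doubles; what it costs is that the naive extraction argument leaves a residual $\log(1/\delta)$ factor that cannot be absorbed into the form $(c_0/\epsilon)^{2\operatorname{VC}(\mathcal{G})}$ with a universal constant (it would perturb the exponent), so your parenthetical appeal to Haussler's sharper packing bound is load-bearing rather than cosmetic --- you do flag this, so there is no gap, but it means your "elementary" proof still rests on a nontrivial external result, just a different one from the paper's.
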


\begin{proof}[Proof of Lemma \ref{lm:covering-number-VC}]
    By Theorem 2.6.7 in \cite{wellner1996}, the covering number is bounded as 
    \begin{align*}
        N\left(\epsilon\|G\|_{Q, 2}, \mathcal{G}, L_2(Q)\right) \leq & c_1 (\operatorname{VC}(\mathcal{G})+1)(16 e)^{\operatorname{VC}(\mathcal{G})+1}\left(1/\epsilon\right)^{2\operatorname{VC}(\mathcal{G})}, \forall \epsilon \in (0,1),
    \end{align*}
    for some universal constant $c_1$. The desired result follows by observing that the quantity 
    \begin{align*}
        \left( (\operatorname{VC}(\mathcal{G})+1)(16 e)^{\operatorname{VC}(\mathcal{G})+1}\right)^{1/(2\operatorname{VC}(\mathcal{G}))}
    \end{align*}
    is bounded for any choice $\operatorname{VC}(\mathcal{G})$.
\end{proof}

\begin{lemma}[Theorem 3 in \cite{ANDREWS1994empirical}] \label{lm:covering-number-permanence}
    Let $\mathcal{G}_1$ and $\mathcal{G}_2$ with envelopes $G_1$ and $G_2$, respectively. Define 
    \begin{align*}
        \mathcal{G}_1 \oplus \mathcal{G}_2 & \equiv \{g_1 + g_2: g_1 \in \mathcal{G}_1, g_2 \in \mathcal{G}_2\}, \\
        \mathcal{G}_1 \otimes \mathcal{G}_2 & \equiv \{g_1 \cdot g_2: g_1 \in \mathcal{G}_1, g_2 \in \mathcal{G}_2\}. 
    \end{align*}
    The classes $\mathcal{G}_1 \oplus \mathcal{G}_2$ and $\mathcal{G}_1 \otimes \mathcal{G}_2$ admit envelope $G_1 + G_2$ and $G_1 \cdot G_2$, respectively. Their covering numbers are bounded as
    \begin{align*}
        N(\epsilon \| G_1 + G_2 \|,\mathcal{G}_1 \oplus \mathcal{G}_2, L_2(Q)) \leq & N(\epsilon \| G_1 \|/2,\mathcal{G}_1 , L_2(Q)) N(\epsilon \| G_2 \|/2,\mathcal{G}_2 , L_2(Q)), \\
        \sup_{Q} N(\epsilon \| G_1 \cdot G_2 \|,\mathcal{G}_1 \otimes \mathcal{G}_2, L_2(Q)) \leq & \left( \sup_{Q} N(\epsilon \| G_1 \|/2,\mathcal{G}_1 , L_2(Q)) \right) \left(\sup_{Q} N(\epsilon \| G_2 \|/2,\mathcal{G}_2 , L_2(Q)) \right). 
    \end{align*}
    They are given by (A.5) and (A.7) in \cite{ANDREWS1994empirical}.
\end{lemma}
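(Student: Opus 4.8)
The plan is to derive both bounds by the classical device of transporting $\epsilon$-nets of the component classes through appropriately reweighted probability measures. The envelope claims are immediate, since pointwise $|g_1+g_2|\le|g_1|+|g_2|\le G_1+G_2$ and $|g_1 g_2|\le|g_1|\,|g_2|\le G_1 G_2$. Fix a probability measure $Q$ on the common domain. If $\|G_1\|_{L_2(Q)}=0$ or $\|G_2\|_{L_2(Q)}=0$, then every member of the corresponding class vanishes $Q$-a.e., so $\mathcal G_1\oplus\mathcal G_2$ (respectively $\mathcal G_1\otimes\mathcal G_2$) is covered by a single ball and the inequalities hold trivially; hence assume both envelopes have strictly positive $L_2(Q)$-norm. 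With the paper's convention, covers have their centers in the class, so the nets constructed below automatically consist of elements of the target class.

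For the sum, I would fix minimal $\tfrac{\epsilon}{2}\|G_1\|_{L_2(Q)}$- and $\tfrac{\epsilon}{2}\|G_2\|_{L_2(Q)}$-nets $\{g_1^{(j)}\}$ and $\{g_2^{(k)}\}$ of $(\mathcal G_1,L_2(Q))$ and $(\mathcal G_2,L_2(Q))$. Given $g_1+g_2$, choose the nearest net elements; the triangle inequality in $L_2(Q)$ gives
\[ \bigl\|(g_1+g_2)-(g_1^{(j)}+g_2^{(k)})\bigr\|_{L_2(Q)}\le\tfrac{\epsilon}{2}\bigl(\|G_1\|_{L_2(Q)}+\|G_2\|_{L_2(Q)}\bigr). \]
Since $G_1,G_2\ge0$, one has $(G_1+G_2)^2\ge G_i^2$ pointwise, so $\|G_1+G_2\|_{L_2(Q)}\ge\max\{\|G_1\|_{L_2(Q)},\|G_2\|_{L_2(Q)}\}\ge\tfrac12\bigl(\|G_1\|_{L_2(Q)}+\|G_2\|_{L_2(Q)}\bigr)$; hence the right-hand side above is at most $\epsilon\|G_1+G_2\|_{L_2(Q)}$. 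The $N(\tfrac{\epsilon}{2}\|G_1\|_{L_2(Q)},\mathcal G_1,L_2(Q))\cdot N(\tfrac{\epsilon}{2}\|G_2\|_{L_2(Q)},\mathcal G_2,L_2(Q))$ functions $g_1^{(j)}+g_2^{(k)}$ then form an $\epsilon\|G_1+G_2\|_{L_2(Q)}$-net of $\mathcal G_1\oplus\mathcal G_2$, which is the first claim; no supremum over $Q$ is needed because the same $Q$ is used throughout.

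For the product, the key identity is $g_1 g_2-g_1'g_2'=(g_1-g_1')g_2+g_1'(g_2-g_2')$, which with $|g_2|\le G_2$ and $|g_1'|\le G_1$ yields
\[ \|g_1 g_2-g_1'g_2'\|_{L_2(Q)}\le\|(g_1-g_1')G_2\|_{L_2(Q)}+\|G_1(g_2-g_2')\|_{L_2(Q)}. \]
Here I would reweight: let $Q_1$ be the probability measure with $dQ_1=G_2^2\,dQ/\|G_2\|_{L_2(Q)}^2$, under which $\|(g_1-g_1')G_2\|_{L_2(Q)}=\|G_2\|_{L_2(Q)}\,\|g_1-g_1'\|_{L_2(Q_1)}$ and $\|G_1\|_{L_2(Q_1)}=\|G_1 G_2\|_{L_2(Q)}/\|G_2\|_{L_2(Q)}$. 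Taking $g_1'$ from a minimal $\tfrac{\epsilon}{2}\|G_1\|_{L_2(Q_1)}$-net of $(\mathcal G_1,L_2(Q_1))$ gives $\|(g_1-g_1')G_2\|_{L_2(Q)}\le\tfrac{\epsilon}{2}\|G_1 G_2\|_{L_2(Q)}$, while the size of that net is bounded by $\sup_{R}N(\tfrac{\epsilon}{2}\|G_1\|_{L_2(R)},\mathcal G_1,L_2(R))$ because $Q_1$ is one admissible probability measure. Symmetrically, with $dQ_2=G_1^2\,dQ/\|G_1\|_{L_2(Q)}^2$ and $g_2'$ from a minimal $\tfrac{\epsilon}{2}\|G_2\|_{L_2(Q_2)}$-net of $(\mathcal G_2,L_2(Q_2))$, one gets $\|G_1(g_2-g_2')\|_{L_2(Q)}\le\tfrac{\epsilon}{2}\|G_1 G_2\|_{L_2(Q)}$. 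Adding the two bounds, $\|g_1 g_2-g_1'g_2'\|_{L_2(Q)}\le\epsilon\|G_1 G_2\|_{L_2(Q)}$, so the products $g_1'g_2'$ over the two nets cover $\mathcal G_1\otimes\mathcal G_2$ at radius $\epsilon\|G_1 G_2\|_{L_2(Q)}$ with cardinality $\sup_{R}N(\tfrac{\epsilon}{2}\|G_1\|_{L_2(R)},\mathcal G_1,L_2(R))\cdot\sup_{R}N(\tfrac{\epsilon}{2}\|G_2\|_{L_2(R)},\mathcal G_2,L_2(R))$; taking the supremum over $Q$ on the left gives the second claim.

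I expect the \emph{main obstacle} to be bookkeeping rather than conceptual: one must verify that $Q_1,Q_2$ are genuine probability measures and that the net radii $\tfrac{\epsilon}{2}\|G_i\|_{L_2(Q_i)}$, after being multiplied back by the envelope factor $\|G_{3-i}\|_{L_2(Q)}$, reproduce exactly the target radius $\tfrac{\epsilon}{2}\|G_1 G_2\|_{L_2(Q)}$ with no slack lost. This reweighting is precisely what forces the supremum over $Q$ on the right-hand side of the product bound (and not the sum bound): the cover of $\mathcal G_1$ that controls $\|(g_1-g_1')G_2\|_{L_2(Q)}$ naturally lives under $Q_1$, not $Q$, so the product inequality is not valid pointwise in $Q$. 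Everything else is routine manipulation with the triangle inequality. This reproduces Theorem~3 of \cite{ANDREWS1994empirical} (equations (A.5) and (A.7) there); alternatively one may simply quote the corresponding permanence lemmas in \cite{wellner1996}.
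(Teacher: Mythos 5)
Your proof is correct. Note, however, that the paper itself supplies no argument for this lemma: it is stated purely as an imported result, with the two inequalities attributed verbatim to equations (A.5) and (A.7) of Andrews (1994). What you have written is a correct, self-contained reconstruction of the standard permanence argument behind those equations: the sum bound via the triangle inequality together with the elementary observation that $\|G_1+G_2\|_{L_2(Q)}\ge\tfrac12(\|G_1\|_{L_2(Q)}+\|G_2\|_{L_2(Q)})$, and the product bound via the decomposition $g_1g_2-g_1'g_2'=(g_1-g_1')g_2+g_1'(g_2-g_2')$ and the reweighted measures $dQ_i\propto G_{3-i}^2\,dQ$. Your remark correctly isolates the one genuinely structural point — the reweighting is what forces the supremum over $Q$ to appear on both sides of the product inequality but not the sum inequality — which is exactly the feature that distinguishes (A.7) from (A.5). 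The only loose end is the degenerate case in which $\|G_1\|_{L_2(Q)}$ and $\|G_2\|_{L_2(Q)}$ are both positive but $\|G_1G_2\|_{L_2(Q)}=0$; there all products vanish $Q$-a.e., so a single ball suffices in the $L_2(Q)$ pseudometric and the inequality holds trivially, but it is worth a sentence since your reweighted net radius $\tfrac{\epsilon}{2}\|G_1\|_{L_2(Q_1)}$ is then zero. Everything else checks out.
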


\end{document}